\theoremstyle{remark}
\newtheorem{case}{Case}
\newcommand{\docset}[0]{\fdset{D}}
\newcommand{\docsets}[0]{\docset^+}
\newcommand{\docsetsAll}[0]{\docset^*}
\newcommand{\numdocs}[0]{n}
\newcommand{\topk}[0]{\tilde{\numdocs}}
\newcommand{\topkranking}[0]{\tilde{\ranking}}
\newcommand{\qrels}[0]{y}
\newcommand{\relset}[0]{\fdset{R}}
\newcommand{\prelset}[0]{\relset'}
\newcommand{\numrel}[0]{m}
\newcommand{\numusers}[0]{\numrel^+}
\newcommand{\ranking}[0]{\pi}
\newcommand{\wcranking}[0]{\check{\ranking}}
\newcommand{\dranking}[0]{\ranking^*}
\newcommand{\rankings}[0]{S_{\numdocs}}
\newcommand{\metric}[0]{\mu}
\newcommand{\metricUpper}[0]{\textrm{U}_\mu}
\newcommand{\metricLower}[0]{\textrm{L}_\mu}
\newcommand{\exposure}[0]{e}
\newcommand{\normalization}[0]{z}
\newcommand{\pmetric}[0]{\eta}
\newcommand{\leximin}[0]{\text{leximin}}
\newcommand{\lexirecall}[0]{\text{LR}}
\newcommand{\lexirecallpref}[0]{\succ_{\lexirecall}}
\newcommand{\lexirecalleq}[0]{=_{\lexirecall}}
\newcommand{\numcollapsed}[0]{C}
\newcommand{\prob}[0]{\text{Pr}}
\newcommand{\probthm}[0]{\text{\emph{Pr}}}
\newcommand{\relat}[0]{\text{Rel}}
\newcommand{\RPall}[0]{\mathcal{P}_{\numrel}^{\numdocs}}
\newcommand{\leximinpref}[0]{\succ_{\text{leximin}}}
\newcommand{\leximineq}[0]{=_{\text{leximin}}}
\newcommand{\leximinprefthm}[0]{\succ_{\text{\emph{leximin}}}}
\newcommand{\leximineqthm}[0]{=_{\text{\emph{leximin}}}}
\newcommand{\lexirecalleqthm}[0]{=_{\text{\emph{LR}}}}
\newcommand{\tsepref}[0]{\succ_{\text{WC}}}
\newcommand{\rankPositions}[0]{\Pintegers}
\newcommand{\NNintegers}[0]{\mathbb{Z}^*}
\newcommand{\Pintegers}[0]{\mathbb{Z}^+}
\newcommand{\NNreals}[0]{\mathbb{R}^*}
\newcommand{\Preals}[0]{\mathbb{R}^+}
\newcommand{\rr}[0]{\mathrm{RR}}
\newcommand{\ndcg}[0]{\mathrm{NDCG}}
\newcommand{\ndcgTen}[0]{\ndcg_{10}}
\newcommand{\rbp}[0]{\mathrm{RBP}}
\newcommand{\rbpGamma}[0]{\gamma}
\newcommand{\expectation}[2]{\mathbb{E}_{#1}\left[#2\right]}
\newcommand{\ap}[0]{\text{AP}}
\newcommand{\esl}[0]{\text{SL}}
\newcommand{\recallerror}[0]{\text{RE}}
\newcommand{\optimalRankings}[0]{\rankings^*}
\newcommand{\ndocs}[0]{n}
\newcommand{\rlx}[0]{\pi}
\newcommand{\rly}[0]{\pi'}
\newcommand{\rlset}[0]{S_{\ndocs}}
\newcommand{\rldata}[0]{\tilde{S}_{\ndocs}}
\newcommand{\RP}[0]{p}
\newcommand{\RPx}[0]{\RP}
\newcommand{\RPy}[0]{\RP'}
\newcommand{\invRP}[0]{\overline{\RP}}
\newcommand{\GP}[0]{q}
\newcommand{\GPx}[0]{\GP}
\newcommand{\GPy}[0]{\GP'}
\newcommand{\user}[0]{u}
\newcommand{\wcuser}[0]{\tilde{\user}}
\newcommand{\users}[0]{\mathcal{U}}
\newcommand{\provider}[0]{v}
\newcommand{\providers}[0]{\mathcal{V}}
\newcommand{\UP}[0]{\varrho}
\newcommand{\UPx}[0]{\UP}
\newcommand{\UPy}[0]{\UP'}
\newcommand{\PP}[0]{\phi}
\newcommand{\PPx}[0]{\PP}
\newcommand{\PPy}[0]{\PP'}
\newcommand{\ident}[0]{\text{I}}
\newcommand{\recall}[0]{\text{R}}
\newcommand{\recallK}[0]{\recall_{1000}}
\newcommand{\rprecision}[0]{\text{RP}}
\DeclareMathOperator{\tse}{TSE}
\DeclareMathOperator{\metricworst}{WC}
\DeclareMathOperator{\sort}{sort}
\newcommand{\isep}{\mathrel{{.}\,{.}}\nobreak}
\newcommand{\relsubsets}[0]{\mathcal{W}}
\newcommand{\relsubsetsGT}[1]{\relsubsets_{>#1}}
\newcommand{\relsubsetsGEQ}[1]{\relsubsets_{\ge#1}}
\newcommand{\relsubset}[0]{w}
\newcommand{\relsubsetsNewMin}[0]{\relsubsetsGEQ{k}-\relsubsetsGT{k}}
\newcommand\rankeq{\mathrel{\stackrel{\makebox[0pt]{\mbox{\normalfont\tiny rank}}}{=}}}
\begin{document}
\title{Recall, Robustness, and Lexicographic Evaluation}

\author{Fernando Diaz}
\orcid{0000-0003-2345-1288}
\authornote{work done at Google}
\affiliation{
\institution{Carnegie Mellon University}
\city{Pittsbugh}
\state{PA}
\country{USA}
}
\email{diazf@acm.org}

\author{Michael D. Ekstrand}
\orcid{0000-0003-2467-0108}
\affiliation{
\institution{Drexel University}
\city{Philadelphia}
\state{PA}
\country{United States}
}
\email{mdekstrand@drexel.edu}

\author{Bhaskar Mitra}
\orcid{0000-0002-5270-5550}
\affiliation{
\institution{Microsoft}
\city{Montr\'eal}
\state{QC}
\country{Canada}
}
\email{bmitra@microsoft.com}

\begin{abstract}
Although originally developed to evaluate \textit{sets} of items, recall is often used to evaluate \textit{rankings} of items, including those produced by recommender, retrieval, and other machine learning systems.  The application of recall without a formal evaluative motivation has led to criticism of recall as a vague or inappropriate measure.  In light of this debate, we reflect on the measurement of recall in rankings from a formal perspective.  Our analysis is composed of three  tenets: recall, robustness, and lexicographic evaluation.  First, we formally define `recall-orientation' as the sensitivity of a metric to a user interested in finding every relevant item.  Second, we analyze recall-orientation from the perspective of robustness with respect to possible content consumers and providers, connecting recall to recent conversations  about fair ranking.  Finally, we extend this conceptual and theoretical treatment of recall by developing a practical preference-based evaluation method based on lexicographic comparison.  Through extensive empirical analysis across multiple recommendation and retrieval tasks, we establish that our new evaluation method, lexirecall, has convergent validity (i.e., it is correlated with existing recall metrics) and exhibits substantially higher sensitivity in terms of discriminative power and stability in the presence of missing labels.  Our conceptual, theoretical, and empirical analysis substantially deepens our understanding of recall and motivates its adoption through connections to robustness and fairness.
\end{abstract}
\maketitle
\section{Introduction}
\label{sec:introduction}

Researchers use `recall' to evaluate rankings across a variety of retrieval, recommendation \cite{zhao:offline-recsys-eval}, and machine learning tasks \cite{sajjadi:geneval,rudinger-etal-2015-script,flach:pr-curves}. `Recall at $k$ items' ($\recall_{k}$) and R-Precision ($\rprecision$) are popular  metrics used for measuring recall in rankings.
Since the beginning of the ACM Conference on Recommender Systems, on average one third of full papers  measure recall in experiments (Figure \ref{fig:recsys-recall}).
While there is a colloquial interpretation of recall as measuring coverage (as it might be rightfully interpreted in set retrieval), the research community is far from a principled understanding of recall metrics for rankings.  Nevertheless, authors continue to informally refer to evaluation metrics as more or less `recall-oriented' or `precision-oriented' without a formal definition of what this means or quantifying how existing metrics relate to these constructs \cite{kazai:xcg,montazeralghaem:rl-rf,dai:ltr-resources,mohammad:dynamic-shard-cutoff,li:reqrec,diaz:crm}.

\begin{figure}
\centering
\includegraphics[width=\textwidth]{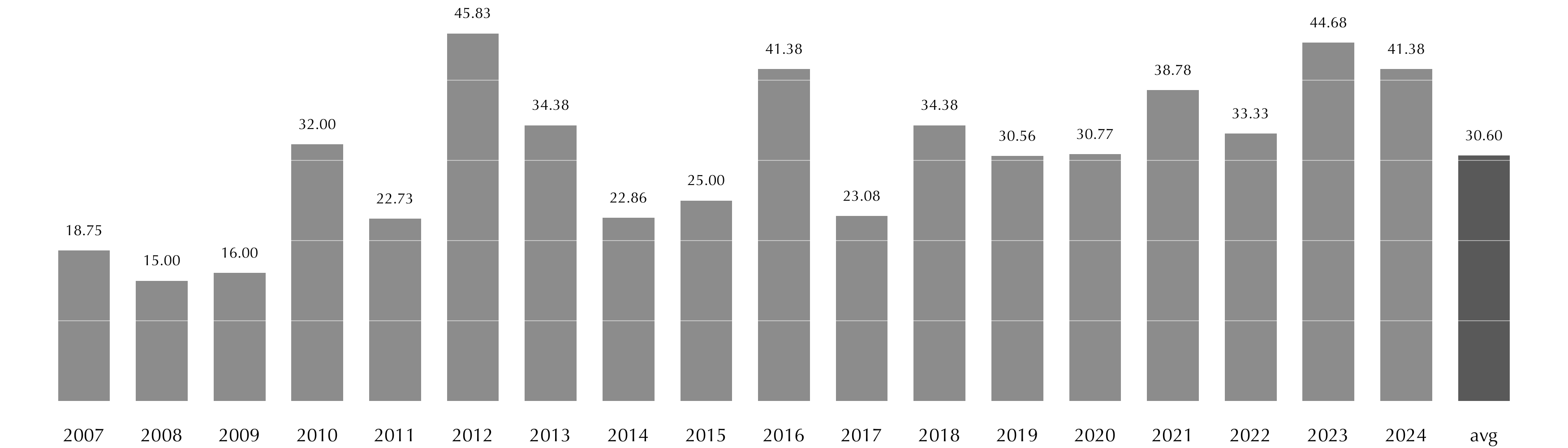}
\caption{Percentage of full papers published at the ACM Conference on Recommender Systems that measure recall in experiments.  Percentages are macro-averaged across years to control for growth in the number of full papers in the conference proceedings.  Details can be found in Appendix \ref{app:recsys-recall}.}\label{fig:recsys-recall}
\end{figure}
Given the prevalence of recall as a metric in recommender system research, understanding recall conceptually, theoretically, and empirically is fundamental to sound evaluation.  Indeed, the lack of a principled understanding of or motivation for recall has caused some to question whether recall is useful as a construct at all.  \citet{herlocker:recsys-eval} suggests that the sparse nature of recommendation data may make recall inappropriate.    \citet{cooper:esl} argues that recall-orientation is inappropriate because user search satisfaction depends on the number of items the user is looking for, which may be fewer than \textit{all} of the relevant items.   \citet{zobel:against-recall} refute several informal justifications for recall: persistence (the depth a user is willing to browse), cardinality (the number of relevant items found), coverage (the number of user intents covered), density (the rank-locality of relevant items), and totality (the retrieval of all relevant items).  So, while many ranking experiments compute recall metrics, precisely what and why we are measuring remains unclear.

In this light, we approach the measurement of recall in rankings from a formal perspective, with an objective of proposing a new interpretation of recall with precise conceptual and theoretical grounding.
Our analysis is composed of three interrelated tenets: recall, robustness, and lexicographic evaluation.
First, we consider recall an essentially contested construct \cite{gallie:essentially-contested-concepts,jacobs:measurement-and-fairness}: it is a high level construct believed to be important but with different, conflicting interpretations, as suggested by \citet{zobel:against-recall}.  By adopting the interpretation of recall as finding the totality of relevant items, we formally define `recall-orientation' as sensitivity to movement of the bottom-ranked relevant item.  Although simple, this definition of recall connects to both early work in position-based evaluation as well as recent work in technology-assisted review.  Moreover, by formally defining recall orientation, we can design a new metric, total search efficiency, that precisely measures recall.  Second, we consider robustness another essentially contested construct, again with different, conflicting interpretations \cite{drenkow:robustness}.  By adopting an interpretation of robustness as the effectiveness for the worst-off user, we can connect it to our notion of recall-orientation.
We demonstrate that recall---and totality specifically---is aligned with  worst-case robustness.  Finally, we extend this conceptual and theoretical treatment of recall by developing a practical preference-based evaluation method based on lexicographic comparison.  We present a conceptual relationship between recall, robustness, and lexicographic evaluation in Figure \ref{fig:overview}.  Through extensive empirical analysis across various retrieval and recommendation tasks, we establish that our new evaluation method, lexicographic recall or \textit{lexirecall}, is correlated with existing recall metrics but exhibits substantially higher discriminative power and stability in the presence of missing labels.  While conceptually and theoretically grounded in notions of robustness and fairness, lexirecall pragmatically  is appropriate when we are concerned with understanding system behavior for users who are focused on finding all relevant items in the same way that experiments adopt reciprocal rank as a high-precision metric.  This paper, by focusing on worst-case performance,  also complements existing work that has focused on average case \cite{diaz:rpp} and best-case \cite{diaz:lexiprecision} preference-based evaluation.   Our conceptual, theoretical, and empirical analysis substantially deepens our understanding of recall as a construct and motivates its adoption through connections to robustness and fairness constructs.

\begin{figure}
\includegraphics*[width=0.65\textwidth]{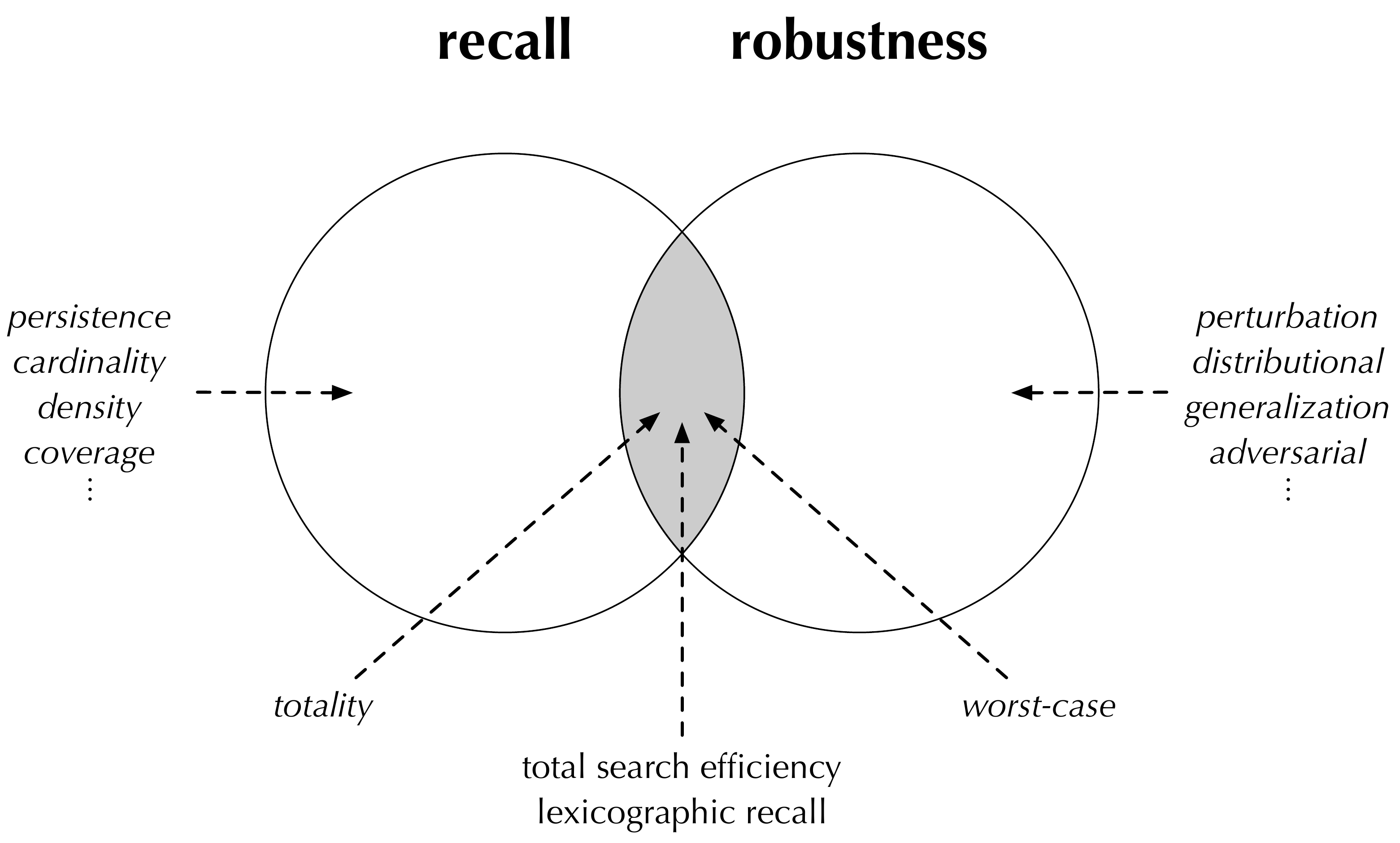}
\caption{Both recall and robustness as a theoretical construct can be conceptualized in multiple ways \cite{zobel:against-recall,drenkow:robustness}.  These two constructs are equivalent when conceptualized as totality (recall) and worst-case analysis (robustness).   Lexicographic recall is located in this intersection. }\label{fig:overview}
\end{figure}
\section{Preliminaries}
\label{sec:preliminaries}
We begin by defining our core concepts and notation in order to provide a clear foundation for our analysis.  While many of these concepts will be familiar to those with a background in the ranking evaluation, we adopt a specific mathematical framework that will be important when proving properties of recall, robustness, and lexicographic evaluation.

We consider ranking systems that are designed to satisfy users\footnote{We adopt \textit{user} as a general term for searcher in the information retrieval context and content consumer in the multi-stakeholder recommendation context.} with with some information need, broadly construed.  This information need may be specific or vague, and may or may not be explicit even in the user's own mind (e.g. ``entertain me for the evening'' can be considered an information need for home-page recommendations on a video streaming service; ``what happened yesterday that I should know about?'' for a daily e-mail of recommended new articles from a news publisher).  Although users in many recommendation scenarios may not seem to have an information need, they do, at any point in time, have a latent order over items in the catalog.  This is true in relatively passive scenarios like radio-like streaming audio where a user, although not explicitly expressing or thinking about their preferences, nevertheless reveals them through their consumption behavior.

While a user's information need is never directly revealed to the ranking system, the user expresses  an observable  \textit{request} to the ranking system.  A request can include information provided by the user either explicitly  (e.g., a query or question for text-based search; an application or site  context for recommendation) or implicitly (e.g., geo-location for text-based search; engagement or session history for recommendation \cite{ludewig:evaluating-sbr}) or both \cite{zamani:search-rec,sontag:personalization}.

A system attempts to satisfy an information need by ranking all of the items in a corpus $\docset$.  A corpus might consist of text documents (e.g., a web crawl) or cultural media (e.g., a music catalog).  And so, if $\numdocs=|\docset|$, a ranking system is a function that, given a request, produces a permutation of the $\numdocs$ items in the collection.  As such, the space of possible system outputs is the set of all permutations of $\numdocs$ items, also known as the symmetric group of degree $\numdocs$ or $\rankings$.

The objective of ranking evaluation is to determine the quality of a permutation $\ranking\in\rankings$ for the user. In the remainder of this section, we will detail precisely how we do this.\footnote{While ranking does not cover all recommendation scenarios, it covers many, including personalized search, home-page recommendations, related product recommendations, and social media timeline ranking, and is frequently considered in research; further, non-ranking interfaces such as music or video streams may well be implemented as a final selection or transformation of an underlying (partial) ranking, and therefore better understanding of ranking evaluation may be applicable even to recommendation surfaces that do not directly expose the ranking to the user.}

\subsection{Relevance}
\label{sec:preliminaries:relevance}
The relevance of an item refers to its value with respect to a user's information need.  In this work, we focus on binary relevance, an approach regularly used in information retrieval and recommender system literature, especially when measuring recall.\footnote{We discuss how our analysis extends in ordinal grades and preferences in Section \ref{sec:robustness:searchers:grades}.}
Let  $\relset\subset\docset$ be the set of items labeled relevant to the request where $\numrel=|\relset|$.

Given a ranking $\ranking$,  then we define $\RPx$ as the $\numrel\times 1$ vector where $\RPx_i$ is the position of the $i$th ranked relevant item in $\ranking$.  We present an example of how to construct $\RPx$ in Figure \ref{fig:notation}.  We will use $\invRP$ to represent the $\numrel\times 1$ vector where $\invRP_i\in\docset$ is the identity of the $i$th ranked relevant item. There are a total of $\ndocs\choose\numrel$ unique $\RPx$ and  each unique $\RPx$ corresponds to a subset of $\numcollapsed=\numrel!(\numdocs-\numrel)!$ unique permutations in $\rankings$.

\begin{figure}
\includegraphics[width=0.65\linewidth]{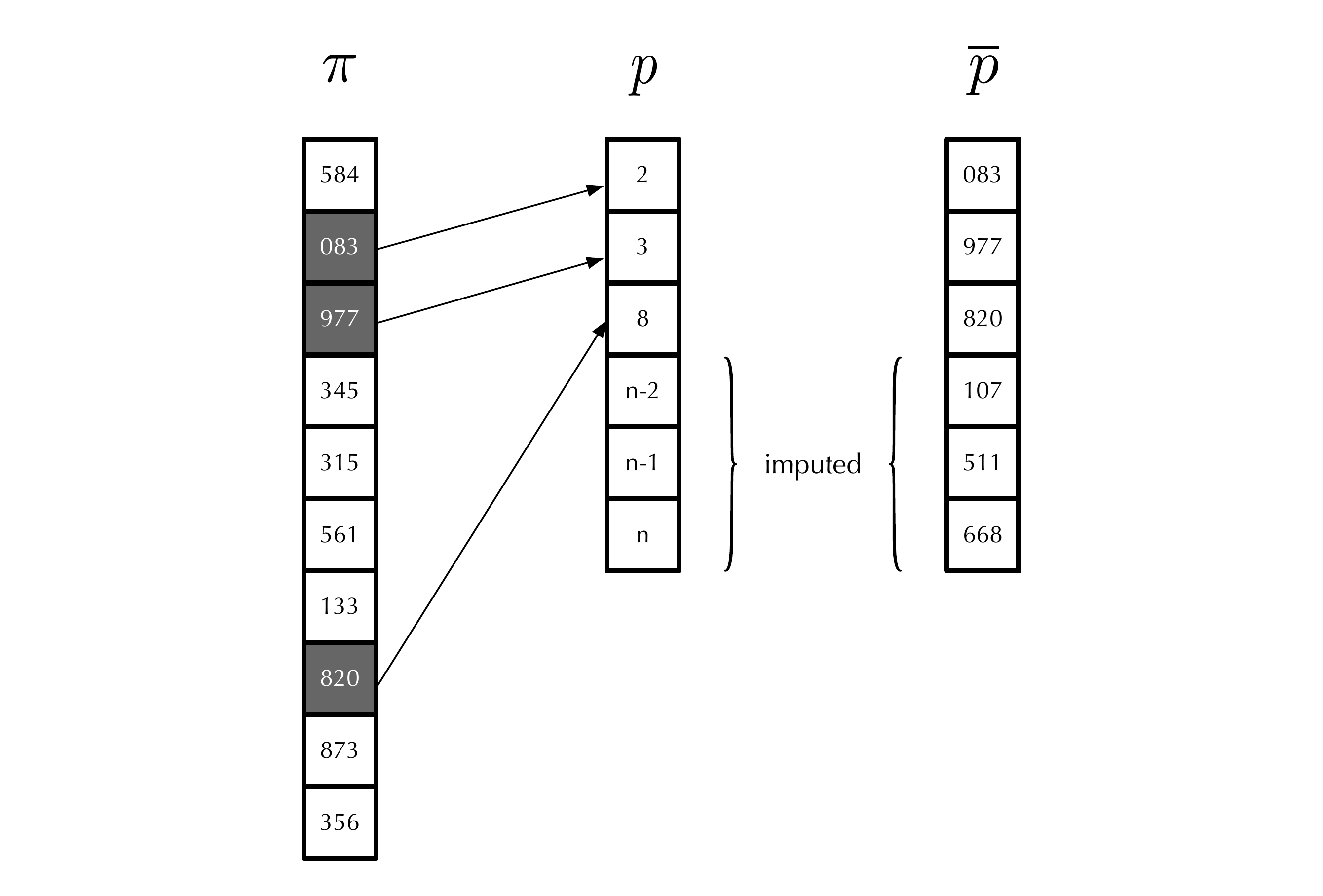}

\caption{Relevance Projection with Imputation.  Given a set of relevant item ids $\relset=\{083,107,511,668,820,977\}$, relevance projection of an incomplete top-10 ranking $\rlx$ and relevant set $\relset$ to a $\numrel\times 1$ vector of positions $\RPx$.  We also show the inverse projection vector $\invRP$ of items at specific recall levels.  } \label{fig:notation}
\end{figure}
\subsection{Permutation Imputation}
\label{sec:preliminaries:imputation}
Many ranking systems only provide a ranking on the top $\topk\ll\numdocs$ items, which may not include all relevant items, leaving elements of $\RPx$ undefined.  In order to use many metrics, especially recall-oriented metrics, we need to impute the positions of the unranked relevant items.  An \textit{optimistic imputation} would place the unranked relevant items immediately after the last retrieved items (i.e. $\topk+1,\topk+2,\ldots$).  Such a protocol would be susceptible to manipulation by returning few or no items.  Alternatively, we consider \textit{pessimistic imputation}, placing the unretrieved relevant items  at the bottom of the total order over the corpus. For example, if a system returns only three of six relevant items in the top $\topk$ at positions 2, 3, and 8, then we would define $\RPx$ as,
\begin{align*}
\RPx&=\underbrace{2,3,8}_{\text{top $\topk$}},\underbrace{\numdocs-2,\numdocs-1,\numdocs}_{\text{bottom $\ndocs-\topk$}}
\end{align*}
Pessimistic imputation is a conservative placement of the unretrieved relevant items and is well-aligned with our interest in robust performance. Moreover, it is consistent with behavior of metrics like rank-biased precision \cite{moffat:rbp}, which implicitly applies an exposure of 0 for unretrieved relevant items (i.e., for large $\numdocs$, $\lim_{i\rightarrow\ndocs}\rbpGamma^i\approx0$); and average precision\footnote{As defined in \tt{trec\_eval}.}, which implicitly applies an exposure of 0 for unretrieved relevant items (i.e., for large $\numdocs$, $\lim_{i\rightarrow\ndocs}\frac{1}{i}\approx0$).  We show an example of projection with imputation in Figure \ref{fig:notation}.
\subsection{Measuring Effectiveness}
When evaluating a ranking system, we consider two sets of important stakeholders: users and providers.  Users approach the system with information needs and requests and  ultimately define what is relevant.  Providers contribute items to the search system which serve to satisfy users.  Each item in the corpus is attributable, explicitly or not, to a content provider.  We include providers in our treatment to draw connections in existing work in the fair ranking literature, which often focuses on providers as stakeholders \cite{ekstrand:fair-ia-survey}.  In this section, we will characterize  a broad family of evaluation metrics for these two sets of users that will allow us, in subsequent sections, to define formal notions of recall and robustness.

\subsubsection{Measuring Effectiveness for Users}
\label{sec:preliminaries:metrics:searchers}

For a fixed information need and ranking $\rlx$, an evaluation metric is a function that scores rankings, $\metric : \rankings \times \docsets \rightarrow \NNreals$ where $\docsets$ is set of all subsets of $\docset$ excluding the empty set.   An evaluation metric, then, is a function whose domain is the joint space of all corpus permutations and possible relevance judgments and whose range is a non-negative scalar value.  We are specifically interested in a class of metrics that can be expressed in terms of a summation over recall levels.
\begin{definition}
\label{def:recall-level}
Given a ranking $\ranking\in\rankings$ and relevant items $\relset\in\docsets$, a \textit{recall-level metric} is an evaluation metric defined as a summation over $\numrel$ recall levels,
\begin{align}
\metric(\ranking,\relset) &= \sum_{i=1}^{\numrel} \exposure(\RPx_i)\normalization(i,\numrel)\label{eq:metric}
\end{align}
where $\exposure:\rankPositions\rightarrow\NNreals$ is a strictly monotonically decreasing \textit{exposure function} proportional to the probability that the user reaches rank position $i$ in their scan of the list; and  $\normalization:\rankPositions\times\rankPositions\rightarrow\NNreals$ is a metric-specific \textit{normalization function} of recall level and size of $\relset$.
\end{definition}
The product $\exposure(\RPx_i)\normalization(i,\numrel)$ is a  decomposition of what  \citeauthor{carterette:user-models-effectiveness} refers to as a `discount function' into an explicit function that models exposure and another that addresses any recall normalization \cite{carterette:user-models-effectiveness}.  
Within the set of recall-level metrics, we are further interested in the sub-class of metrics that satisfy the following criteria for `top-heaviness'.
\begin{definition}
\label{def:top-heavy}
We refer to a recall-level metric as \textit{top-heavy} if, for $j\in[0\isep\numrel)$,

\begin{align*}
\sum_{i=1}^{\numrel} \exposure(\RPx_i)\normalization(i,\numrel) &\geq \sum_{i=j+1}^{\numrel} \exposure(\RPx_i)\normalization(i-j,\numrel-j)
\end{align*}

\end{definition}
Top-heaviness indicates that, in the event that there are $j$ unjudged, relevant items in positions above the remaining $m-j$ relevant items, the metric computed over all $\numrel$ items must be greater than or equal to the metric value computed over only the $\numrel-j$ items.  Because we deal with metrics that include functions of $\numrel$, this is not an obvious property, but one that will be important as we consider incomplete judgments and relationships between possible users in Section \ref{sec:robustness}.\footnote{For more information on the relationship between incomplete judgments and metric stability, see}

Top-heavy recall-level metrics are a precise subclass of discounted metrics, covering a broad class of existing metrics such as average precision ($\ap$), reciprocal rank ($\rr$), normalized discounted cumulative gain ($\ndcg$), and rank-biased precision ($\rbp$), where we define the exposure and normalization  as,
\begin{align*}
\exposure_{\ap}(i)&=\frac{1}{i}&\normalization_{\ap}(i,\numrel) &= \frac{i}{\numrel}\\
\exposure_{\rr}(i)&=\frac{1}{i}&\normalization_{\rr}(i,\numrel) &= \begin{cases} 1 & \text{if $i=1$}\\
0& \text{otherwise}\end{cases}\\
\exposure_{\ndcg}(i)&=\frac{1}{\log_2(i+1)}&\normalization_{\ndcg}(i,\numrel) &= \left(\sum_{k=1}^{\numrel}\frac{1}{\log_2(k+1)}\right)^{-1}\\
\exposure_{\rbp}(i)&=(1-\rbpGamma)\rbpGamma^{i-1}&\normalization_{\rbp}(i,\numrel) &= 1
\end{align*}
Beyond classic ranking metrics, top-heavy recall-level metrics include non-traditional metrics such as those based on linear discounting (e.g., $\exposure_{\text{lin}}(i)=1-\frac{i}{\numdocs}$).  This results in a much broader class of metrics than those normally considered, for example, in the formal analysis of ranking metrics \cite{moffat:seven-properties-of-metrics,amigo:metric-axioms,ferrante:framework-for-utility-metrics}.  As a result, while all top-heavy recall-level metrics satisfy some formal properties of ranking evaluation metrics, large subsets of top-heavy recall-level metrics may satisfy more.  A detailed analysis of formal properties of top-heavy recall-level metrics can be found in Appendix \ref{app:proofs:metrics}.  As mentioned before, we can contrast this with \citeauthor{carterette:user-models-effectiveness}'s decomposition which focuses on the decomposition of metrics into gain and discount components \cite{carterette:user-models-effectiveness}.  In our case, we do not model gain, since we deal with binary relevance.  Our exposure and normalization functions, then, precisely define a subset of \citeauthor{carterette:user-models-effectiveness}'s discount functions that do not fit into his metric taxonomy since they do not consider recall normalization \cite{carterette:user-models-effectiveness}.

We focus on this class of metrics in order to prove properties of robustness in Section \ref{sec:robustness}.

\subsubsection{Measuring Effectiveness for Providers}
\label{sec:preliminaries:metrics:provider}
For content providers, we define the utility they receive from a ranking $\rlx$ as a function of their items' \textit{cumulative positive exposure}, defined as exposure of a provider's relevant content.\footnote{We do not consider provider utility when none of their associated items are relevant to the user's information need.  Although not covering situations where providers benefit from \textit{any} exposure (including of nonrelevant content), it is consistent with similar definitions used in the fair ranking literature \cite{singh:exposure,diaz:expexp}.

While we adopt a cumulative exposure model in this work, alternative notions of provider effectiveness are possible.  For example, normalizing by the number of relevant items contributed $|\prelset|$ would emphasize providers who contribute more content.}  Let $\prelset\subseteq\relset$ be the subset of relevant items belonging to a specific provider.  Since $\exposure$ captures the likelihood that a user inspects a specific rank position, we can compute the cumulative positive exposure as,
\begin{align}
\pmetric_{\exposure}(\ranking,\relset,\prelset) &= \sum_{i=1}^\numrel \exposure(\RPx_i)\ident(\invRP_i\in\prelset)\label{eq:pmetric}
\end{align}
where $\numrel$ and $\RPx$ are based on $\relset$.
Unless necessary, we will drop the subscript $\exposure$ from $\pmetric$ for clarity.

\subsection{Evaluation Method Desiderata}
\label{sec:desiderata}
Because there is no consensus on a single approach to validate a new evaluation method, we assemble desired theoretical and empirical properties of a method drawn from work in information retrieval and recommender system evaluation \cite{sakai:metrics,valcarce:recsys-ranking-metrics-journal}  and measurement theory \cite{jacobs:measurement-and-fairness,xiao:nlg-measurement-theory}.
\begin{itemize}
\item Validity
\begin{itemize}
\item \textbf{Content validity.}
\begin{itemize}
\item Is the evaluation theoretically related to the higher level concept? (Sections  \ref{sec:robustness:searchers}, \ref{sec:robustness:providers})
\item Is the evaluation better correlated with the higher level concept than existing metrics?   (Section  \ref{sec:robustness:analysis})
\end{itemize}
\item \textbf{Convergent validity.}  Is the evaluation method empirically \textit{correlated} with existing methods for measuring the \textit{same} higher level concept? (Section \ref{sec:results:agreement})
\item \textbf{Discriminant validity.}  Is the evaluation method empirically \textit{uncorrelated} with existing methods for measuring \textit{different} higher level concepts? (Section \ref{sec:results:agreement})
\end{itemize}
\item Sensitivity
\begin{itemize}
\item \textbf{Decision sensitivity.} Is the evaluation method better able to distinguish between \textit{rankings} compared to existing methods? (Section  \ref{sec:results:numties})
\item \textbf{System sensitivity.}  Is the evaluation method better able to distinguish between \textit{rankers} compared to existing methods? (Section   \ref{sec:results:sensitivity})
\end{itemize}
\item Reliability
\begin{itemize}
\item \textbf{Stable validity.}  Is the evaluation method stable when labels are missing compared to existing methods? (Section   \ref{sec:results:degradation})
\end{itemize}
\begin{itemize}
\item \textbf{Stable sensitivity.}  Does the evaluation method maintain sensitivity  when labels are missing compared to existing methods? (Section   \ref{sec:results:degradation})
\end{itemize}
\end{itemize}
Throughout this article, when we assess or compare evaluation methods, we will focus on these properties.

We note that the evaluation methods we develop in Sections \ref{sec:tse} and \ref{sec:leximin} measure population-based properties using worst-case analysis.  This means that, for each request, these methods consider a \textit{population of users} that tend to emphasize \textit{under-represented} behaviors and intents.  We contrast this with traditional metrics that model \textit{individual users} and emphasize \textit{well-represented} behaviors and intents.  This means that validation with, for example, behavioral feedback \cite{carterette:cikm2012} or user studies \cite{sanderson:preferences} is not possible.  In lieu of empirical validation, we emphasize both conceptual and theoretical properties of our evaluation methods, grounding them in the relevant work in philosophy and economics. This normative design of an evaluation method is consistent with recent work in the recommender system community \cite{ferraro:commonality,vrijenhoek:normative-diversity-news,vrijenhoek:radio,normalize-workshop-2023}.

\section{Recall}
\label{sec:tse}
As mentioned in Section \ref{sec:introduction}, the description of ranked ranking metrics as `recall-oriented' remains poorly defined, leaving the formal analysis of metrics for recall-orientation difficult. From a technical point of view, some work considers recall-orientation to be a binary criteria, dependent on whether a metric includes the recall base (i.e., $\relset$) in order to be computed \cite{kazai:xcg,sakai:precision-note}.  This would include metrics that compute set-based recall at some rank cutoff \cite{tomlinson:patent-evaluation,cormack:total-recall-sigir} as well as metrics like $\ap$ and $\ndcg$.  Using a set-based recall metric is particularly well-suited for recall-orientation in early stages of multi-stage ranking  \cite{macdonald:ltr-howto,mohammad:dynamic-shard-cutoff}.  A binary notion of recall-orientation does not capture that some metrics may be more recall-oriented than others.  This is captured, in part, by references to recall-orientation as related to  the depth in the ranking considered by the user \cite{diaz:crm}. More frequently, authors appeal to metrics like $\ap$ and $\recallK$ as being recall-oriented without clear discussion of what this means \cite{montazeralghaem:rl-rf,dai:ltr-resources,li:reqrec,hashemi:antique}. On the other hand, both \citet{mackie:prf-sparse} and \cite{magdy:pres-incomplete} refer to $\recallK$ as recall-oriented but $\ap$ being precision-oriented.  In light of the lack of consensus on recall-orientation, in Section \ref{sec:motivation}, we propose a new quantitative view of recall-orientation based on how sensitive a metric is for a user interested in finding every relevant item.  This allows us to see recall-orientation along a spectrum and compare the degrees of recall-orientation of different metrics. In Section \ref{sec:tse:tse}, based on this definition, we derive a new recall metric, total search efficiency.

\subsection{Metric Orientation}
\label{sec:motivation}
We are interested in more precisely defining precision and recall as constructs to be measured in ranking evaluation.  Although most evaluation metrics colloquially capture some aspects of both precision and recall, understanding the sensitivity to each remains vague.  We can address this vagueness by approaching precision and recall as two extremes of recall requirements.  At one extreme, precision as a construct reflects the satisfaction of a user who only needs exactly one relevant item, the minimum amount of retrievable content. We might find this in domains like web search.  At the other extreme, recall as a construct reflects the satisfaction of a user who needs \textit{every} relevant item, the maximum amount of retrievable content.  \citet{herlocker:recsys-eval} call this the `find all good items' tasks;  \citet{zobel:against-recall} refers to this as the \textit{totality} interpretation of recall, found in many technology-assisted review domains.
Indeed, this perspective is supported by tasks like recommender systems for scholarly literature reviews; evaluation programs like the TREC Total Recall Track \cite{roegiest:total-recall-2015} and patent search \cite{current-challenges-patent}; and by metrics like  `position of the last relevant ' \cite{zou:tar}.

We begin by defining the \textit{precision valence} of a ranking of $\ndocs$ items as how efficiently a user can find the \textit{first} relevant item.  For a fixed request, assume that we have $\numrel$ relevant items.    The ideal precision valence occurs when the first relevant item is at rank position 1.  The worst precision valence occurs when the first relevant item is at rank position $\ndocs-\numrel+1$, just above the remaining $\numrel-1$ relevant items.  Similarly, we refer to the \textit{recall valence} of a ranking as how efficiently a user can find \textit{all} of the relevant items.  The ideal recall valence occurs when the last relevant item is at position $\numrel$ (i.e., below the other $\numrel-1$ relevant items) and the worst precision valence when it is at position $\ndocs$.

In order to define the \textit{precision orientation} of a metric, we measure the difference in the best-case precision valence and worst-case precision valence for a given metric.  Although there is only one arrangement of positions of relevant items where the top-ranked item is at position $\ndocs-\numrel+1$, there are $\ndocs-1\choose\numrel-1$ arrangements of positions of relevant items where the top-ranked item is at position 1.  In order to control for the contribution of higher recall levels, we can consider, for the best-case precision valence, the ranking with a relevant item at the first position and the remaining $\numrel-1$ relevant items at the bottom of the ranking.  Precision orientation, then, measures sensitivity for a user interested in one relevant item.   We depict this graphically in Figure \ref{fig:metric-orientation:precision}.
Similarly, in order to define the \textit{recall orientation} of a metric, we measure the difference in the best-case recall valence and worst-case recall valence for a given metric.  Although there is only one arrangement of positions of relevant items where the bottom-ranked item is at position $\numrel$, there are $\ndocs-1\choose\numrel-1$ arrangements of positions of relevant items where the bottom-ranked item is at position $\ndocs$.  In order to control for the contribution of lower recall levels, we can consider, for the worst-case recall valence, the ranking with a relevant item at position $\ndocs$ and the remaining $\numrel-1$ relevant items at the top of the ranking.  Recall orientation, then, measures sensitivity for a user interested in all relevant items.   We depict this graphically in Figure \ref{fig:metric-orientation:recall}.

\begin{figure}
\begin{subfigure}[b]{0.5\linewidth}
\centering
\includegraphics[width=0.65\linewidth]{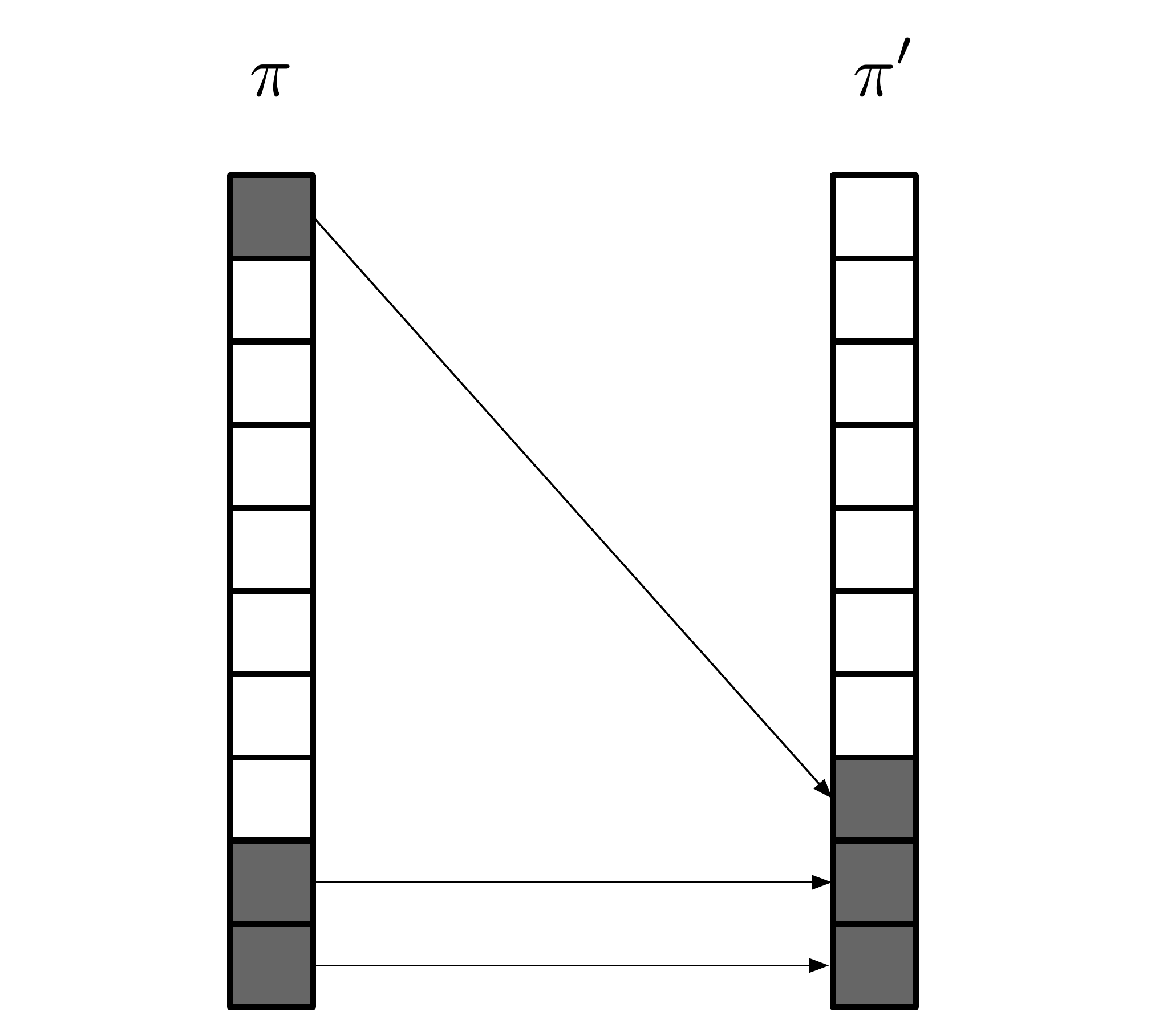}
\caption{Precision. }
\label{fig:metric-orientation:precision}
\end{subfigure}\begin{subfigure}[b]{0.5\linewidth}
\centering
\includegraphics[width=0.65\linewidth]{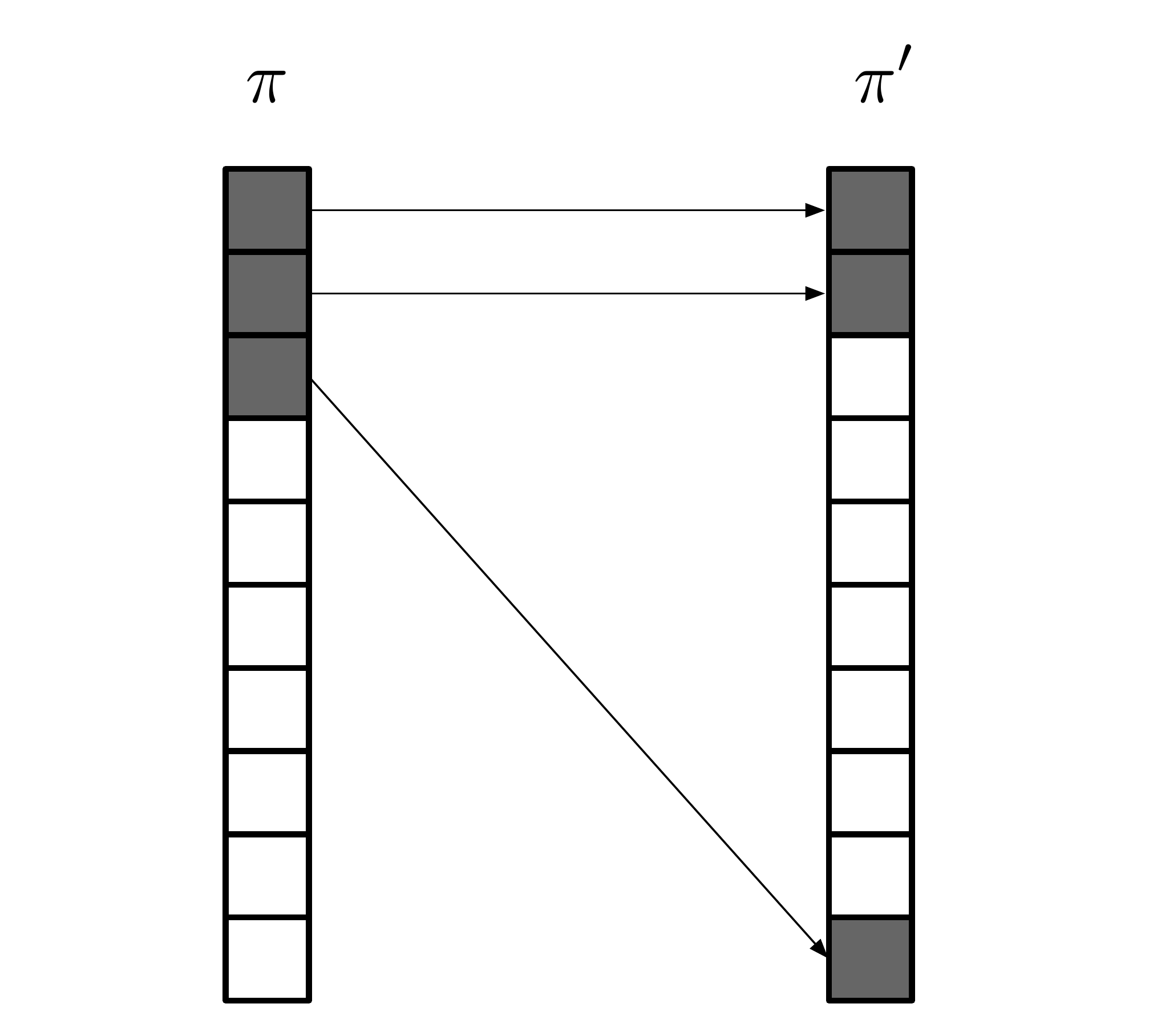}
\caption{Recall.   }
\label{fig:metric-orientation:recall}
\end{subfigure}
\caption{Metric orientation.  Each ranking $\ranking$ of ten items represented with a vector of cells ordered from top to bottom with shaded cells representing relevant items $\relset$. \textit{Precision orientation} (left) measures the degradation in a metric when the highest ranked relevant item is moved to the bottom of the ranking while holding all other positions fixed. \textit{Recall orientation} (right) measures the degradation in a metric when the lowest ranked relevant item is moved to the bottom of the ranking while holding all other positions fixed.  We measure the precision and recall orientation of a metric $\metric$ by the difference between $\metric(\ranking,\relset)-\metric(\ranking',\relset)$.}\label{fig:metric-orientation}
\end{figure}

In order to understand the intuition behind this definition of metric orientation, we can think about the recall requirements of precision-oriented or recall-oriented users.  The prototypical  precision-oriented user is satisfied by a single relevant item.  Precision-orientation quantifies how sensitive a metric is at measuring the best-case and worst-case for this precision-oriented user.  The prototypical  recall-oriented user is only satisfied when they find all of the  relevant items.  Recall-orientation quantifies how sensitive a metric is at measuring the best-case and worst-case for this recall-oriented user.  These prototypical users intentionally represent extremes in order to characterize existing metrics and control for any contribution from other recall requirements (e.g., those greater than one for precision and less than $\numrel$ for recall).

Figure \ref{fig:metric-orientation:results} shows the recall and precision orientation for several well-known evaluation metrics for a ranking of $\numdocs=10^{5}$ items.  Because both precision and recall orientation are functions of the number of  relevant items, we plot values for $\numrel\in[1\isep 15]$.

\begin{figure}
\begin{subfigure}[b]{0.5\linewidth}
\centering
\includegraphics[width=0.9\linewidth]{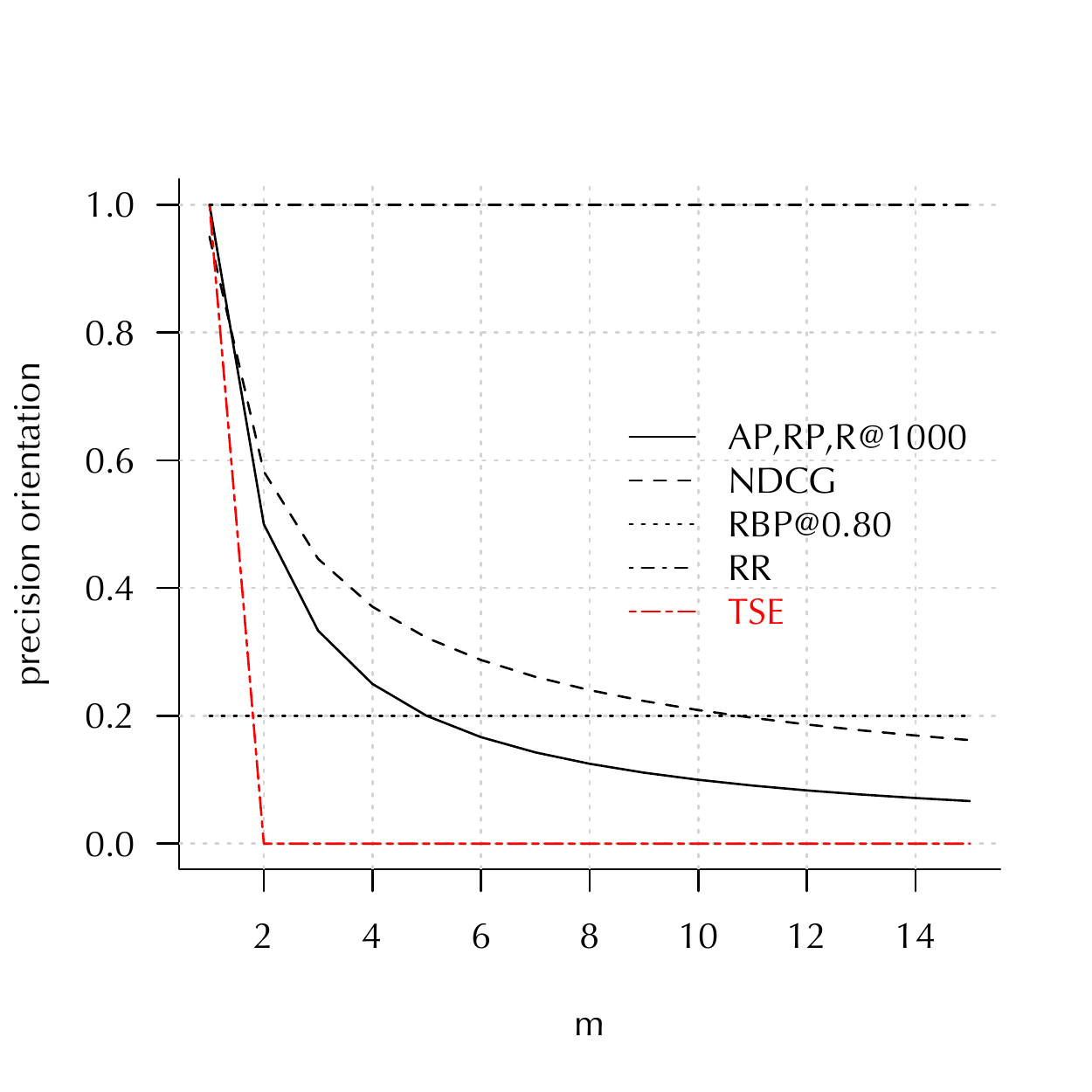}
\caption{Precision. }
\label{fig:metric-orientation:results:precision}
\end{subfigure}\begin{subfigure}[b]{0.5\linewidth}
\centering
\includegraphics[width=0.9\linewidth]{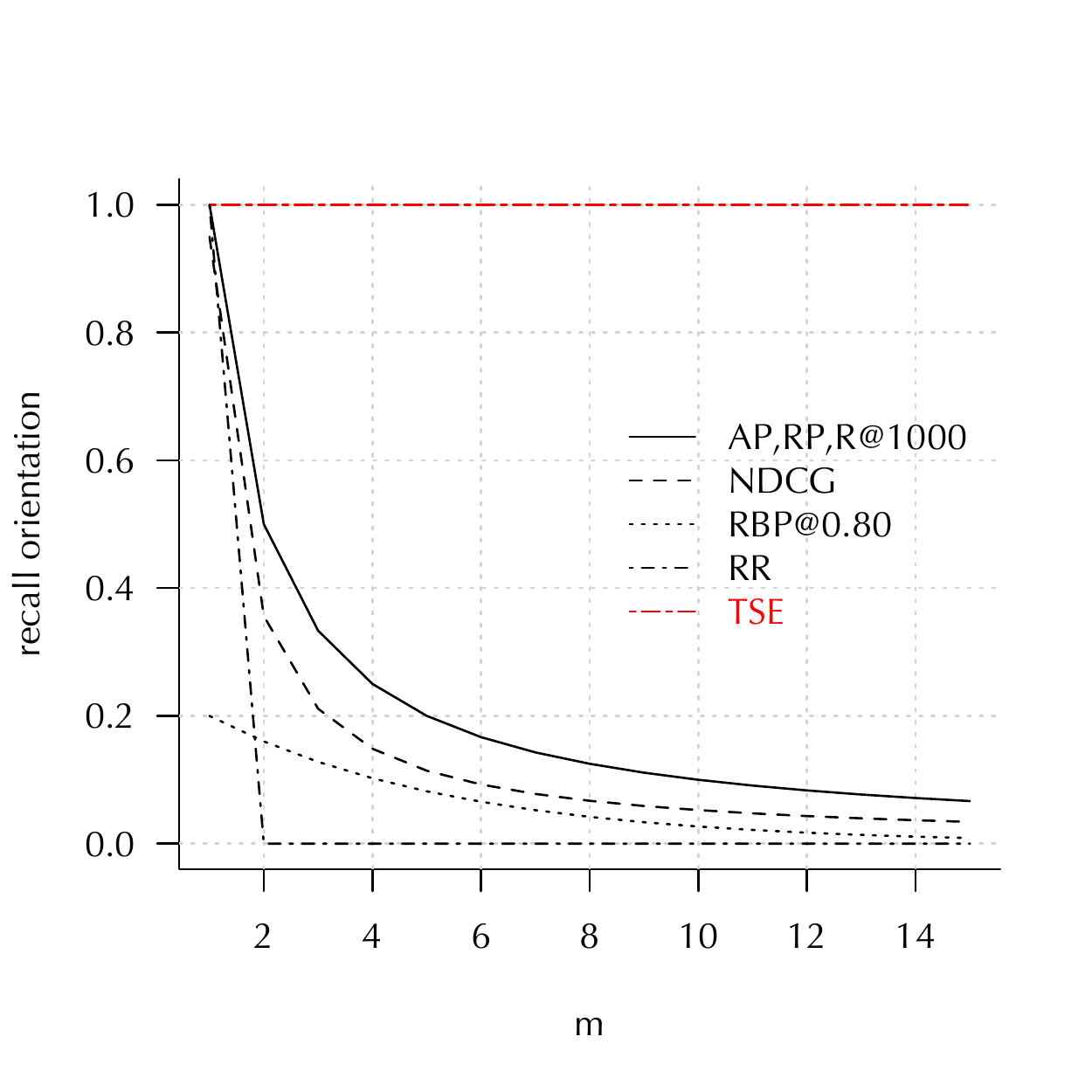}
\caption{Recall.   }
\label{fig:metric-orientation:results:recall}
\end{subfigure}
\caption{Metric orientation of ranking metrics of $10^5$ items with  $m\in[1\isep 15]$  relevant items.  The vertical axis reflects, for $\numrel\in[1,15]$, the change in metric value when (\subref{fig:metric-orientation:results:precision}) moving top-ranked item from position 1 to position $\numdocs-\numrel+1$ or (\subref{fig:metric-orientation:results:recall}) moving bottom-ranked item from position $\numrel$ to position $\numdocs$.
The values for TSE (Equation \ref{eq:tse}) are scaled by the lower and upper bound given a fixed $\numrel$ and therefore apply to any exposure model.
See Figure \ref{fig:metric-orientation} for details.
\textit{This figure best rendered in color.}}\label{fig:metric-orientation:results}.
\end{figure}

In terms of precision orientation,  the ordering of metrics follows conventional wisdom.  $\rr$ is often used for known-item or other high-precision tasks where the user is satisfied by the first relevant item.  Across all values of $\numrel$, $\rr$ dominates other metrics.  $\ndcg$, often used for evaluating both top-$N$ recommendations and web search results, is the next more precision-oriented metric.  $\ap$, `recall at 1000' ($\recallK$), and R-precision ($\rprecision$) all have the same precision-orientation and are dominated by $\ndcg$.
Rank-biased precision  ($\rbp$) dominates $\ndcg$ and $\ap$ once we reach a modest number of relevant items.  Both $\rr$ and $\rbp$ are not sensitive to the number of relevant items because neither is a function of $\numrel$.

In terms of recall orientation, the ordering of metrics follows conventional wisdom in the information retrieval and recommender systems community.  $\recall_{1000}$, $\ap$, and $\rprecision$ all dominate other metrics for all values of $\numrel$.  This family of metrics  is followed by metrics often used for precision tasks, $\ndcg$ and $\rbp$. $\rr$, the least recall-oriented metric, only considers the top-ranked relevant item and shows \textit{no} recall orientation unless there is only one relevant item.  We should note that, except for $\rr$, the recall orientation decreases with the number of relevant items because all of these metrics  aggregate  an increasing number of positions as $\numrel$ increases.  When evaluating over a set of requests with varying values of $\numrel$, mis-calibrated recall valences may result in requests with lower values of $\numrel$ dominating any averaging.

In this analysis, $\rr$ is a precision-oriented `basis' metric insofar as it is only dependent on the position of the highest-ranked relevant item.  We are interested in designing a symmetric recall-oriented `basis' metric that only depends on the position of the lowest-ranked relevant item.  We depict this desired metric as the red line in Figure \ref{fig:metric-orientation:results}. Traditional recall-oriented metrics do not satisfy this since  they depend on the position of the higher-ranked relevant items, especially as $\numrel$ grows.  In this paper, we identify the missing metric that captures recall-orientation while being well-calibrated across values of $\numrel$.

\subsection{Total Search Efficiency}
\label{sec:tse:tse}
Although metrics such as $\ap$ are often referred to as `recall-oriented', in this section, we focus on metrics that explicitly define recall as a construct.
Such recall metrics for ranking systems come in two flavors.\footnote{We exclude set-based  metrics used in set-based retrieval or some technology-assisted review evaluation \cite{cormack:tar,lewis:tar-certification}, since they require systems to provide a cutoff in addition to a ranking.}  The first flavor of recall metrics measures the fraction of relevant items found after a user terminates their scan of the ranked list.  Metrics like $\recallK$ and $\rprecision$ use a model of search depth to simulate how deep a user will scan.  We can define exposure and normalization functions for $\recall_k$ and $\rprecision$,
\begin{align*}
\exposure_{\recall_k}(i)&=\ident(i\leq k)&\normalization_{\recall_k}(j,\numrel) &= \frac{1}{\numrel}\\
\exposure_{\rprecision}(i)&=\ident(i\leq \numrel)&\normalization_{\rprecision}(j,\numrel) &= \frac{1}{\numrel}
\end{align*}
Note that, although we decompose these metrics using the notation of recall-level metrics, neither of these exposure functions strictly monotonically decrease in rank, so  they are not recall-level metrics.

The second flavor of recall metrics measures the effort to find all $\numrel$ relevant items.  \citet{cooper:esl} refers to this as a the \textit{Type 3 search length} and is measured by,
\begin{align*}
\esl_3(\rlx,\relset)&=\RPx_{\numrel}-\numrel\\
&\rankeq\RPx_{\numrel}
\end{align*}
Similarly, \citet{zou:tar} use  `position of the last relevant item'    to evaluate high-recall tasks. By contrast, \citet{rocchio:recall-error} proposed \textit{recall error}, a metric based on the average rank of  relevant items,
\begin{align*}
\recallerror(\rlx,\relset)&=\frac{1}{\numrel}\sum_{i=1}^{\numrel}\RPx_i-\frac{\numrel+1}{2}\\
&\rankeq\sum_{i=1}^{\numrel}\RPx_i
\end{align*}
Recall error can be sensitive to outliers at very low ranks, which occur frequently in even moderately-sized corpora \cite{magdy:pres}.

Inspired by \citeauthor{cooper:esl}'s $\esl_3$, we define a new recall-oriented top-heavy recall-level metric by looking at exposure of relevant items at highest recall level (i.e.  $i=\numrel$ in Equation \ref{eq:metric}).  We can define a recall-oriented metric based on any top-heavy recall-level metric by replacing its normalization function with the following,
\begin{align*}
\normalization_{\esl_3}(i,\numrel)=\begin{cases} 1 & \text{if $i=\numrel$}\\
0& \text{otherwise}
\end{cases}
\end{align*}
We refer to this as the efficiency of finding \textit{all} relevant items or the \textit{total search efficiency},  defined as,
\begin{align}
\tse_{\exposure}(\ranking,\relset) &= \sum_{i=1}^{\numrel} \exposure(\RPx_i)\normalization_{\esl_3}(i,\numrel)\label{eq:tse}\\
&= \exposure(\RPx_{\numrel})\nonumber
\end{align}
where the specific exposure function depends on base top-heavy recall-level metric (e.g., $\ap$, $\ndcg$).
TSE, then, is a family of metrics parameterized by a specific exposure function with properties defined in Section \ref{sec:preliminaries:metrics:searchers}.  Although computing $\tse_{\exposure}$ depends on the exposure model $\exposure$, unless necessary, we will drop the subscript for clarity.
We demonstrate the precision and recall orientation of $\tse$ in Figure \ref{fig:metric-orientation:results}.  Since TSE with an $\ap$ base behaves identically to $\rr$, except from the perspective of recall-orientation, we consider it $\rr$'s recall-oriented counterpart.  In the next section, we will connect this notion of recall to concepts of robustness and fairness.

\section{Robustness}
\label{sec:robustness}
In the context of a single ranking, we are interested in measuring its robustness in terms of its effectiveness for different possible users who might have issued the same request.\footnote{Robustness in the information retrieval community has traditionally emphasized slightly different notions from our ranking-based perspective.  For example, the TREC Robust track emphasized robustness of \textit{searcher effectiveness across information needs}, focusing evaluation on difficult queries \cite{ROBUST2004}.  Similarly, risk-based robust evaluation seeks to ensure that \textit{performance improvements   across information needs} are robust with respect to a baseline \cite{wang:robust-ranking,trecw2013:overview}.  Meanwhile, \citet{goren:ranking-robustness} proposed robustness as the \textit{stability of rankings across adversarial document manipulations}.  In the context of recommender systems, robustness has analogously focused on robustness of \textit{utility across users} \cite{xu:robust-recsys,wen:dro-recsys} and  \textit{stability of rankings across adversarial content providers} \cite{mobasher:robust-recsys}.
}
In retrieval, this might occur when multiple users issue the same text-based query.
In recommendation, this might occur when multiple users share the same engagement or session history, for example in cold-start or low-data situations.
In both retrieval and recommendation, even a single user may have multiple intents depending on the context (e.g., an item may be relevant to a user when they are studying versus when they are relaxing).
We present additional examples in Table \ref{tab:request-ambiguity} that demonstrate situations where data sparsity in the request can conceal two users with different intents.
The concept of robustness across users or user-contexts is related to work in search engine auditing that empirically studies how effectiveness varies across different searchers issuing the same request \cite{mehrotra:demographics-of-search}.
In our work, we define robustness as the effectiveness of a ranking for the worst-off user (or user-context) who might have issued a request.

\begin{table}
\caption{Ambiguity in requests. In both retrieval and recommendation tasks, multiple users or the same user at different times may consider different items relevance.  }\label{tab:request-ambiguity}
{\footnotesize
\begin{tabular}{l|>{\raggedright\arraybackslash}p{1.4in}p{1in}|ll}
&\multicolumn{2}{c|}{request}&\multicolumn{2}{c}{relevance}\\
&&&&\\
task & implicit & explicit & user 1 & user 2\\
\hline
web retrieval & - & query=`salsa' & recipe & dance \\
map retrieval & location=`NYC' & query=`restaurant' & vegetarian & Korean \\
movie recommendation & watched=[`Jaws', `Piranha'] & - & nautical horror &  1970s horror\\
music recommendation & saved=[`Black Sabbath', `Bach'] & query=`study music' & metal &  classical\\
\end{tabular}
}
\end{table}

Underlying our notion of robustness is a population-based perspective on ranking evaluation.  Classic effectiveness measures can be interpreted as expected values over different user populations defined by different browsing behavior \cite{robertson:ap-user-model,sakai:user-models,carterette:user-models-effectiveness}.  For example, \citet{robertson:ap-user-model} demonstrated that $\ap$ can be interpreted as the expected precision over a population of users with different recall requirements.  More generally, \citet{carterette:user-models-effectiveness} demonstrated this for a large class of metrics.  We can contrast this with online evaluation production environments where systems observe individual user behavior and do not need to resort to statistical models to capture different user behavior.  So, just as recent work in the fairness literature disaggregates evaluation metrics to understand  how  performance varies across groups \cite{mehrotra:demographics-of-search,ekstrand:cool-kids,ekstrand:fair-ia-survey,neophytou:revisiting-bias}, we can disaggregate traditional evaluation metrics to understand how performance varies across implicit subpopulations of users or providers.

From an ethical perspective, when considered the expected value over a population of users, traditional metrics make assumptions aligned with average utilitarianism, where the expected utility over some population is used to make decisions \cite{sidgwick:methodsofethics}.  While this reduces, in production environments, to averaging a performance metric across all logged requests, in offline evaluation, this is captured by the distribution underlying the metric, as suggested by  \citet{robertson:ap-user-model} and \citet{carterette:user-models-effectiveness}.  This means that if there are certain types of user behavior that are overrepresented in the data (online evaluation) or the user model (offline evaluation), they will dominate the expectation.  Users whose behaviors or needs have low probability in the data or the user model will be overwhelmed and effectively be obscured from measurement.

For robustness, instead of measuring the effectiveness of a system by adopting average utilitarianism and computing the expected performance over users, we can summarize the distribution of performance over users using alternative traditions based on distributive justice.  This follows recent literature in value-sensitive, normative design of evaluation metrics \cite{ferraro:commonality,vrijenhoek:normative-diversity-news,vrijenhoek:radio,normalize-workshop-2023}.  Specifically, inspired by related work in fair classification \cite{heidari:moral-fair-ml,liang:fairness-without-demographics,shah:rawlsian-fairness-dl-classifiers,memarrast:robust-ltr}, we can adopt Rawls' difference principle which evaluates a decision based on its value to the worst-off individual \cite{rawls:justice-as-fairness-restatement}.  In the context of a single ranking, this means the worst-off user or provider. As such, our worst-case analysis is aligned with Rawlsian versions of \begin{inlinelist}
\item equality of information access  (for users) and
\item fairness of the distribution of exposure (for providers)
\end{inlinelist}.  Even from a utilitarian perspective, systematic under-performance  can cost ranking system providers as a result of user attrition  \cite{liang:fairness-without-demographics,zhang:retention,mladenov:ltv} or negative impacts to a system's brand \cite{srinivasan:algorithms-brands}.

Although motivated by similar societal goals (e.g., equity, justice),   existing methods of measuring fairness in ranking are normatively very different from worst-case robustness.  First, the majority of fair ranking measures emphasize equal exposure  amongst providers \cite{ekstrand:fair-ia-survey} and is based on  strict egalitarianism, a different ethical foundation than Rawls' difference principle \cite{rawls:maximin-reasons}.  Pragmatically, in order to satisfy this within a single ranking, authors restrict analysis to stochastic ranking algorithms \cite{singh:exposure,diaz:expexp} or amortized evaluation \cite{asia:equity-of-attention,trec-fair-ranking-2019}.  Second, fair ranking analyses that focus on users tend be restricted to disaggregated evaluation, without reaggregating \cite{mehrotra:demographics-of-search,ekstrand:cool-kids}.  This is different from our focus on disaggregating and then summarizing the distribution of effectiveness with the worst-off user.  Most importantly, while most fairness work looks at \textit{either} users or providers, in our analysis, we  demonstrate that both worst-case user \textit{and} provider robustness are simultaneously captured by recall, as measured by $\tse$.

\subsection{User Robustness}
\label{sec:robustness:searchers}
Given a ranking $\rlx$, we would like to measure the worst-case effectiveness over a population of possible users.

\subsubsection{Possible Information Needs}
In this section, we describe how relevance, as traditionally used, reflects the set of \textit{possible user information needs}.  To see why, consider the notion of relevance described in Section \ref{sec:preliminaries:relevance}.  This is  often referred to as \textit{topical relevance}, the match between an item and the general topic of the  user \cite{boyce:beyond-topicality,cooper:relevance}.
By contrast, \citet{harter:psychological-relevance} uses the expression \textit{psychological relevance} to refer to  the extent to which, in the course of an information access session, an item changes the user's cognitive state with respect to their information need.  
Otherwise relevant items may stop being useful as a user's anomalous state of knowledge changes \cite{belkin:ASK1}.  Moreover, as \citet{harter:variation} notes, because users approach a system from a variety of backgrounds (i.e., states of knowledge),  the same request might find quite different utility from two topically relevant items in the corpus.  Indeed, researchers have  observed variation in utility in controlled experiments \cite{voorhees:variations,voorhees:trec-5} as well as production environments \cite{dou:large-scale-personalization,teevan:potential-for-personalization}.

In the context of information retrieval,  the notion of topical relevance can be interpreted as the potential utility of a document to the user.  A topically relevant document may be psychologically nonrelevant for a number of reasons.
First, a user may already be familiar with a judged relevant item in the ranking, which, in some cases, will, for that user, make the item nonrelevant. For example, in the context of a literature review, a previously-read relevant article may not be useful \cite{bookstein:relevance,boyce:beyond-topicality}.
Second, even if editorial relevance labels are accurate (i.e. all users would consider labeled items as relevant), the \textit{utility} of items may be isolated to a subset of $\relset$.  For example, in the context of decision support, including legal discovery and systematic review, topical relevance is the first step in finding critical information \cite{cooper:research-synthesis}.  In some cases, there will be a single useful `smoking gun' document amongst the larger set of relevant content.  In other cases, a single subset of relevant documents will allow one to `connect the dots.'
In a patent context, \citet{trippe:patent-recall} describe situations where there is risk to missing items that may turn out to be critical to assessing the validity of a patent.
So, while topical relevance is important, it only reflects the \textit{possible usefulness} to the user \cite{sedona:e-discovery-quality}.

Similarly, in the context of recommendation, the notion of topical relevance can be interpreted as the potential utility---but not  the psychological relevance or desirability---of an item to the user.  So, while a particular item may be of interest to a user in general, at any one point in time, it may be undesired for any number of reasons.  First, the context of a user can affect whether a particular item is desired.  For example, in music recommendation, a user may be interested in bluegrass music in many contexts but, while studying, may prefer ambient music.  Second, as with information retrieval, the desirability  of a previously-consumed relevant item may degrade due to order effects.  For example, in music recommendation, consider a  user who listened to a specific relevant song $X$. There are two possible effects.  On the one hand, \textit{satiation}  \cite{leqi:satiation} means that the user may not want to listen to $X$ immediately again, making it no longer relevant to that user.  Or, \textit{obsessive listening} \cite{ward:familiarity-music,conrad:extreme-relistening} means that  the user may want to listen to $X$ over again, making other otherwise relevant songs no longer relevant to that user.

So, from the perspective of relevance, $\relset$ should be considered the union of psychologically relevant items over all possible states of knowledge a user might have when approaching the system.  Indeed, multiple authors describe topical relevance as necessary but not sufficient for psychological relevance \cite{boyce:beyond-topicality,ruthven:trec-relevance,cooper:relevance}.   These papers suggest that rather than seeing a ranking system as acting to directly provide psychologically relevant items, it provides topically relevant items that are candidates to be scanned for psychologically relevant items by the user \cite{boyce:beyond-topicality}.
In light of this discussion, instead of considering any item in $\relset$ as definitely satisfying the user's information need, we consider it as only having a nonzero probability of satisfying the user's information need.

While the concept of diversity \cite{castells:recsys-diversity,santos:diversity-survey}, common to both recommendation and retrieval tasks, intends to support multiple information needs, they do not capture granular, item- or document-level needs.  For example, measuring book recommendation diversity along the language dimension may detect poor performance for a group of users interested in Spanish language books, it will not be effective at detecting poor performance for Spanish language books unfamiliar to a particular user.

Given that binary relevance reflects the \textit{possibility of psychological relevance}, we are interested in considering all users such that the union of their relevance criteria is $\relset$.  We can enumerate all such users over items as  $\users=\relset^+$, the power set of $\relset$ excluding the empty set.  This means that, for a given request, we have $\numusers=|\users|=2^{\numrel}-1$ possible users interested in at least one relevant item (Figure \ref{fig:users}).  This conservative definition of $\users$ captures all possible satisfiable users.

\begin{figure}
\includegraphics[width=0.5\linewidth]{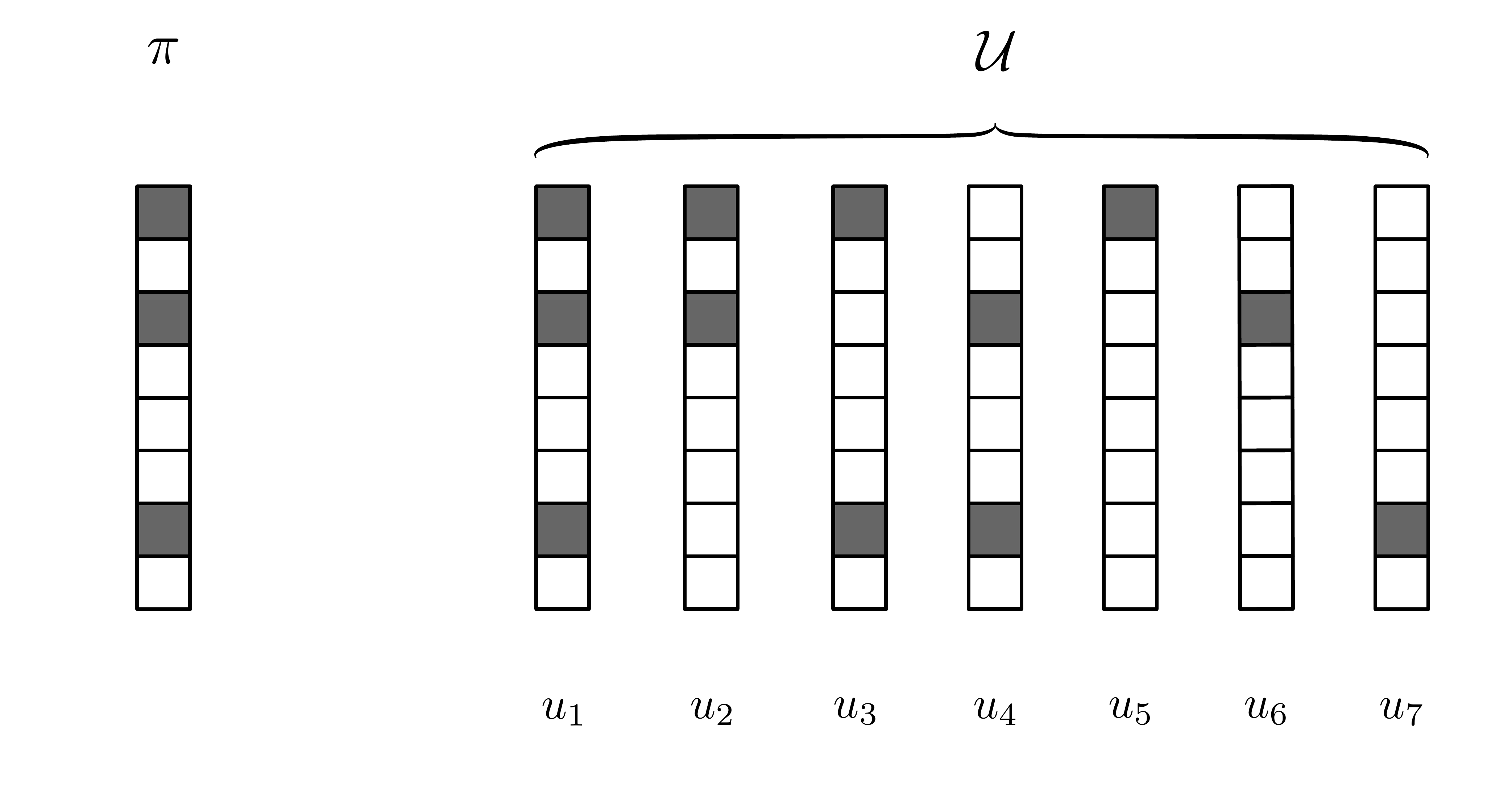}
\caption{Population of possible users $\users$ based on all combinations of relevant items from $\relset$ for a system ranking $\rlx$.  
}\label{fig:users}
\end{figure}

\subsubsection{Robustness Across Possible Information Needs}
\label{sec:robustness:searcher:robustness}
Given a set of possible information needs $\users$ based on $\relset$, we define the robustness of a ranking $\ranking$ as the effectiveness of the ranking for the worst-off user,
\begin{align}
\metricworst_{\metric}(\ranking,\relset)&=\min_{\user\in\users} \metric(\ranking,\user)
\end{align}
This is close to the notion of  robustness proposed by \citet{memarrast:robust-ltr}, who consider a worst-case user  for whom relevant items have a marginal distribution of features that matches the distribution in the full training set.  In comparison, our analysis  considers the \textit{full} set of worst-case situations, including those that do not match the training data.

One problem with this definition of robustness is that, because  $\numusers$  is exponential in $\numrel$, computing the minimum is impractical even for modest $\numrel$.  Fortunately, using the properties of top-heavy recall-level metrics, we can prove that,
\begin{align}
\metricworst_{\metric}(\ranking,\relset)&=\tse(\ranking,\relset)
\end{align}
In other words, the worst-off user is the one associated with $\RPx_{\numrel}$, the lowest-ranked relevant item.  We present a proof in Appendix \ref{app:proofs:wc}.  This result implies that recall-orientation captures the utility of a ranking for the worst-off user.

\subsection{Provider Robustness}
\label{sec:robustness:providers}
Providers are individuals who contribute content to the ranking system's catalog $\docset$.  Given a ranking $\ranking$, we would like to measure the worst-case effectiveness over a population of possible providers.

\subsubsection{Possible Provider Preferences}
Just as with information needs, each ranking consists of exposure of multiple possible providers.
Consider the domains like job applicant ranking systems or dating platforms, where each item in the catalog is associated with an individual person.  We assume that each relevant provider $i\in\relset$ is interested in its cumulative positive exposure in the ranking (Section \ref{sec:preliminaries:metrics:provider}), $\pmetric(\ranking,\relset,\{i\})$.
In the more general case, providers can possibly  be associated with multiple relevant items in $\relset$.  This might occur if a creator contributes multiple items to the catalog (e.g. multiple songs, videos, documents); or, a provider may aggregate content from multiple individual creators (e.g. publishers, labels).  Even if we have metadata attributing groups of items to specific providers or creators, their \textit{preferences} for exposure of those items may be unobserved.  Provider preferences can themselves be complex, covering a broad set of  commercial, artistic, and societal values \cite{hadida:performance-in-the-creative-industries,eikhof:bohemian-entrepreneurs}.

Given the uncertainty and ambiguity over providers and their preferences, as with information needs, we can consider the full set of latent providers and their preferences,  $\providers=\relset^+$ to reflect the set of \textit{possible} provider preferences.

\subsubsection{Robustness Across Possible Provider Preferences}
Just as with users,  we are interested in the utility  of the worst-off provider.  In the simple case where each provider is associated with a single item in $\relset$,  because exposure monotonically decreases with rank position, we know that the worst-off provider will be the one at the lowest rank; this is exactly $\tse(\ranking,\relset)$.  This is similar to earlier provider fairness definition \cite{zhu:rawlsian-cold-start}.  More generally, if, like information needs, we consider $\providers=\relset^+$, we define the worst-off provider  as,
\begin{align}
\metricworst_{\pmetric}(\ranking,\relset)&=\min_{\provider\in\providers} \pmetric(\ranking,\relset,\provider)
\end{align}
Given this definition, we can show that $\tse$ is equal to the utility of the worst-case provider (proof in Appendix \ref{app:proofs:wc}).

Together, the theoretical results in Sections \ref{sec:robustness:searchers} and \ref{sec:robustness:providers} provide  a new interpretation of recall from the perspective of potential subpopulations of content consumers and providers participating on the platform.  Worst-case analysis specifically connects naturally to \citeauthor{zobel:against-recall}'s notion of totality \cite{zobel:against-recall}.

\begin{table}
\caption{Agreement with $\metricworst_{\metric}(\ranking,\relset) < \metricworst_{\metric}(\ranking',\relset)$.  Probability of agreement over 10,000 simulated queries and pairs of random rankings for various corpus sizes.  For each query, we selected $\numrel\sim U(5,50)$ relevant items.  We include the fraction of rankings tied under $\metricworst_{\metric}(\ranking,\relset) = \metricworst_{\metric}(\ranking',\relset)$.      The values for TSE apply to any exposure model.      }\label{tab:wcagreement}
\begin{tabular}{cc|cccccc}
\hline
$\numdocs$ & tied & $\tse$	&   $\recall@1000$  &   $\rprecision$    &	$\ap$   &	$\ndcg$ & random\\
\hline
$10^3$	&	0.012	&	1.000	&	0.000	&	0.285	&	0.541	&	0.535	&0.492\\
$10^4$	&	0.001	&	1.000	&	0.420	&	0.077	&	0.552	&	0.549	&0.497\\
$10^5$	&	0.000	&	1.000	&	0.179	&	0.008	&	0.554	&	0.555	&0.498\\
$10^6$	&	0.000	&	1.000	&	0.026	&	0.001	&	0.547	&	0.554	&0.499\\
\hline
\end{tabular}
\end{table}

\begin{figure}
\centering
\includegraphics[width=0.4\linewidth]{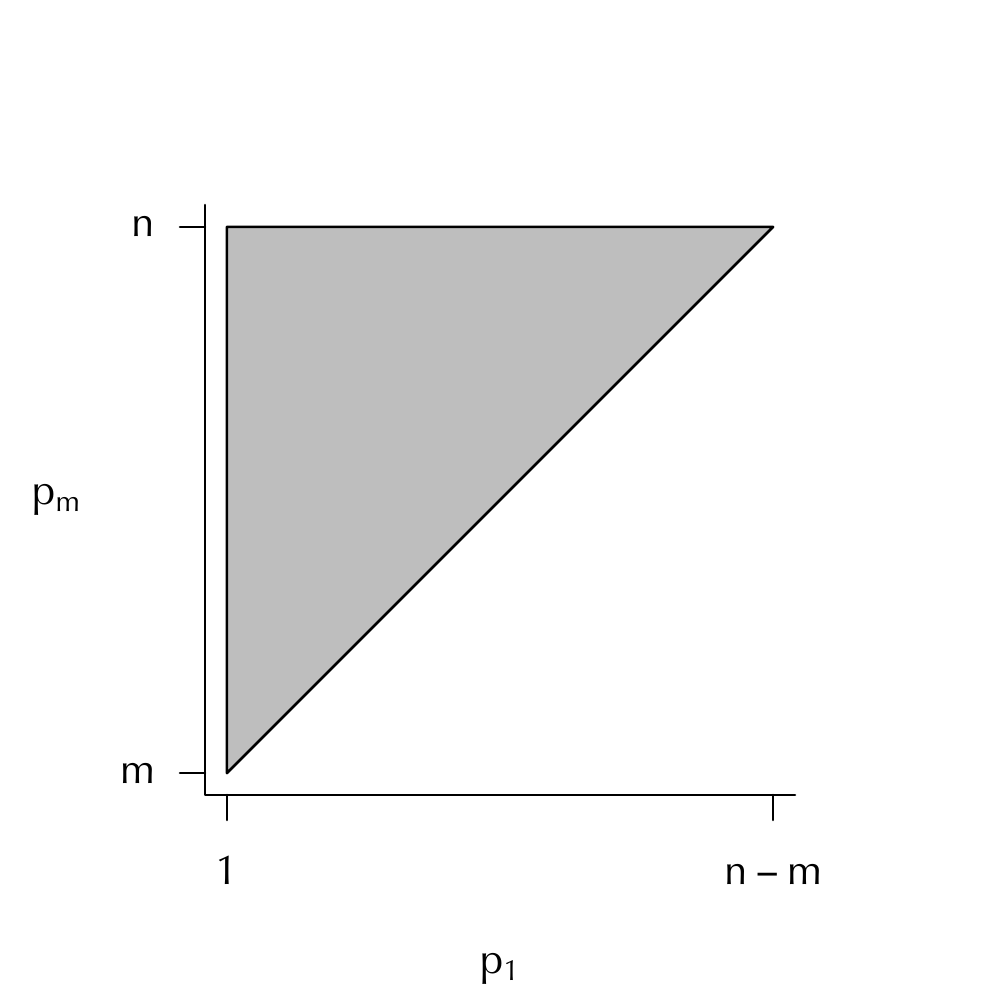}
\caption{Dependence between $\RPx_1$ and $\RPx_{\numrel}$.  The gray region indicates the possible values of each.}\label{fig:robustness:analysis:rpxi-dependence}
\end{figure}

\subsection{Robustness and Existing Metrics}
\label{sec:robustness:analysis}
We have demonstrated that whenever $\metricworst_{\metric}(\ranking,\relset) < \metricworst_{\metric}(\ranking',\relset)$, then $\tse(\ranking,\relset)< \tse(\ranking',\relset)$.  In order to understand the relationship between other metrics and worst-case performance, we simulate 10,000 requests by sampling 10,000 pairs of rankings $\ranking$ and $\ranking'$.  We can then compute the evaluation metric for each ranking and compare $\metric(\ranking,\relset)< \metric(\ranking',\relset)$ with $\metricworst_{\metric}(\ranking,\relset) < \metricworst_{\metric}(\ranking',\relset)$.  We conduct this simulation for $\numdocs\in\{10^3, 10^4, 10^5, 10^6\}$ and present results in  Table \ref{tab:wcagreement}.  We observe that, while $\tse$ has perfect sign agreement with $\metricworst_{\metric}$, other recall-oriented metrics have worse agreement than random, largely because they \textit{only} look at a prefix of $\RPx$.  The sign agreement of $\ap$ and $\ndcg$ is slightly better than random for two reasons.  First,  their aggregation (Equation \ref{eq:metric}) includes $\RPx_{\numrel}$ and will subtly affect the value, despite making a small contribution to the total sum.  Second,  $\RPx_{i}$ values depend on each other because $\rlx$ is a permutation.  In Figure \ref{fig:robustness:analysis:rpxi-dependence}, we compare the possible joint values of $\RPx_1$ and $\RPx_{\numrel}$.  This means that we should expect there to be some dependence between purely precision-oriented metrics (e.g. $\rr$) and purely recall-oriented metrics (e.g. $\tse$).  That said, $\ap$ and $\ndcg$ agree less than $\tse$ because their aggregation includes the positions of relevant items above $\RPx_{\numrel}$.  In whole, this result suggests that, even compared to traditional recall metrics, $\tse$ is better able to measure the robustness of a ranking.
\section{Lexicographic Evaluation}
\label{sec:leximin}
In Section \ref{sec:tse}, we defined recall-orientation from the perspective of users interested in finding the totality of relevant items and then proposed a new metric, $\tse$, based on this interpretation.  In Section \ref{sec:robustness}, we demonstrated how $\tse$ measures the worst-case utility for multiple definitions of users and providers, connecting it to notions of robustness and fairness, through Rawls' difference principle.  In this section, we will further develop the fairness perspective by combining recent work in preference-based evaluation with classic work in social choice theory, improving the nuance in worst-case analysis and allowing it to be useful as an evaluation tool.  We will begin by discussing the practical limitations of $\tse$ for evaluation (Section \ref{sec:leximin:sensitivity}) before developing a preference-based evaluation method derived from social choice theory that generalizes $\tse$ and improves its practical use (\ref{sec:leximin:leximin}).

\subsection{Low Sensitivity of Total Search Efficiency}
\label{sec:leximin:sensitivity}
Although measuring worst-case performance and, as a result, Rawlsian fairness, $\tse$ may not satisfy our desiderata for an evaluation method (Section \ref{sec:desiderata}).  To understand why,
consider two rankings  $\rlx$ and $\rly$ for the same request.  When comparing a pair of systems, we are interested in defining a  preference relation $\rlx\succ\rly$.   The worst-case preference $\rlx \tsepref \rly$, is defined as,
\begin{align}
\rlx \tsepref \rly &\leftrightarrow \min_{\user\in\users}\metric(\rlx,\user) > \min_{\user\in\users}\metric(\rly,\user)
\end{align}
We know from Section \ref{sec:robustness:searcher:robustness} that this can be efficiently computed as $\tse(\rlx) > \tse(\rly)$.  Unfortunately, in situations where the worst-off user is tied (i.e. $\min_{\user\in\users}\metric(\rlx,\user) = \min_{\user\in\users}\metric(\rly,\user)$), we cannot derive a preference between $\rlx$ and $\rly$.  Because we assume that unretrieved items occur at the bottom of the ranking and because most runs do not return all of the relevant items, an evaluation based on $\tse$ will observe many ties between $\rlx$ and $\rly$, limiting its effectiveness at distinguishing runs and use for system development \cite{canamares:numties}.  In Figure \ref{fig:numties-over-cutoff}, we simulated random pairs of rankings of 250,000 items and computed the number of metric ties for  a variety of retrieval cutoffs.  We observe that $\recallK$ and $\rprecision$ both have a large number of ties across all retrieval depths.  Despite having few ties for very deep retrievals, $\tse$ quickly observes many ties.  This is due to our conservative permutation imputation method (Section \ref{sec:preliminaries:imputation}).  We can compare all of these measures to $\ap$, which exhibits high sensitivity across most cutoffs. In this section, we will improve the sensitivity of $\tse$ to be comparable to $\ap$ using methods from social choice theory.

\begin{figure}
\centering
\includegraphics[width=0.5\linewidth]{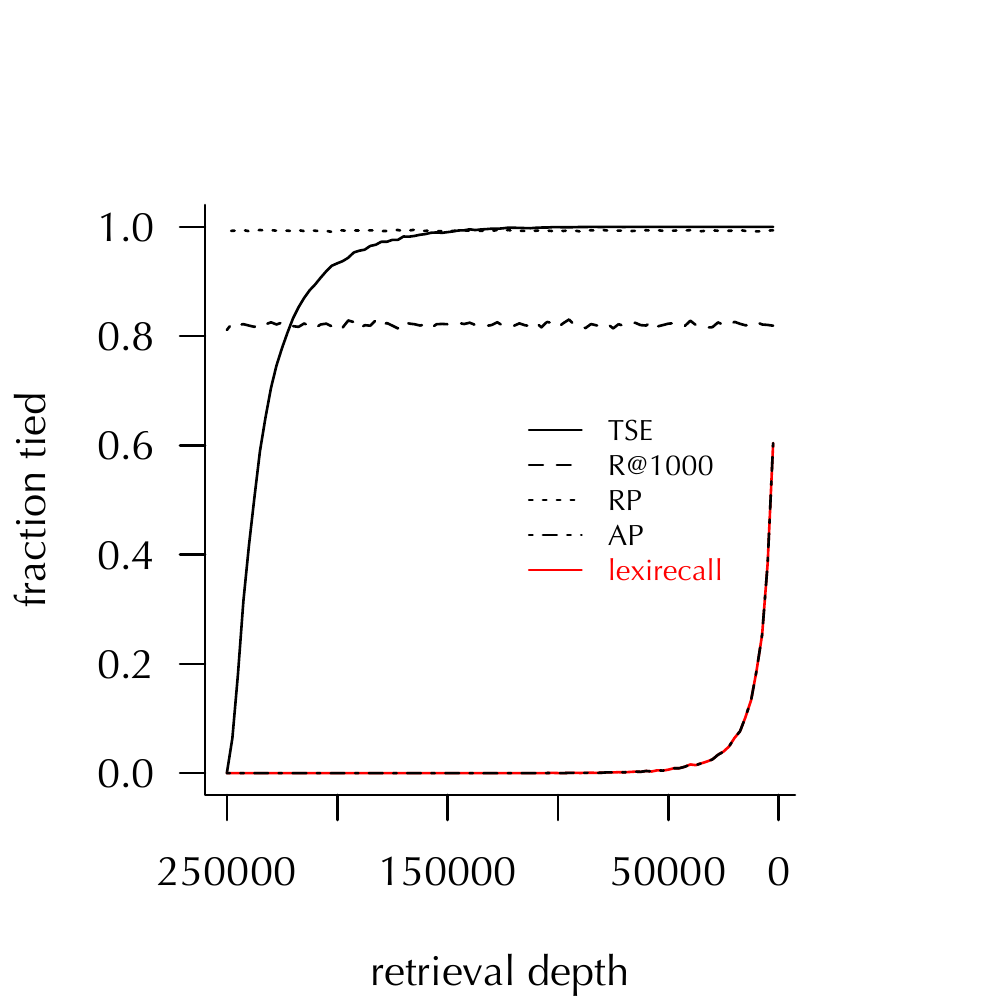}
\caption{Fraction  tied rankings as retrieval depth $\topk$ decreases for $\numdocs=250,000$ and 25 requests.  For each query, we selected $\numrel\sim U(5,50)$ relevant items.  This figure best rendered in color.}\label{fig:numties-over-cutoff}
\end{figure}

\subsection{Lexicographic Recall}
\label{sec:leximin:leximin}
We can address the lack of sensitivity of $\tse$ by turning to recent work on \textit{preference-based evaluation} \cite{diaz:rpp,diaz:lexiprecision,clarke:preference-tutorial-2023}.   As mentioned earlier, in many evaluation scenarios, our objective is to compute $\rlx\succ\rly$.  In \textit{metric-based evaluation}, we compute this preference by first computing the value of an evaluation metric for each ranking.  That is,
\begin{align}
\metric(\rlx)>\metric(\rly)\implies \rlx\succ\rly
\end{align}
Preference-based evaluation \cite{diaz:rpp,diaz:lexiprecision} is a quantitative  evaluation method that directly computes the preference between two rankings $\rlx$ and $\rly$ without first computing an evaluation metric.
\citet{diaz:rpp} and \citet{diaz:lexiprecision} show that preference-based evaluation can achieve much higher statistical sensitivity compared to standard metric-based evaluation.

\begin{figure}
\centering
\includegraphics[scale=0.3]{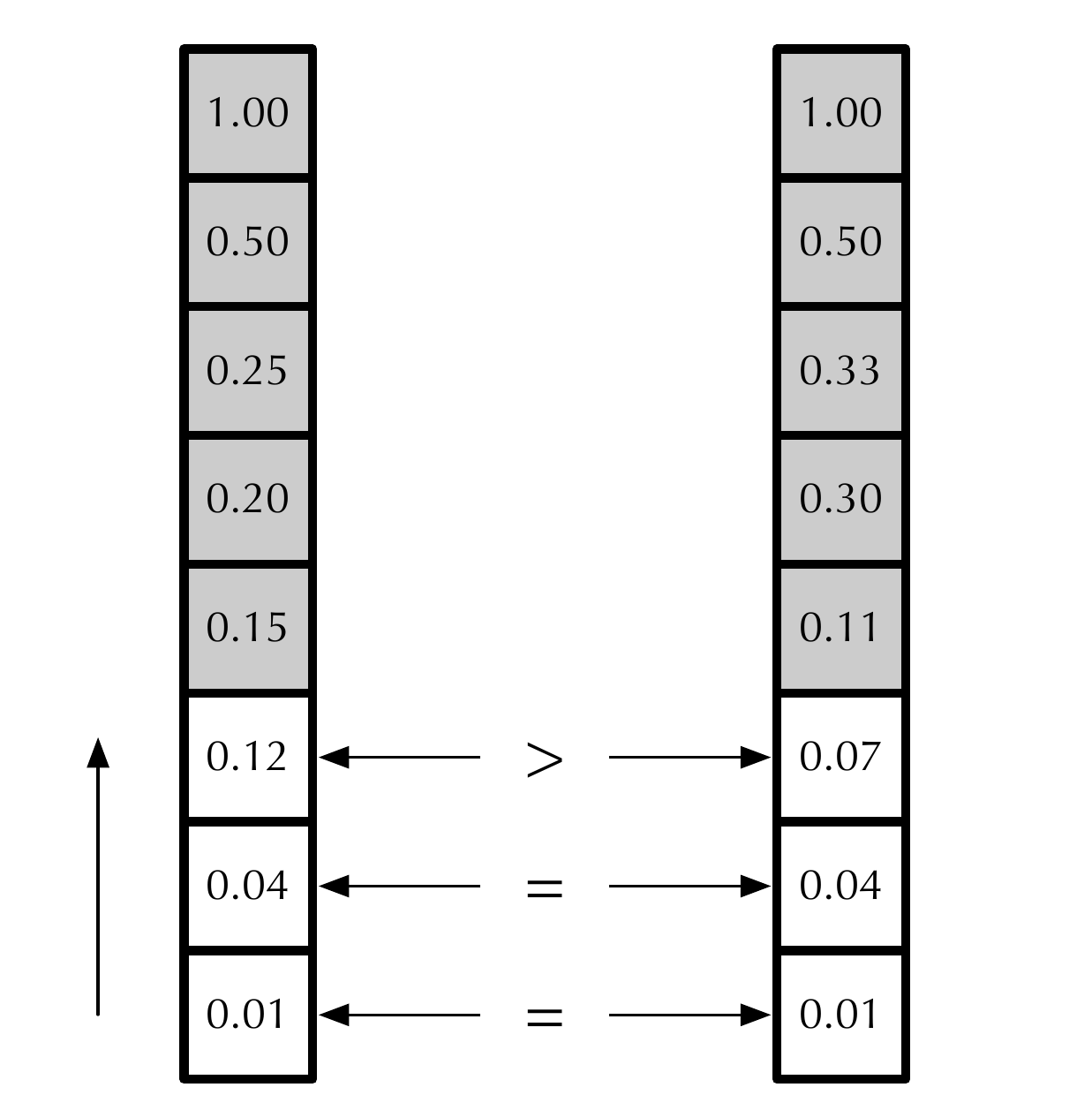}
\caption{Lexicographic relation between two sorted vectors with the same dimension.  In leximin ordering, we compare pairs of elements from the bottom up and define the preference between vectors based on the first difference in values.  In this example, the last element in both vectors is $0.01$ and the second to last element is $0.04$ in both.  The third to last elements differ and we say that we prefer the ranking on the left because $0.12>0.07$.}\label{fig:leximin}
\end{figure}
We can convert $\tse$ into a much more sensitive preference-based evaluation by returning to our discussion of fairness and robustness.  In the context of social choice theory, the number of ties in Rawlsian fairness can be addressed by adopting a recursive procedure known as \textit{leximin}, originally proposed by \citet{sen:collective-choice-and-social-welfare}.
Consider the problem of distributing a resource to $\numrel$ individuals, in our case users or providers.  Further consider two different allocations $x$ and $y$ represented by two $\numrel\times 1$ vectors where $x_i$ is the amount of resource allocated to the $i$th highest ranked individual, similarly for $y$.  In other words, $x$ and $y$ are the resource allocations in decreasing order.  Given these two allocations, we begin by inspecting the allocation to the lowest-ranked items, as we did with Rawlsian fairness.  If $x_\numrel>y_\numrel$, then  the bottom-ranked item of $x$ is better off and $x\succ y$; if $x_\numrel<y_\numrel$, then the bottom-ranked item of $y$ is better off and $x\prec y$; otherwise, the bottom-ranked items are equally well-off and we inspect the position of the next lowest item, $\numrel-1$.  If $x_{\numrel-1}>y_{\numrel-1}$, then we say $x\succ y$; if $x_{\numrel-1}>y_{\numrel-1}$, then we say $x\prec y$; otherwise, we inspect the position of the next lowest item $\numrel-2$.  We continue this procedure until we return a preference or, if we exhaust all $\numrel$ positions, we say that we are indifferent between the two rankings.  Formally,
\begin{align}
x \leximinpref y \leftrightarrow x_i > y_i
\end{align}
where $i=\max\{j\in[1\isep\numrel]: x_i\neq y_j\}$.  We show an example of this process in Figure \ref{fig:leximin}.  This way of comparing rankings generates a total lexicographic ordering over vectors of the same dimensionality and is often used in the fairness literature to address ties when adopting Rawls' difference principle.\footnote{For further discussion of the connection between Rawls' difference principle and leximin, see \cite{sen:rawlsian-axiomatics,sen:weights-and-measures,kolm:justice-and-equity,daspremont:leximax,moulin:fair-division}.}

We can use leximin to define the \textit{lexicographic recall} or lexirecall preference between  $\rlx$ and $\rly$.  For a fixed request and ranking $\rlx$, let $\UPx$ be the $\numusers\times 1$ vector of metric values for $\users$ sorted in decreasing order.  In other words, $\UPx_i=\metric(\ranking,\user^i)$, where $\user^i$ is the user with the $i$th-highest metric value.  We define $\UPy$ equivalently for $\rly$.  Lexicographic recall is defined as,
\begin{align}
\rlx\lexirecallpref\rly\leftrightarrow\UPx \leximinpref \UPy
\end{align}
Using lexirecall, we can define a ordering over unique rankings, which addresses the ties observed in  $\tse$.

Although operating over $\users$ grounds our evaluation in possible user information needs, scoring and ranking $\numusers$ subsets of $\relset$ can be computationally intractable.  So, just as we demonstrated that we only need to inspect the position of the last relevant item to compute $\tsepref$, we can demonstrate that we only need to compare the rank positions of the relevant items to compute $\rlx\lexirecallpref\rly$  (proof in Appendix \ref{app:proofs:lexirecall}) and, therefore,
\begin{align}
\rlx\lexirecallpref\rly\leftrightarrow\RPx_i < \RPy_i
\end{align}
where $i=\max\{j\in[1\isep\numrel]: \RPx_i\neq \RPy_j\}$.  Moreover, although defined as a user-oriented metric, we can also demonstrate that this results in provider leximin as well (proof in Appendix \ref{app:proofs:lexirecall}),

We can better understand lexirecall by returning to our discussion of robustness in Section \ref{sec:robustness}.  While $\tse$ provided one way to distinguish robustness of two rankings, it is very insensitive and unlikely to be of practical use.  In order to address this, we adopted leximin, a well-studied method for addressing insensitivity in applying Rawls' difference principle.  That said, lexirecall is still a measure of robustness. A lexirecall preference is simply claiming that one ranking is more fair or more robust than another.  Over a population of requests, then, we can compute the probability that one system's rankings are fairer or more robust than another.

\subsection{Sensitivity of Lexicographic Recall}
\label{sec:leximin:numties}
We can demonstrate the higher sensitivity of lexirecall  through simulation and analysis of the total space of permutations. In Figure \ref{fig:numties-over-cutoff}, we demonstrated that, for a set of random paired rankings,  the number of ties was high for traditional metrics and grew quickly for $\tse$ as the cutoff $\topk$ decreased.  Figure \ref{fig:numties-over-cutoff} also includes lexirecall, which exhibits substantially fewer ties than traditional metrics and $\tse$.
Independent of simulation, we are also interested in the probability of a tie over for randomly sampled pairs of complete rankings $\rlx,\rly\in\rankings$ (i.e. $\topk=\numdocs$).  We can derive these probabilities (see Appendix \ref{app:proofs:numties}) as functions of $\numrel$, $\numdocs$, and any parameters of the metric (e.g. $k$),
\begin{align*}
\prob(\rlx=_{\tse}\rly) &={\numdocs\choose \numrel}^{-2}\sum_{i=\numrel}^\numdocs{i-1\choose\numrel-1}^2\\
\prob(\rlx=_{\recall_k}\rly) &={\numdocs\choose \numrel}^{-2}\sum_{i=0}^\numrel{{k\choose i}^2{\numdocs-k \choose \numrel-i}^2}\\
\prob(\rlx=_{\rprecision}\rly) &={\numdocs\choose \numrel}^{-2}\sum_{i=0}^\numrel{{\numrel\choose i}^2{\numdocs-\numrel \choose \numrel-i}^2}\\
\prob(\rlx\lexirecalleq\rly)&=\frac{\numrel!(\numdocs-\numrel)!}{\numdocs!}
\end{align*}
To better understand the relationship between these probabilities, we display probabilities of ties for several retrieval depths in Table \ref{tab:numtiesanalytic}.  Although Figure \ref{fig:numties-over-cutoff} demonstrated that $\tse$ exhibited poor sensitivity when $\topk<\numdocs$, it is much more sensitive for complete rankings, in part because random complete rankings are less likely to share $\RPx_{\numrel}$ than imputed rankings.  Both traditional recall metrics exhibit many more ties, especially as the retrieval depth grows.  Amongst methods, lexirecall and $\ap$ demonstrate the few or no ties across corpus sizes.  Table \ref{tab:numtiesanalytic:numrel} presents the same results for varying numbers of relevant items.  The results are consistent  with Table \ref{tab:numtiesanalytic}, where traditional recall metrics exhibit a large number of ties, which decreases slowly with the number of relevant items.  By comparison, $\tse$, lexirecall, and $\ap$ show higher sensitivity with a negligible number of ties.

\begin{table}
\caption{Probability of a metric tie $\prob(\rlx=_{\mu}\rly)$ for randomly sampled permutations and $\numrel=10$.  }\label{tab:numtiesanalytic}
\begin{tabular}{c|ccccc}
\hline
$\numdocs$&	$\tse$&	$\recall_{1000}$&	$\rprecision$&	$\ap$ & $\lexirecall$\\
\hline
$10^3$	&	0.005	&	1.000	&	0.825	&	0.000	&	0.000	\\
$10^4$	&	0.001	&	0.313	&	0.980	&	0.000	&	0.000	\\
$10^5$	&	0.000	&	0.826	&	0.998	&	0.000	&	0.000	\\
$10^6$	&	0.000	&	0.980	&	1.000	&	0.000	&	0.000	\\
\hline
\end{tabular}
\end{table}

\begin{table}
\caption{Probability of a metric tie $\prob(\rlx=_{\mu}\rly)$ for randomly sampled permutations and $\numdocs=10^6$.  }\label{tab:numtiesanalytic:numrel}
\begin{tabular}{c|ccccc}
\hline
$\numrel$&	$\tse$&	$\recall_{1000}$&	$\rprecision$&	$\ap$ & $\lexirecall$\\
\hline
1	&	0.000	&	0.998	&	1.000	&	0.000	&	0.000	\\
5	&	0.000	&	0.990	&	1.000	&	0.000	&	0.000	\\
10	&	0.000	&	0.981	&	1.000	&	0.000	&	0.000	\\
25	&	0.000	&	0.952	&	0.999	&	0.000	&	0.000	\\
50	&	0.000	&	0.907	&	0.995	&	0.000	&	0.000	\\
\hline
\end{tabular}
\end{table}
\section{Empirical Analysis}
In this section, we  empirically assess lexirecall with respect to the associated empirical desiderata from Section \ref{sec:desiderata}:
\begin{inlinelist}
\item correlation with existing metrics,
\item ability to distinguish between rankings,
\item ability to distinguish between systems, and
\item robustness to missing labels.
\end{inlinelist}

\subsection{Methods and Materials}

\subsubsection{Data}
We evaluated  ranking system runs across a variety of conditions (Table \ref{tab:data}).  For each dataset, we have a set of evaluation requests and associated relevance judgments.  In addition, each dataset involved a number of competing systems, each of which produced a ranking for every request.
The movielens, libraryThing, and beerAdvocate datasets were downloaded from a public repository with splits and processing described in prior work \cite{valcarce:recsys-ranking-metrics-journal}.\footnote{\url{https://github.com/dvalcarce/evalMetrics}}   Amazon data sets were prepared using LensKit 0.14.4 \citep{lkpy} to train a variety of recommendation models on 5-core global split datasets from the 2023 Amazon Reviews data \citep{hou2024bridging} and generate top-K recommendation runs for the users in the test splits, using an early version of the LensKit Codex\footnote{\url{https://codex.lenskit.org}; runs will be provided as supplemental resources for this paper when published.}.  For  recommender systems datasets, consistent with \cite{valcarce:recsys-ranking-metrics-journal}, we converted graded judgments to binary labels by considering any rating below 4 as nonrelevant and otherwise relevant for these datasets.
All retrieval datasets were downloaded from NIST.
In order to analyze results for different ranking depths, we categorized  datasets as  deep ($\topk>1000$), standard ($\topk\in(100\isep1000]$), or shallow ($\topk\leq100$).

\begin{table}[t]
\caption{Datasets used in empirical analysis.  Runs submitted to the associated TREC track or recommendation task.  Datasets are labeled according to the depth of runs: `deep' ($\topk>1000$), `standard' ($\topk\in(100\isep1000]$), `shallow' ($\topk\leq100$).}\label{tab:data}
{
\begin{tabular}{lccccc}
\hline
&   requests    &   runs    &   rel/request &   docs/request    &   label\\
\hline
\textbf{recommendation}\\
amzn-cds-vinyl-100	&	6875	&	10	&	2.76	&	100	&	shallow	\\
amzn-cds-vinyl-1000	&	6875	&	10	&	2.76	&	1000	&	standard	\\
amzn-cds-vinyl-2000	&	6875	&	10	&	2.76	&	2000	&	deep	\\
amzn-musical-instruments-100	&	11648	&	10	&	3.08	&	100	&	shallow	\\
amzn-musical-instruments-1000	&	11648	&	10	&	3.08	&	1000	&	standard	\\
amzn-musical-instruments-2000	&	11648	&	10	&	3.08	&	2000	&	deep	\\
amzn-software-100	&	6345	&	10	&	2.28	&	100	&	shallow	\\
amzn-software-1000	&	6345	&	10	&	2.28	&	1000	&	standard	\\
amzn-software-2000	&	6345	&	10	&	2.28	&	2000	&	deep	\\
amzn-video-games-100	&	11554	&	10	&	3.08	&	100	&	shallow	\\
amzn-video-games-1000	&	11554	&	10	&	3.08	&	1000	&	standard	\\
amzn-video-games-2000	&	11554	&	10	&	3.08	&	2000	&	deep	\\
beerAdvocate	&	17564	&	21	&	13.66	&	99.39	&	shallow	\\
libraryThing	&	7227	&	21	&	13.15	&	100.00	&	shallow	\\
movielens	&	6005	&	21	&	18.87	&	100.00	&	shallow	\\
\\
\textbf{retrieval}\\
core (2017)	&	50	&	75	&	180.04	&	8853.11	&	deep	\\
core (2018)	&	50	&	72	&	78.96	&	7102.61	&	deep	\\
deep-docs (2019)	&	43	&	38	&	153.42	&	623.77	&	standard	\\
deep-docs (2020)	&	45	&	64	&	39.27	&	99.55	&	shallow	\\
deep-docs (2021)	&	57	&	66	&	189.63	&	98.83	&	shallow	\\
deep-docs (2022)	&	76	&	42	&	1245.62	&	98.86	&	shallow	\\
deep-docs (2023)	&	82	&	5	&	75.10	&	100.00	&	shallow	\\
deep-pass (2019)	&	43	&	37	&	95.40	&	892.51	&	standard	\\
deep-pass (2020)	&	54	&	59	&	66.78	&	978.01	&	standard	\\
deep-pass (2021)	&	53	&	63	&	191.96	&	99.95	&	shallow	\\
deep-pass (2022)	&	76	&	100	&	628.145	&	97.50	&	shallow	\\
deep-pass (2023)	&	82	&	35	&	49.87	&	99.90	&	shallow	\\
legal (2006)	&	39	&	34	&	110.85	&	4835.07	&	deep	\\
legal (2007)	&	43	&	68	&	101.023	&	22240.30	&	deep	\\
robust (2004)	&	249	&	110	&	69.93	&	913.82	&	standard	\\
web (2009)	&	50	&	48	&	129.98	&	925.31	&	standard	\\
web (2010)	&	48	&	32	&	187.63	&	7013.21	&	deep	\\
web (2011)	&	50	&	61	&	167.56	&	8325.07	&	deep	\\
web (2012)	&	50	&	48	&	187.36	&	6719.53	&	deep	\\
web (2013)	&	50	&	61	&	182.42	&	7174.38	&	deep	\\
web (2014)	&	50	&	30	&	212.58	&	6313.98	&	deep	\\
\hline
\end{tabular}
}
\end{table}

\subsubsection{Evaluation Methods}
We computed lexirecall using pessimistic imputation.
We compare $\lexirecall$ with two traditional recall metrics ($\recallK$ and $\rprecision$) and two metrics that combine recall and precision ($\ap$ and $\ndcg$).  Definitions for metrics can be found in Section \ref{sec:preliminaries}.
An implementation can be found at \url{https://github.com/diazf/pref_eval}.

\subsection{Results}
\label{sec:results}
\subsubsection{Agreement with Existing Metrics}
\label{sec:results:agreement}
To understand the similarity of lexirecall and traditional metrics, we measured its preference agreement with traditional metrics.  Specifically,  given an observed metric difference, $\metric(\rlx)\neq\metric(\rly)$, for a traditional metric in our datasets, we computed how often lexirecall agreed with the ordering of $\rlx$ and $\rly$,
\begin{align*}
\frac{\sum_{\rlx,\rly\in\rldata} \ident\left(\rlx\succ_{\metric}\rly \land \rlx\lexirecallpref\rly\right)}{\sum_{\rlx,\rly\in\rldata}\ident\left(\rlx\succ_{\metric}\rly\right)}
\end{align*}
where $\rldata$ is the set of rankings in our dataset.  We present results in Figure \ref{fig:agreement}.  For reference, we include high precision metrics $\rr$ and $\ndcgTen$, which expectedly have the weakest agreement with lexirecall across all retrieval depths.  Similarly, across all depths, we observed highest sign agreement with $\recallK$, indicating an alignment between lexirecall and traditional notions of recall.  Note that in the standard and shallow conditions, where $\topk\leq1000$, if there is a difference in $\recallK$, then there is a difference in lexirecall due to pessimistic imputation; the converse is not true since lexirecall can distinguish rankings that are tied under $\recallK$.  The agreement with $\ndcg$ increases to match that of $\recallK$ with increased depth, perhaps due to the weaker position discounting in NDCG and higher likelihood of including a value based on the lowest ranked relevant item as depth increases (see Section \ref{sec:robustness:analysis}).  Both $\rprecision$ and $\ap$ show comparable agreement higher relative to $\rr$ and $\ndcgTen$.

\begin{figure}
\centering
\begin{subfigure}[b]{\linewidth}
\centering
\includegraphics[width=0.95\linewidth]{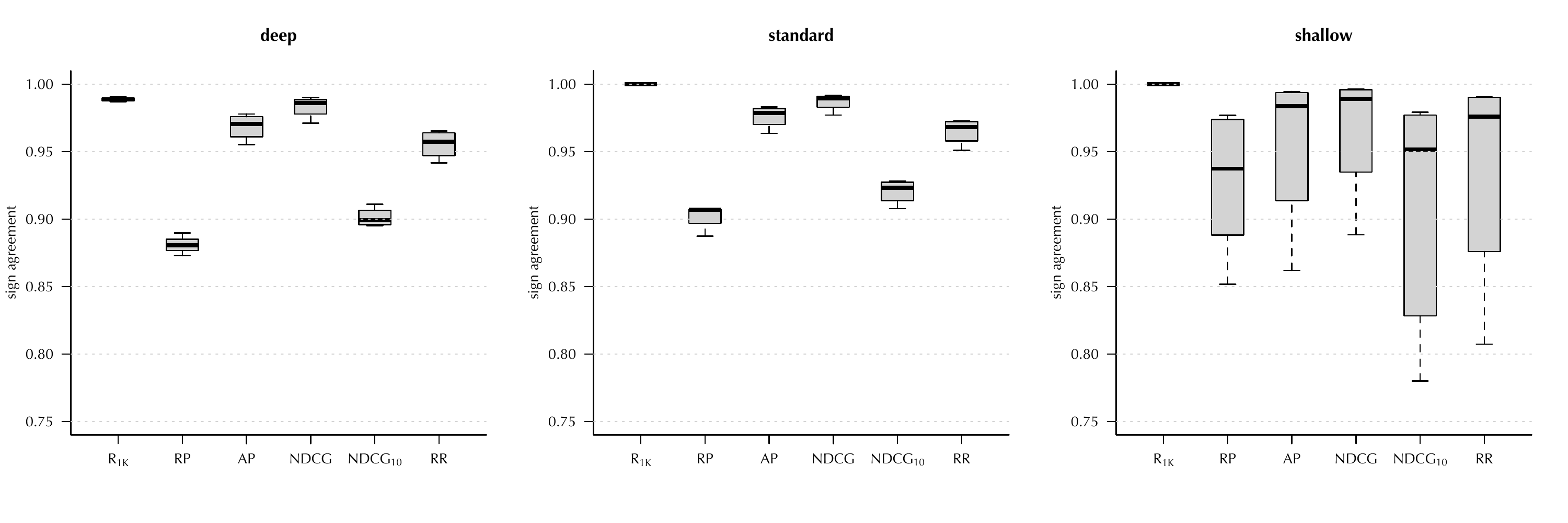}
\caption{Recommendation}
\end{subfigure}
\begin{subfigure}[b]{\linewidth}
\centering
\includegraphics[width=0.95\linewidth]{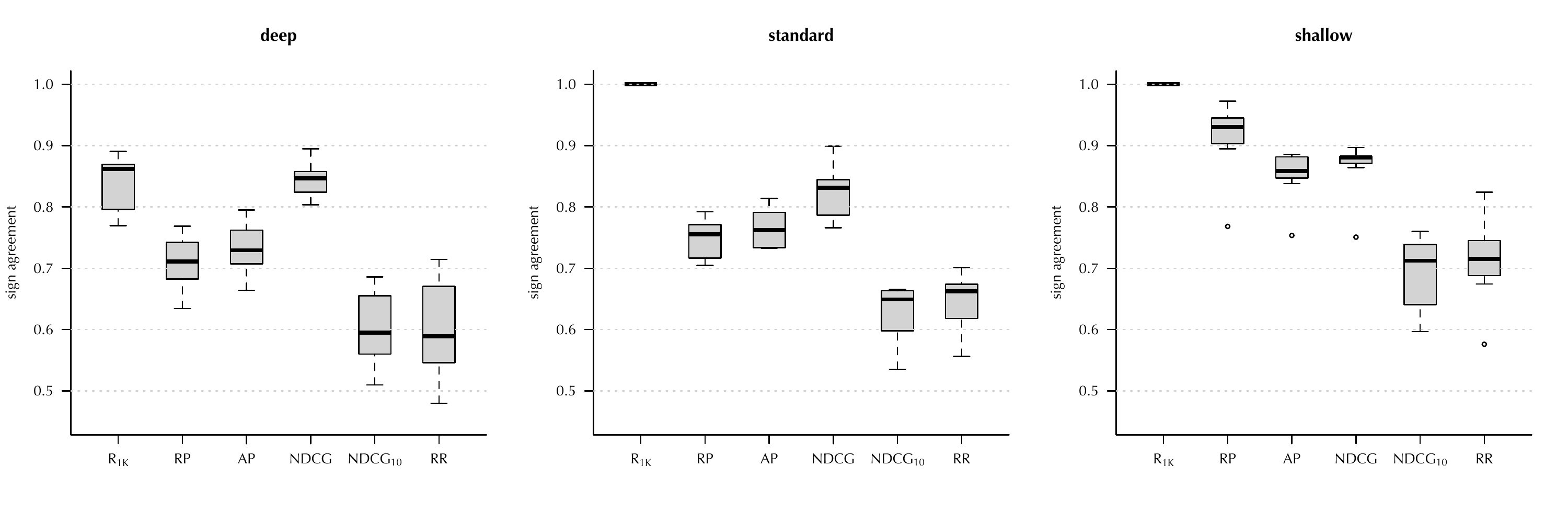}
\caption{Retrieval}
\end{subfigure}
\caption{Agreement between lexirecall and traditional metrics over rankings $\rlx,\rly$ in our datasets.  Fraction of ranking pairs  where the lexirecall preference agrees with the sign of the metric difference.  }\label{fig:agreement}
\end{figure}

\subsubsection{Detection of Difference Between Rankings}
\label{sec:results:numties}
In Section \ref{sec:leximin:numties}, we observed that, for  pairs of rankings $\rlx,\rly\in\rankings$ sampled \textit{uniformly at random}, lexirecall  resulted in fewer ties than $\rprecision$ and $\recallK$.  Figure \ref{fig:numties} presents the empirical fraction of ties when sampling from rankings in our dataset (i.e. $\rldata$).  First, consider  traditional recall metrics $\recallK$ and $\rprecision$.  The empirical fraction of ties is substantially lower than suggested by Figure \ref{fig:numties-over-cutoff} (different $\topk$) and Table \ref{tab:numtiesanalytic} (different $\numdocs$).  This can be explained by the concentration of relevant items in the top positions $\rldata$ compared to $\rlset$, resulting in fewer ties.  That said, the fraction of ties for both of these metrics is substantially higher than observed for lexirecall, something consistent with results in Section \ref{sec:robustness:analysis}.  Although $\ap$ and $\ndcg$ capture both precision and recall valance, we can see that lexirecall is comparable in fraction of ties.

Although the general trends are consistent across recommendation and retrieval contexts, recommendation metrics have substantially higher variance.  This results from the sparser set of labels in recommendation, where the total number of unique vectors, $\ndocs\choose\numrel$, is lower and therefore there is a higher probability of ties.

\begin{figure}
\centering
\begin{subfigure}[b]{\linewidth}
\includegraphics[width=0.95\linewidth]{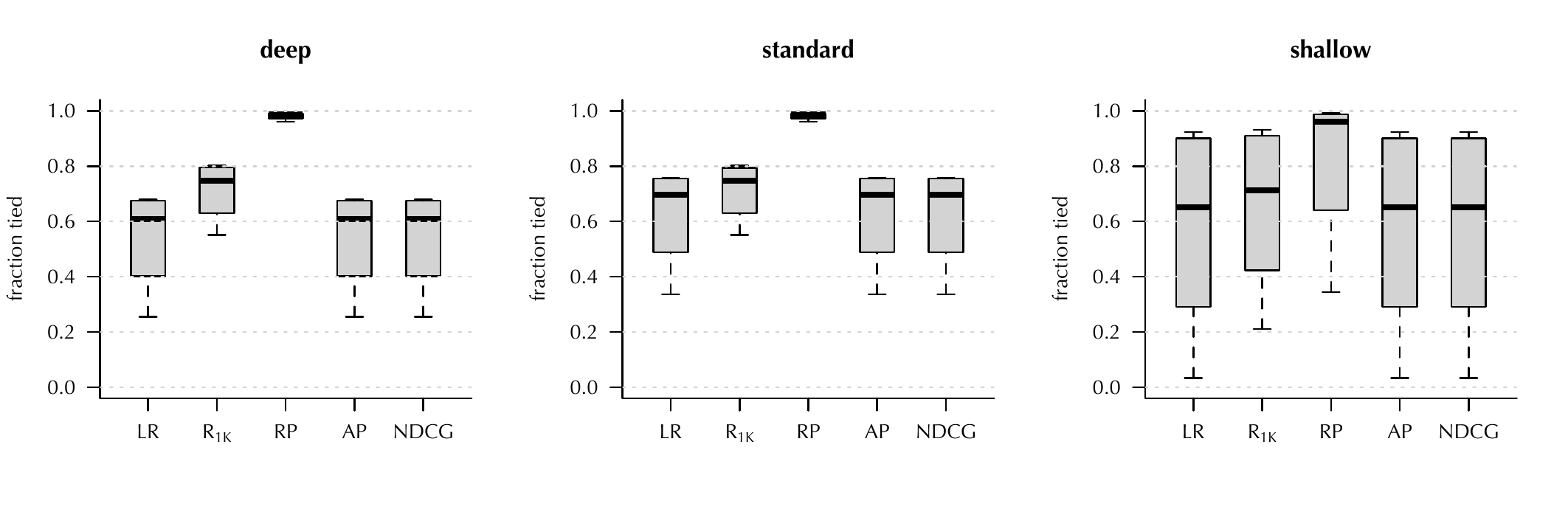}
\caption{Recommendation}\label{fig:numties:uniform:recsys}
\end{subfigure}
\begin{subfigure}[b]{\linewidth}
\centering
\includegraphics[width=0.95\linewidth]{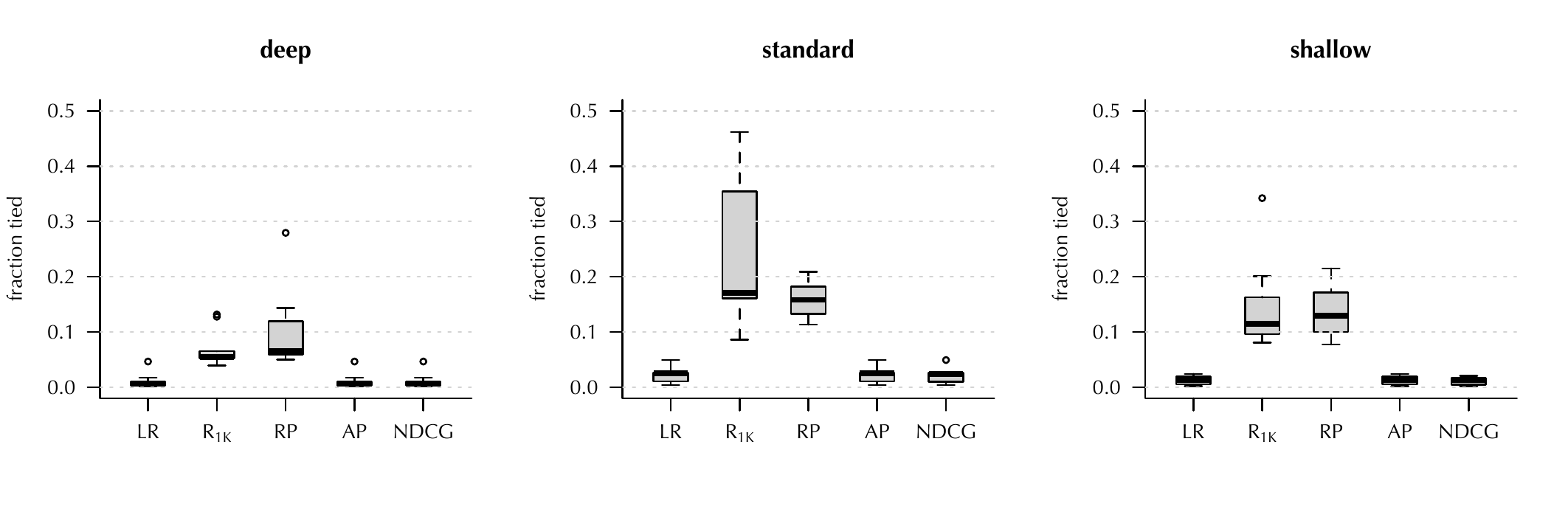}
\caption{Retrieval}\label{fig:numties:uniform:retrieval}
\end{subfigure}
\caption{Fraction of tied comparisons of rankings $\rlx,\rly$ in recommendation and retrieval  datasets.}\label{fig:numties}
\end{figure}
\subsubsection{Statistical Sensitivity}
\label{sec:results:sensitivity}
We are also interested in the ability of lexirecall to detect statistical differences between pairs of \textit{runs} (i.e. sets of rankings generated by a single system for a shared set of queries).  To do so, we adopted Sakai's method of measuring the discriminative power of a metric \cite{sakai:metrics}.  This approach measures the fraction of pairs of systems that the method detects statistical differences with $p<0.05$.  We use two methods to compute $p$-values.  In the first, we compute a standard statistical test and, correcting for multiple comparisons, measure the fraction of $p$-values below $0.05$.  For lexirecall, we adopt a binomial test since we have binary outcomes.  For other metrics, we adopt a Student's $t$-test, as recommended in the literature \cite{smucker:tests}.  We also conducted experiments using incorrect-but-consistent statistical tests with similar outcomes. In order to correct for multiple comparisons for all tests, we use the conservative Holm-Bonferroni method \cite{boytsov:multiple-tests}.  Our second method of computing $p$-values  uses Tukey's honestly significant difference (HSD) test as proposed by   \citet{carterette:multiple-testing}.  This method is considered a more appropriate approach to addressing multiple comparisons compared to our first approach. The goal of this analysis is to understand the statistical sensitivity of lexirecall compared to other recall-oriented metrics, while presenting non-recall metrics for reference.

We present the results of this analysis in Figures \ref{fig:metric-sensitivity:standard} and \ref{fig:metric-sensitivity:hsd}.   When using standard tests (Figure \ref{fig:metric-sensitivity:standard}),   lexirecall is slightly better at detecting significant differences compared to existing recall metrics at deeper retrievals.  We can refine this analysis by inspecting the HSD results (Figure \ref{fig:metric-sensitivity:hsd}).  In this case,  the sensitivity of lexirecall manifests more strongly, clearly more discriminative than existing recall metrics for deep retrievals, although losing this power as retrieval depth decreases.  This is consistent with previous observations for preference-based evaluation \cite{diaz:rpp,diaz:lexiprecision}.

\begin{figure}
\begin{subfigure}[b]{\linewidth}
\centering
\includegraphics[width=0.95\linewidth]{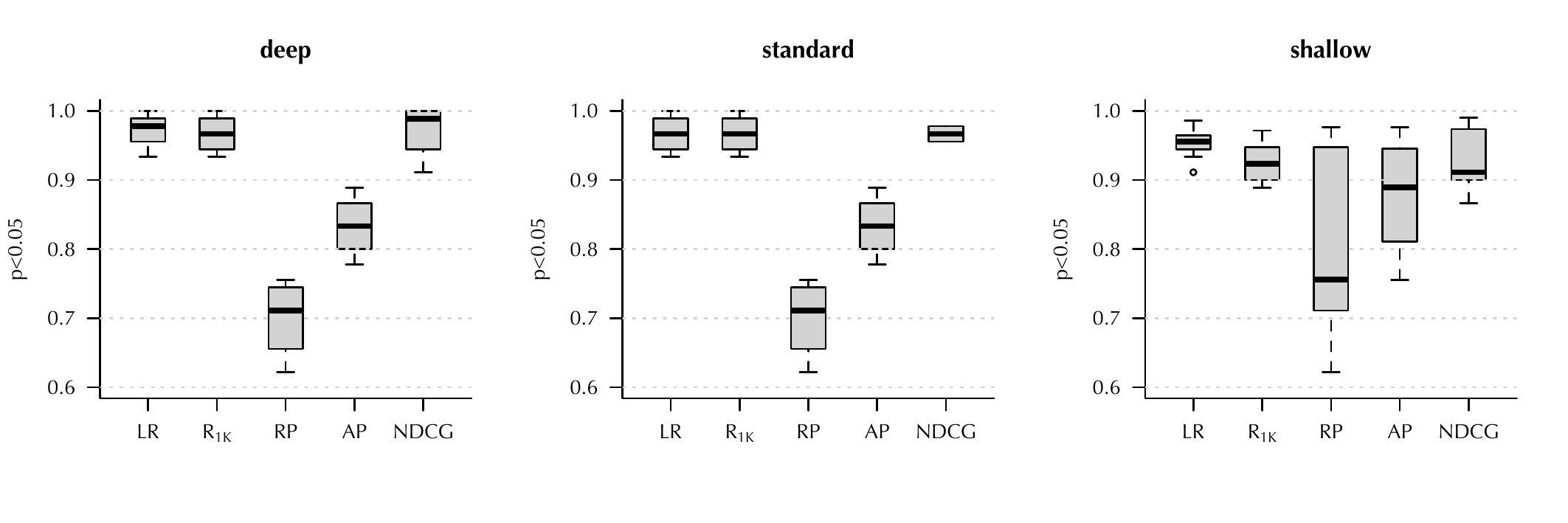}
\caption{Recommendation}\label{fig:metric-sensitivity:standard:recsys}
\end{subfigure}
\begin{subfigure}[b]{\linewidth}
\centering
\includegraphics[width=0.95\linewidth]{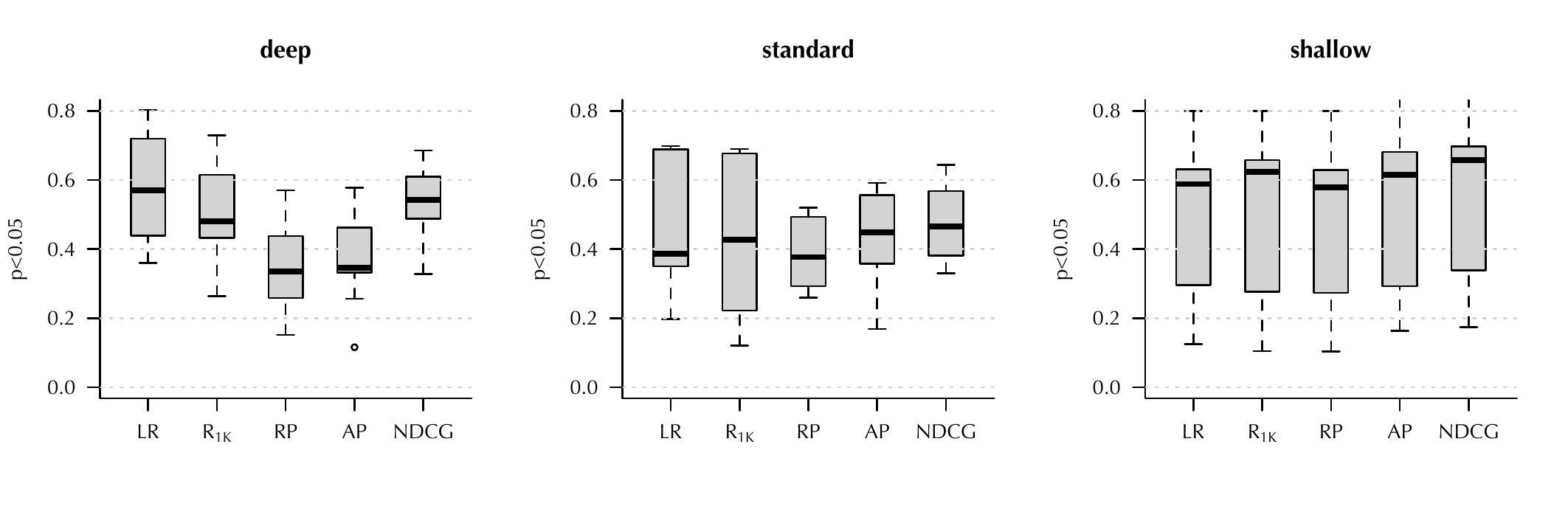}
\caption{Retrieval}\label{fig:metric-sensitivity:standard:retrieval}
\end{subfigure}
\caption{Statistical sensitivity.  Fraction of run pairs where we observe a statistically significant difference (i.e. $p<0.05$) using a paired test.  }\label{fig:metric-sensitivity:standard}
\end{figure}

\begin{figure}
\begin{subfigure}[b]{\linewidth}
\centering
\includegraphics[width=0.95\linewidth]{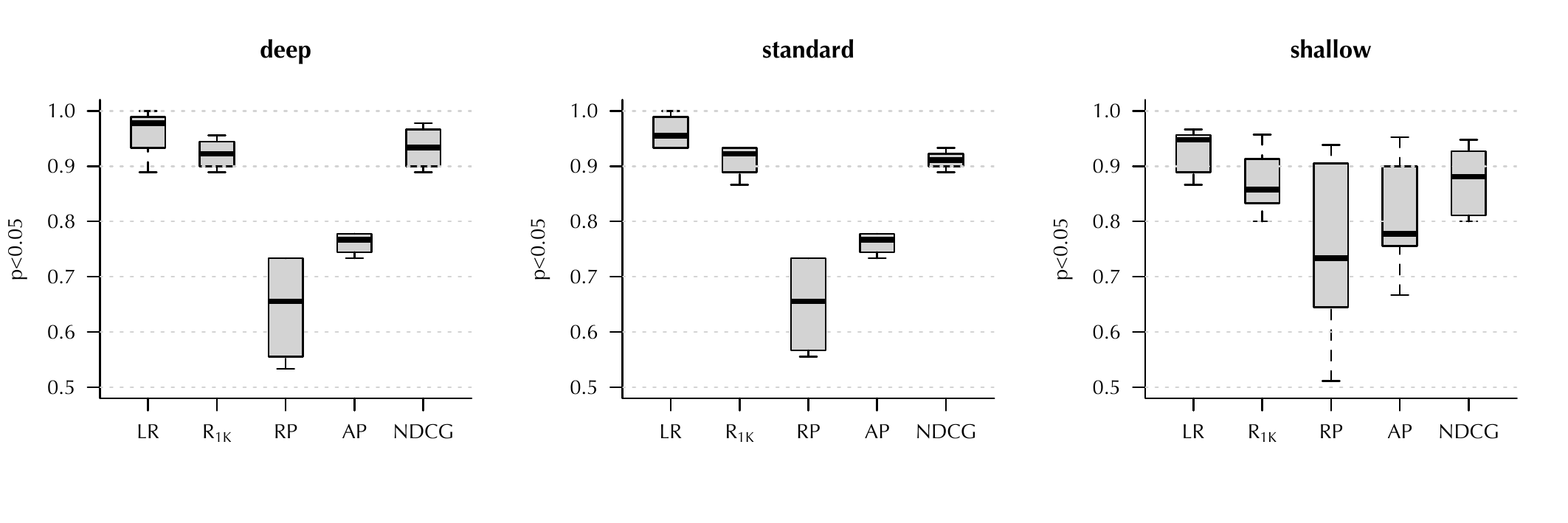}
\caption{Recommendation}\label{fig:metric-sensitivity:hsd:recsys}
\end{subfigure}
\begin{subfigure}[b]{\linewidth}
\centering
\includegraphics[width=0.95\linewidth]{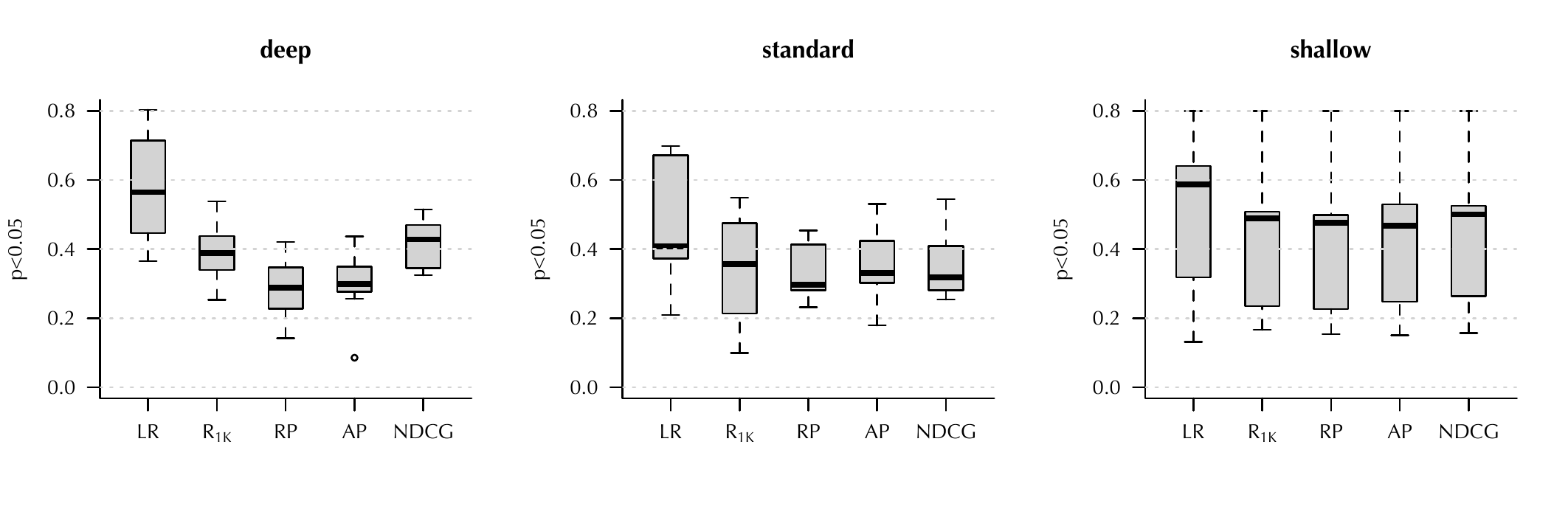}
\caption{Retrieval}\label{fig:metric-sensitivity:hsd:retrieval}
\end{subfigure}
\caption{Statistical sensitivity.  Fraction of run pairs where we observe a statistically significant difference (i.e. $p<0.05$) using Tukey's honestly significant difference test.  }\label{fig:metric-sensitivity:hsd}
\end{figure}
\subsubsection{Label Degradation}
\label{sec:results:degradation}
Effective evaluation methods are stable in the presence of missing relevance labels.  In this analysis, we held the number of queries fixed and randomly removed a fraction of relevant items per query, leaving at least one relevant item per query. We uniformly sample from all relevant items $\relset$ to remove.  We present results for the effect of label degradation on the fraction of ties (we expect more ties with fewer labels) and preference agreement with full data (we expect lower agreement with fewer labels).  As with the previous section, the goal of this analysis is to compare lexirecall  to other recall-oriented metrics, while presenting non-recall metrics for reference.

In terms of fraction of ties (Figure \ref{fig:degration:numties:uniform}), lexirecall degrades comparably to metrics like $\ap$ and $\ndcg$ and substantially more gracefully compared to existing recall metrics $\recallK$ and $\rprecision$.  While the importance of relevance labels for recall-oriented evaluation is important, this result suggests that existing metrics are extremely brittle when labels are missing.  All methods observed more ties at shallower retrieval depths with degradation more pronounced for traditional recall-oriented metrics.  As with our analysis of the number of ties (Section \ref{sec:results:numties}), because of the sparsity of labels in recommendation, we observe much higher variance compared to retrieval, where labels are more complete due to pooling \cite{sparck-jones:pooling,zobel:pooling}.  As such, the recommendation tasks can be seen as operating in the far right-hand portion of retrieval experiments.

\begin{figure}
\centering
\begin{subfigure}[b]{\linewidth}
\centering
\includegraphics[width=0.90\linewidth]{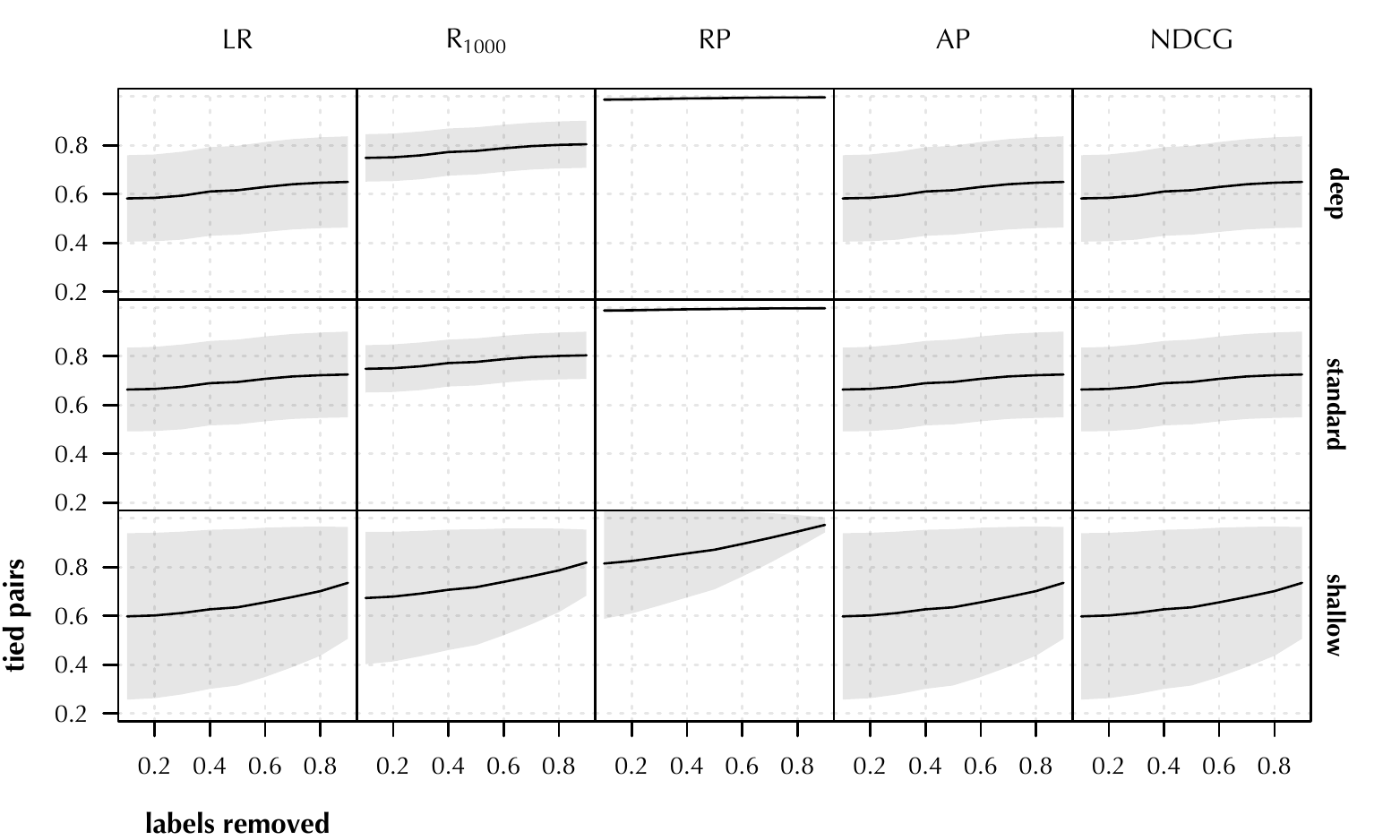}
\caption{Recommendation}\label{fig:degration:numties:uniform:recsys}
\end{subfigure}
\begin{subfigure}[b]{\linewidth}
\centering
\includegraphics[width=0.90\linewidth]{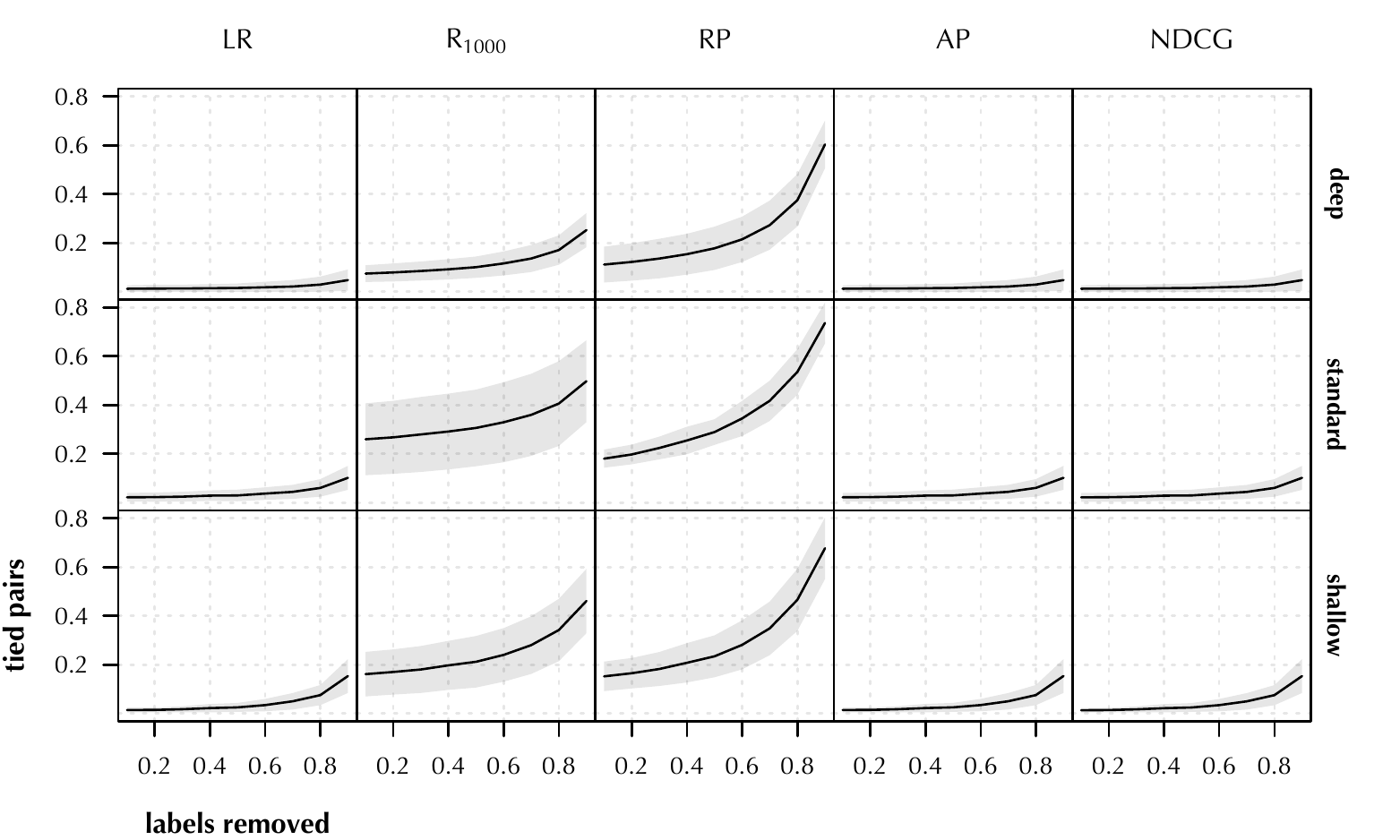}
\caption{Retrieval}\label{fig:degration:numties:uniform:retrieval}
\end{subfigure}
\caption{Number of ties as labels removed.  For each query, we removed a fraction of items from $\relset$ and counted the number of tied rankings.  Relevant items were  sampled uniformly at random.  Average over ten samples. }\label{fig:degration:numties:uniform}
\end{figure}

\begin{figure}
\centering
\begin{subfigure}[b]{\linewidth}
\centering
\includegraphics[width=0.90\linewidth]{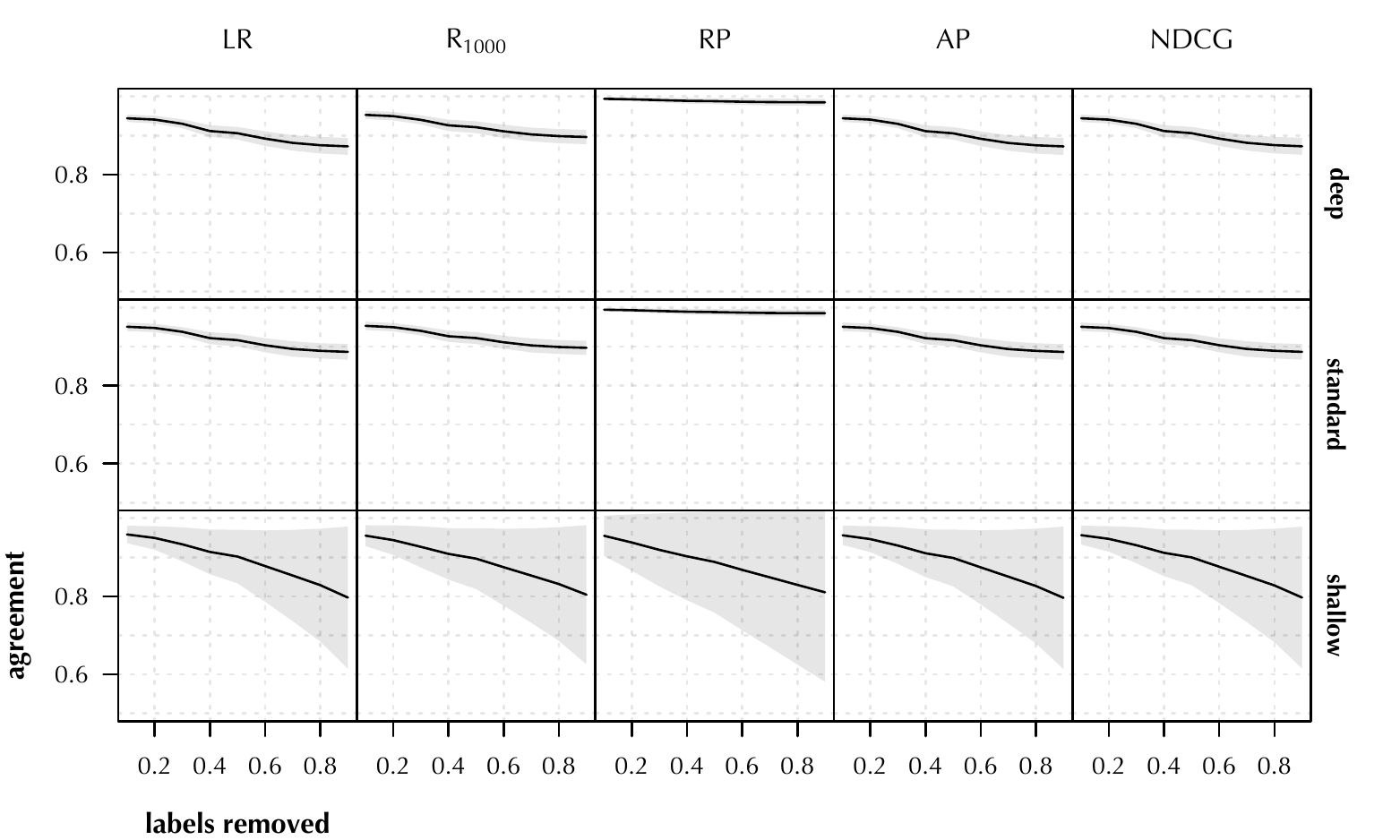}
\caption{Recommendation}\label{fig:degration:agreement:uniform:recsys}
\end{subfigure}
\begin{subfigure}[b]{\linewidth}
\centering
\includegraphics[width=0.90\linewidth]{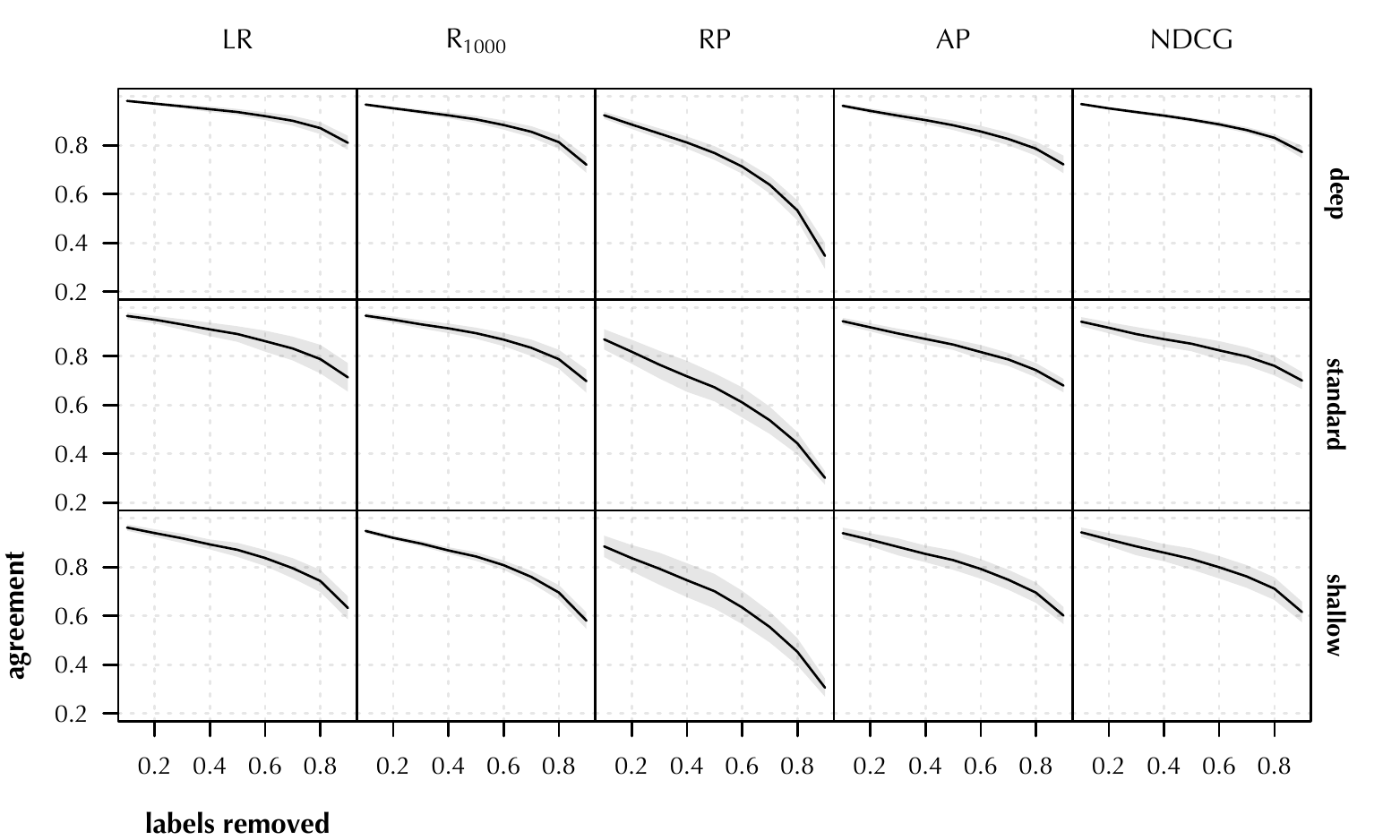}
\caption{Retrieval}\label{fig:degration:agreement:uniform:retrieval}
\end{subfigure}
\caption{Preference agreement between metric based on degraded labels and complete labels. For each query, we removed a fraction of items from $\relset$ and measured the fraction of preferences $\succ_{\metric}$ based on the incomplete relevance judgments that agreed with the preference when using the complete set of relevant items.  Relevant items were  sampled uniformly at random.  Average over ten samples. }\label{fig:degration:agreement:uniform}
\end{figure}

In terms of agreement with preferences based on full data (Figure \ref{fig:degration:agreement:uniform}), lexirecall again degrades comparably to $\ap$ and $\ndcg$.  In recommendation experiments, $\rprecision$ and $\recallK$ behave comparably or better than lexirecall, $\ap$, and $\ndcg$.  However, this may be an artifact of the poor behavior of $\rprecision$ and $\recallK$ under extremely sparse labels, as demonstrated in earlier experiments.  That is, since these metrics observe so many ties and ties increase with label degradation, accurately predicting these preferences under `full data' will be easier with label degradation.  We contrast this with retrieval experiments, where labels are more complete. In this condition, $\rprecision$ is much more sensitive to degradation, dropping in performance quickly.  On the other hand, $\recallK$ behaves similar to lexirecall when preserving most labels, but, for drastically sparse labels, the agreement drops.  We again see slightly worse degradations with shallower retrieval depths across all metrics.  $\rprecision$ in particular demonstrates significantly worse degradation compared to all metrics, while $\recallK$ shows worse degradation when removing relevant items based on ranking frequency.

\section{Discussion}
\label{sec:discussion}
We begin our discussion by returning to the desiderata in Section \ref{sec:desiderata}. We originally sought to define and understand recall from a more formal grounding, allowing us to draw connections to recent literature in fairness and robustness, supporting our first desideratum.  Moreover,  our definition of recall orientation both directly implied the appropriate recall metric and differentiated it from existing metrics, supporting our second desideratum.  Finally, our empirical analysis demonstrated that lexirecall captures many of the properties of existing metrics, while being substantially more sensitive and stable in the presence of missing labels, supporting our remaining desideratum.  Collectively, we find strong support for investigating lexirecall as a method for assessing our robustness perspective of recall.

In light of our conceptual, theoretical, and empirical analysis, we can make a number of other observations about recall, robustness, and lexicographic evaluation.

\subsection{Recall}
\subsubsection{Labeling}
Although lexirecall appears more stable in the presence of missing labels than existing recall-oriented metrics,  the performance of recall and robustness evaluation depends critically on having comprehensive relevance labels.  This situation is exacerbated in recommender system environments where judgments, while often highly personalized and based on psychological relevance, can be extremely incomplete (see \cite{herlocker:recsys-eval} for a discussion).    This suggests two possibilities.  Sparsity in recommendation can arise from
\begin{inlinelist}
\item very niche relevance or
\item severe under-labeling of relevant items.
\end{inlinelist}
In the first situation, our experiments suggest that lexirecall is substantially more sensitive than existing recall metrics.  However, in the second situation is accurate, we can examine metric behavior under retrieval scenarios, where labels are more complete.  Indeed, our experiments would suggest that using existing recall metrics for sparse recommendation tasks does not accurately reflect recall under more complete data.  This echos similar observations in the recommender system community for addressing data sparsity issues \cite{ekstrand:sturgeon,tian:estimating-error,ihemelandu:candidate-set-sampling}.

While time-consuming, we believe that, in order to develop robust and fair systems, new techniques for expanding labeled sets for recall evaluation are necessary.    In retrieval contexts, initiatives like TREC adopt pooling as a strategy to achieve more complete judgments \cite{sparck-jones:pooling}.  Unfortunately, in situations where relevance is derived from behavioral feedback (e.g. \cite{kelly:implicit-feedback-survey,diaz:neurips-2020-tutorial}), comprehensiveness of relevance is often not the focus.  Moreover, the transience of information needs in production environments makes reliable detection of $\relset$ an open problem.  This is compounded by the desire---in both retrieval and recommendation contexts---to estimate relevance for evaluation and optimization transparently to the user. As such, new methods for estimating relevance of a broader set of items without impacting the user are necessary.

\subsubsection{Depth Considerations for Recall-Oriented Evaluation}
All of our recall-oriented evaluations (e.g. lexirecall, $\rprecision$, $\recallK$) suffer  when operating within a shallow retrieval environment.  We recommend that, especially for recall-oriented evaluation, retrieval depths be high, regardless of the specific evaluation method.  Moreover, as labels become sparser, both $\rprecision$ and $\recallK$ show substantially more ties (Figure \ref{fig:degration:numties:uniform}) and poor stability in the presence of missing labels (Figure \ref{fig:degration:agreement:uniform}).  We recommend that, for shallow retrieval with sparse labels, $\rprecision$ should be avoided altogether.

\subsection{Robustness}
\subsubsection{Number of Ties and Metric-Based Evaluation}
The high number of ties from $\rprecision$ and $\recallK$ arises when collapsing all permutations that share the same recall value.  Top-heavy recall-level metrics that have nonzero weight over all relevant items effectively encode the $\ndocs\choose \numrel$ permutations onto the real line.  We should expect more ties and lower statistical sensitivity for metrics that have low cutoffs (e.g. $\topk<100$).  This includes $\rr$ and variants of top-heavy recall-level metrics with rank cutoffs (e.g. $\ndcgTen$).  Even for  top-heavy recall-level metrics, we expect ties if $\exposure(i)\approx0$ for unretrieved items or if the numerical precision limits the ability to represent all $\ndocs\choose \numrel$ positions of relevant items.  Because of their top-heaviness, these ties are more likely to occur for differences at the lower ranks, precisely the positions worst-case performance emphasizes.  As a result, even though some top-heavy metrics may theoretically \textit{include} worst-case performance, they will not \textit{emphasize} it in the metric value.

\subsubsection{Mixed Orientation Metrics}
We saw agreement between lexirecall and $\ndcg$, even though the latter captures precision orientation (Figure \ref{fig:metric-orientation:results}).  In Section \ref{sec:robustness:analysis}, we explained that this may be due to either the inclusion of $\RPx_\numrel$ in top-heavy recall-level metric computation (Equation \ref{eq:metric}) or because of structural dependencies between rank positions of $\RPx_i$ and $\RPx_j$.  Alternatively, since we observed strong empirical agreement between lexirecall and $\ndcg$,  the position of the last ranked relevant item may be predictable because of systematic behavior in the model.  For example, for many scenarios, performance higher in the ranking may be predictive of worst-case performance.  Even if this is case for many systems or domains, we caution against presuming that performance at the top of the ranking is predictive of worst-case performance.  If the worst-case performance is systematic and amongst smaller-sized groups (i.e. those unlikely to appear at the top), then the performance will not be well-predicted by larger, systematically-higher ranked items from dominant groups.  We recommend lexirecall to detect worst-case performance in isolation of other criteria (e.g. precision).

\subsubsection{A Comment on Graded Metrics}
\label{sec:robustness:searchers:grades}
Although we have focused on binary relevance, many ranking scenarios use graded or ordinal relevance.  Consider relevance labels represented as an ordinal scale, where higher grades reflect a higher probability of satisfying the information need according to the rater's subjective opinion, as happens in many retrieval scenarios.
Under such a grading scheme, an item labeled with the minimum grade has a probability of relevance of $0$ (i.e. no user would ever find the item relevant) and an item labeled with the maximum grade has a probability of relevance of $1-\epsilon$ (i.e. \textit{almost} every user would find the item relevant).
We can determine grades that reflect the probability of relevance  through
\begin{inlinelist}
\item labeling instructions (e.g. `an item with a high grade should satisfy many users; an item with a medium grade should satisfy some users; an item with a low grade should satisfy few users'),
\item voting schemes \cite{gordon:disagreement-convolution}, or
\item aggregated behavioral data (e.g., clickthrough rate) \cite{zheng:clicks-relevance}.
\end{inlinelist}
No matter how grades are determined, for a fixed request, a user with a less popular intent may \textit{not} be satisfied by an item relevant to a more popular intent.  This implies that notions of optimality in graded ranking evaluation (i.e. that higher grades should be ranked above lower grades) explicitly values dominant group intents over minority intents.  From the perspective of robustness, this means that, for graded judgments, we should consider $\relset$ to include all items with a non-zero chance of satisfying a user.  This is precisely the approach adopted for $\tse$ and lexirecall.

\subsubsection{Robustness of Optimal Rankings}
\label{sec:robustness:searcher:optimal}
In the case of binary relevance, \citet{robertson:prp}'s Probability Ranking Principle suggests that an optimal ranking will place all relevant items above nonrelevant items. If  $\optimalRankings$ is the set of optimal permutations, then it consists of permutations that rank all of the items in $\relset$ above $\docset-\relset$.  Traditional ranking methods have largely been deterministic insofar as, given a request, they always return a fixed $\ranking\in\rankings$.  The \textit{optimal deterministic ranker}, then, is any ranker that selects a fixed $\ranking\in\optimalRankings$ and,  therefore, for any optimal deterministic ranker, the worst-case performance is $\exposure(\numrel)$.

The situation changes if we consider stochastic rankers \cite{radlinski:active-exploration,bruch:stochastic-ltr,diaz:expexp,singh:exposure,oosterhuis:policy-aware-unbiased-ltr}, systems that, in response to a request, sample a ranking $\ranking$ from some distribution over $\rankings$.  Such systems have been proposed in the context of online learning \cite{radlinski:active-exploration} and fair ranking \cite{diaz:expexp}.  An \textit{optimal stochastic ranker} would, in response to a request, uniformly sample a ranking from $\optimalRankings$.

We can show that the worst-case expected  performance of the optimal stochastic ranker is better than the worst-case expected performance for \textit{any} optimal deterministic ranker,
\begin{align*}
\min_{\user\in\users}\expectation{\ranking\sim\optimalRankings}{\metric(\ranking,\user)}&\geq\min_{\user\in\users}\metric(\dranking,\user), \forall \dranking\in\optimalRankings
\end{align*}
We present a proof in Appendix \ref{app:proofs:opt}.
Figure \ref{fig:stochastic} displays the difference in worst-case performance between optimal deterministic and stochastic rankings for $\numrel\in[1\isep 25]$.  This result provides evidence from a robustness perspective that ranking system design should explore the design space of stochastic rankers.

\begin{figure}
\includegraphics[width=\linewidth]{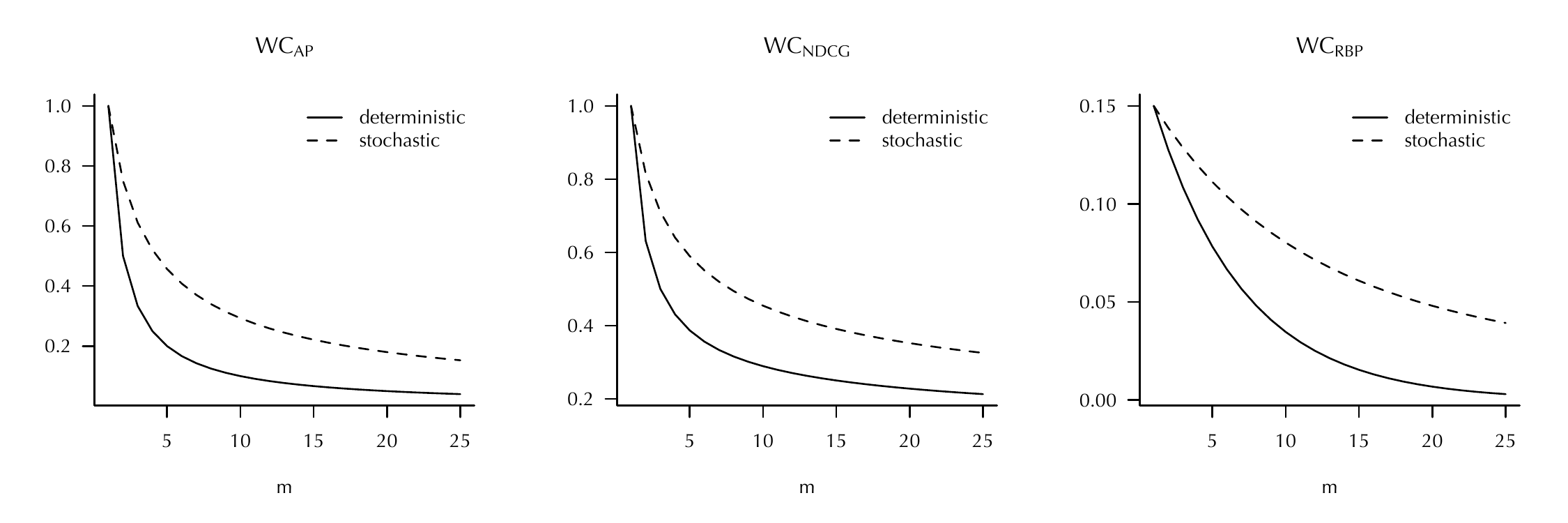}
\caption{Worst-case performance of optimal deterministic and stochastic rankings for $\numrel\in[1\isep 25]$.}\label{fig:stochastic}
\end{figure}

\subsection{Lexicographic Evaluation}

\subsubsection{Recovering an Evaluation Metric}
\label{sec:leximin:metric}
In contrast with preference-based evaluation like lexirecall,  metric-based evaluation can be performed efficiently for each ranking independently, moving the complexity from $O(|\rldata|\log |\rldata|)$ to $O(|\rldata|)$.  Fortunately, existing results in the computation of leximin point to how to design such a metric.

\citet{yager:analytic-leximin} demonstrates one can construct a leximin representation of a ranking such that $\leximin(x)>\leximin(y)\leftrightarrow x\leximinpref y$.  Specifically, if $x$ is a $\numrel\times 1$  allocation vector sorted in decreasing order,
\begin{align}
\leximin(x)&=\sum_{i=1}^\numrel w_ix_i
\end{align}
where $w$ is a \textit{bottom-heavy} weight vector such that $w_1\ll w_2\ll\ldots\ll w_\numrel$.  In our situation, a system `allocates' exposure but, because $\exposure(\RPx_i)$ is a monotonically decreasing function of $\RPx_i$, we only need to compare the rank positions $\RPx$.  As such, we can define our allocation vectors as $x_i=\frac{\numdocs-\RPx_i}{\numdocs}$.  We can use this to define the recall level weight vector $w$ as,
\begin{align*}
w_i&=\begin{cases}
\frac{\Delta^{\numrel-1}}{(1+\Delta)^{\numrel-1}} & i=1\\
\frac{\Delta^{\numrel-i}}{(1+\Delta)^{\numrel+1-i}} & i>1
\end{cases}
\end{align*}
where $\Delta=(\numdocs+\epsilon)^{-1}$ and $\epsilon\in(0,1)$ is a free parameter.  
We can then define \textit{metric lexirecall} as,

\begin{align*}
\lexirecall(\RPx)&=\leximin(x)\\
&\propto -\sum_{i=1}^{\numrel}  w_i\RPx_i
\end{align*}
We can then compare rankings directly using $\leximin(\RPx)$.  Since $\sum_iw_i=1$, this can be interpreted as the bottom-heavy weighted average of the positions of relevant items.  We contrast this with uniform weighting found in the recall error metric \cite{rocchio:recall-error} or top-heavy weighting found in precision-oriented metrics.

While providing interesting theoretical connection to existing top-heavy recall-level metrics, in practice, due to the large values of $\numdocs$ and $\RPx_i$, computing the metric  lexirecall can  suffer from numerical precision issues.  When $\numdocs$ is unknown---for example in dynamic or extremely large corpora---metric lexirecall cannot be calculated at all.  In these situations, computing lexirecall  is feasible due to our imputation procedure and the fact that we only care about relative positions.

\subsubsection{Optimization}
Although our focus has been on evaluation, optimizing for lexicographic criteria may be an alternative method for designing recall-oriented algorithms, for example for technology-assisted review or  candidate generation.  One way to accomplish this is to optimize for metric lexirecall  discussed in the previous section.   Since learning to rank methods often optimize for functions of positions of relevant items  (e.g. \cite{qin:approxndcg}), standard approaches may suffice.  Alternatively, in Section \ref{sec:robustness:searcher:optimal}, we observed that optimal stochastic rankers outperformed optimal deterministic rankers in terms of worst-case performance.  This suggests that stochastic ranking techniques similar to those developed in the context of other fairness notions (e.g. \cite{diaz:expexp}) can be used for recall-oriented tasks.
\section{Conclusion}
\label{sec:conclusion}
Our analysis indicates that recall, when interpreted as retrieving the totality of relevant items, has conceptual and theoretical connections to worst-case robustness of retrieval effectiveness across possible users.
By providing a clear definition of recall-orientation, we have directly captured the recall-orientation of existing metrics and recognized a missing `basis metric' for recall.  Moreover, by developing $\tse$ as the counterpart to $\rr$, we  directly connected recall-orientation to robustness and Rawlsian notions of fairness, providing a normative argument for improving techniques for gathering complete relevance judgments.  Doing so helps ensure the effective computation of recall and, in turn, address potential unfairness.  To effectively deploy $\tse$, we adopted lexicographic evaluation techniques and introduce the lexirecall preference-based evaluation method, which we empirically demonstrated was preferable to existing recall metrics.  We anticipate that variants of lexicographic evaluation can be applied for other constructs.  These three themes of recall, robustness, and lexicographic evaluation, while each individually potentially being interesting areas of theoretical analysis, work collectively to substantially improve our understanding of recall, a concept that may be as old as the field of information retrieval and recommender systems.

\appendix
\bibliographystyle{ACM-Reference-Format}
\bibliography{XX-recall}
\newpage
\begin{table}
\caption{Notation}\label{tab:notation}
\begin{tabular}{lp{0.70\textwidth}}
$\Pintegers$    &   positive integers\\
$\NNintegers$    &   non-negative integers\\
$\Preals$    &   positive reals\\
$\NNreals$    &   non-negative reals\\
$S_k$ &   set of all permutations of $k$ items\\
$\fdset{A}^+$   &   non-empty subsets of $\fdset{A}$\\

$\ident:X\rightarrow\{0,1\}$   &   indicator function (i.e., returns 1 if $X$ is true; 0 otherwise)\\
$x_{-i}$   &   reverse index (i.e., $x_{k-i+1}$ for the $k$-dimensional vector $x$)\\
\\

$\docset$   &   corpus\\

$\relset$   &   relevant set \\

$\numdocs$  &   size of corpus (i.e., $|\docset|$)\\

$\numrel$  &   size of relevant set (i.e., $|\relset|$)\\
$\topk$  &   number of items retrieved\\
\\
$\rankings$   &   set of all permutations of $\docset$ \\
$\rldata$   &   subset of $\rankings$ generated by multiple systems for a fixed request \\

\\
$\ranking$  &   permutation of $\docset$\\
$\RP$   &   sorted positions of relevant items\\
$\invRP$ & item ids of relevant items sorted by position\\
$\numcollapsed$ & number of unique permutations in $\rankings$ for a given $\RP$ (i.e., $\numrel!(\numdocs-\numrel)!$)\\
\\

$\metric:\rankings\times\docsets\rightarrow\NNreals$   & user evaluation metric\\
$\pmetric:\rankings\times\docsets\rightarrow\NNreals$   & provider evaluation metric\\
$\metric(\ranking,\relset)\rankeq\metric'(\ranking,\relset)$ &  $\metric$ and $\metric'$  rank $\rankings$ identically\\
\\
$\exposure : \rankPositions\rightarrow \NNreals$ &    exposure of position\\
$\normalization : \rankPositions\times\rankPositions\rightarrow \NNreals$ &   normalization function\\
\\
$\users$ &   set of possible user information needs for a request\\
$\providers$ &   set of possible providers for a request\\

\end{tabular}
\end{table}

\section{Full Papers Published at the ACM Conference on Recommender Systems that Measure Recall}
\label{app:recsys-recall}
We downloaded and reviewed all full papers published at the ACM Conference on Recommender Systems (2007-2024), omitting reproducibility, late-breaking results, short, demo, doctoral, or keynote papers.  A paper was identified if recall was quantitatively measured as part of experimentation.  Conceptual discussions of recall (e.g., in comparison to experimentation metrics) were not considered to use recall.  We identified only those papers that specifically name recall (e.g., $\recall_{k}$), although some operationalizations of  hit rate may be equivalent to recall \cite{tamm:recsys-metric-computation}.   We present the full papers identified in Table \ref{tab:recsys-recall}.

{ 
\renewcommand{\arraystretch}{1.1}
\begin{table}
\centering
\caption{DOIs of full papers published at the ACM Conference on Recommender Systems (2007-2024) that measure recall in experiments. }\label{tab:recsys-recall}
{\footnotesize
\begin{tabular}{cccp{4in}}
year & total & recall & DOIs\\
\hline
2007 & 16 & 3 & 10.1145/1297231.12972\{48, 50, 51\}\\
2008 & 20 & 3 & 10.1145/1454008.14540\{17, 32, 34\}\\
2009 & 25 & 4 & 10.1145/1639714.16397\{24, 26, 37, 45\}\\
2010 & 25 & 8 & 10.1145/1864708.18647\{21, 28, 31, 32, 41, 44, 45, 47\}\\
2011 & 22 & 5 & 10.1145/2043932.20439\{41, 45, 47, 57, 65\}\\
2012 & 24 & 11 & 10.1145/2365952.23659\{62, 63, 67, 68, 69, 72, 73, 76, 79, 82, 89\}\\
2013 & 32 & 11 & 10.1145/2507157.2507\{160, 170, 171, 172, 182, 184, 186, 209, 210, 211, 215\}\\
2014 & 35 & 8 & 10.1145/2645710.26457\{21, 23, 29, 34, 38, 40, 43, 46\}\\
2015 & 28 & 7 & 10.1145/2792838.28001\{70, 72, 74, 76, 85, 89, 95\}\\ 
2016 & 29 & 12 & 10.1145/2959100.29591\{33, 37, 46, 47, 49, 51, 57, 67, 70, 78, 80, 82\}\\ 
2017 & 26 & 6 & 10.1145/3109859.3109\{877, 879, 887, 896, 900, 903\}\\ 
2018 & 32 & 11 & 10.1145/3240323.3240\{343, 347, 349, 355, 363, 371, 372, 374, 381, 391, 405\}\\ 
2019 & 36 & 11 & 10.1145/3298689.334\{6996, 7002, 7007, 7009, 7012, 7013, 7026, 7036, 7044, 7058, 7065\}\\ 
2020 & 39 & 12 & 10.1145/3383313.341\{1476, 2232, 2235, 2243, 2247, 2248, 2249, 2258, 2259, 2262, 2265, 2268\}\\ 
2021 & 49 & 19 & 10.1145/3460231.3474\{228, 230, 234, 238, 240, 242, 249, 252, 255, 257, 260, 263, 265, 266, 268, 270, 272, 273, 275\}\\ 
2022 & 39 & 13 & 10.1145/3523227.3546\{752, 754, 755, 760, 762, 763, 768, 770, 771, 775, 782, 784, 785\}\\ 
2023 & 47 & 21 & 10.1145/3604915.3608\{766, 771, 773, 781, 783, 784, 785, 786, 803, 804, 806, 809, 810, 811, 812, 815, 863, 868, 871, 878, 882\}\\ 
2024 & 58 & 24 & 10.1145/3640457.3688\{096, 098, 100, 104, 108, 109, 113, 117, 121, 122, 123, 124, 125, 127, 133, 137, 138, 139, 145, 146, 148, 149, 151, 153\}  
\end{tabular}
}
\end{table}
}

\section{Metric Properties}
\label{app:proofs:metrics}

We can connect our proposed metrics to prior work by leveraging several properties defined in the community.  Drawing on fundamental contributions from \citet{moffat:seven-properties-of-metrics}, \citet{ferrante:framework-for-utility-metrics}, and \citet{amigo:metric-axioms}, we produced a synthesized list of properties.  These properties should be considered descriptive---not prescriptive---since
\begin{inlinelist}
\item they can be in tension, and
\item there are valid metrics (e.g., high-precision, diversity, fairness) that do not satisfy all of them.
\end{inlinelist}
Let $\topkranking$ be a top-$\topk$ from which we impute a total ranking using pessimistic imputation (Section \ref{sec:preliminaries:imputation}).

Consistent with our setup, we assume that the evaluator has access to a set of fixed relevance assessments $\relset$ over a fixed corpus $\docset$ and pessimistic imputation of a top-$\topk$ ranking $\topkranking$.  Given a ranking $\ranking$, let $\qrels_i$ refer to the relevance of $\ranking_i$.   Finally, we assume that the number of unretrieved $\numdocs-\topk$ items is larger than the number of relevant items $\numrel$, which is the case in most ranking tasks.

\begin{enumerate}
\item \textbf{Monotonicity in retrieval size \cite{moffat:seven-properties-of-metrics,amigo:metric-axioms}.} This property refers to the behavior of $\metric$ as we append items to $\topkranking$.   Following \citet{moffat:seven-properties-of-metrics}, a metric is \textit{monotonically increasing in retrieval size} if it is non-decreasing as $\topk$ is increased by appending either relevant or nonrelevant items to $\topkranking$.  Following \citet{amigo:metric-axioms}, a metric is \textit{(strictly) monotonically decreasing in nonrelevance} if it (strictly) decreases as $\topk$ is increased by appending nonrelevant items to $\topkranking$; \citet{amigo:metric-axioms} refer to the strict version of this as `Confidence'.  Note that the \citet{moffat:seven-properties-of-metrics} and \citet{amigo:metric-axioms} properties are in tension.  For completeness, we refer to a a metric as \textit{(strictly) monotonically increasing in relevance} if it (strictly) increases as $\topk$ is increased by appending relevant items to $\topkranking$.
\item \textbf{Monotonicity in swapping up \cite{moffat:seven-properties-of-metrics,amigo:metric-axioms,ferrante:framework-for-utility-metrics}.} A metric is \textit{(strictly) monotonically increasing in swapping up} if, when  $i<j$ and $\qrels_i < \qrels_j$,  we observe a (strictly) monotonic increase in $\metric$ when we swap the items at positions $i$ and $j$.  \citet{ferrante:framework-for-utility-metrics} refers to the non-decreasing property as `Swap'.  When $j>\topk$, \citet{moffat:seven-properties-of-metrics} refers to the strictly increasing property as `Convergence' and \citet{ferrante:framework-for-utility-metrics} refers to the non-decreasing property as `Replacement'.  When $j\leq\topk$, \citet{moffat:seven-properties-of-metrics} refers to the strictly increasing property as `Top-weightedness'.  For contiguous swaps (i.e.,  $j=i+1$), \citet{amigo:metric-axioms} refer to the strictly increasing property as `Priority'.
\item \textbf{Concavity in contiguous swap depth \cite{amigo:metric-axioms}.} A metric is \textit{(strictly) concave in contiguous swap depth} if, when $i<j$ and $\qrels_i < \qrels_{i+1}$ and $\qrels_{j} < \qrels_{j+1}$, swapping $i$ and $i+1$ will lead to a (strictly) larger improvement in $\metric$ compared to  swapping $j$ and $j+1$.  \citet{amigo:metric-axioms} refer this to `Deepness'.
\item \textbf{Suffix Invariance \cite{amigo:metric-axioms}.} Given two rankings $\ranking$ and $\ranking'$ that have relevant items in same positions in the top-$\topk$ prefix, a metric is \textit{suffix invariant} if, no matter the positions of the remaining relevant items, the metric values will be the same.  \citet{amigo:metric-axioms} refer this to `Deepness Threshold'.
\item \textbf{Prefix Invariance \cite{amigo:metric-axioms}.} Given two rankings $\ranking$ and $\ranking'$ that have relevant items in same positions in the bottom-$k$ suffix, a metric is \textit{prefix invariant} if, no matter the positions of the remaining relevant items, the metric values will be the same.  \citet{amigo:metric-axioms} refer this to `Closeness Threshold'.
\item \textbf{Boundedness \cite{moffat:seven-properties-of-metrics}.}
A metric is \textit{bounded from above} if there exists an $\metricUpper\in\Re$ such that,
\begin{align*}
\forall\relset\in\docsets, \forall\ranking\in\rankings &: \metric(\ranking,\relset)\leq \metricUpper
\end{align*}
Similarly, a metric is \textit{bounded from below} if there exists an $\metricLower\in\Re$ such that,
\begin{align*}
\forall\relset\in\docsets, \forall\ranking\in\rankings &: \metric(\ranking,\relset)\geq \metricLower
\end{align*}
In general, a metric is \textit{bounded} if it is bounded from above and below.
\item \textbf{Localization \cite{moffat:seven-properties-of-metrics}.}  A metric is \textit{localized} if can be  computed only with the information in the top-$\topk$; in other words, the metric value is suffix invariant \textit{and} independent of the number of unretrieved relevant items.
\item \textbf{Completeness \cite{moffat:seven-properties-of-metrics}.} A metric is \textit{complete} if it is defined when $\numrel=0$ (i.e., metrics are of the form $\metric: \rankings\times\docsetsAll\rightarrow\Re$).
\item \textbf{Realizability \cite{moffat:seven-properties-of-metrics}.}
A metric is \textit{realizable  above} if it is bounded from above and
\begin{align*}
\forall\relset\in\docsets, \exists\ranking\in\rankings&: \metric(\ranking,\relset)= \metricUpper
\end{align*}
A metric is \textit{realizable  below} if it is bounded from below and
\begin{align*}
\forall\relset\in\docsets, \exists\ranking\in\rankings&: \metric(\ranking,\relset)= \metricLower
\end{align*}
If $\numrel>0$, \citet{moffat:seven-properties-of-metrics} refers to metrics that are  realizable above as simply `Realizable'.
\end{enumerate}
In some cases, we have adopted a property name different from the original to help with clarity.  In subsequent sections, we will be demonstrate which of these properties are present for top-heavy recall-level metrics, TSE, and lexirecall.

Several properties were not present in any of our evaluation methods.  None of our methods are strictly decreasing in nonrelevance.  \citet{amigo:metric-axioms} note also that, ``[a]s far as we know, current evaluation measures do not consider this aspect.''  None of our methods are prefix or suffix invariant.  This is largely due to the fact that \begin{inlinelist}
\item exposure is strictly monotonically decreasing, and
\item normalization is a function of recall level (and the number of relevant items).
\end{inlinelist}
As a result, any position-based `flatness' in the computation is missing.  None of our methods are guaranteed to be bounded to allow maximal flexibility in our analysis; this also means that none of our methods are guaranteed to be realizable.  None of our methods are localized because we explicitly use $\numrel$ and $\numdocs$ in pessimistic imputation; while lexirecall does not need $\numdocs$, it does still requite $\numrel$.

We summarize the properties for top-heavy recall-level metrics, total search efficiency, and lexirecall in Table \ref{tab:metric-properties}.

\begin{table}
\caption{Metric Properties.  1: if the second derivative of the exposure function is strictly positive.  2: if normalization function is defined for $\numrel=0$. The properties concavity, boundedness, localization, completeness, and realizability are specific to metric-based evaluation and therefore cannot be analyzed for preference-based evaluation like lexirecall.}\label{tab:metric-properties}
\centering
{\small
\begin{tabular}{lcccc}
\hline
& THRL &  THRL&  $\tse$ & $\lexirecall$  \\
&  &  $\normalization(i,\numrel) > 0$&  & \\
\hline
increasing in retrieval size \cite{moffat:seven-properties-of-metrics}& $\checkmark$ & $\checkmark$ & $\checkmark$ & $\checkmark$ \\
\arrayrulecolor{lightgray}\hline
decreasing in nonrelevance & $\checkmark$ & $\checkmark$ & $\checkmark$ & $\checkmark$\\
\arrayrulecolor{lightgray}\hline
strictly decreasing in nonrelevance \cite{amigo:metric-axioms}&  & & & \\
\arrayrulecolor{lightgray}\hline
increasing in relevance & $\checkmark$ & $\checkmark$ & $\checkmark$ & $\checkmark$\\
\arrayrulecolor{lightgray}\hline
strictly increasing in relevance &  &  $\checkmark$ & & $\checkmark$\\
\arrayrulecolor{lightgray}\hline
increasing in swapping up \cite{ferrante:framework-for-utility-metrics} & $\checkmark$ &  $\checkmark$ & $\checkmark$ & $\checkmark$\\
\arrayrulecolor{lightgray}\hline
strictly  increasing in swapping up \cite{moffat:seven-properties-of-metrics,amigo:metric-axioms} &  &  $\checkmark$ & & \\
\arrayrulecolor{lightgray}\hline
concavity in contiguous swap depth & $\checkmark$ & $\checkmark$  & $\checkmark$ & NA \\
\arrayrulecolor{lightgray}\hline
strict concavity in contiguous swap depth \cite{amigo:metric-axioms} &  & $\checkmark^1$ & & NA\\
\arrayrulecolor{lightgray}\hline
suffix invariance \cite{amigo:metric-axioms} &  &  & & \\
\arrayrulecolor{lightgray}\hline
prefix invariance \cite{amigo:metric-axioms} &  &  & & \\
\arrayrulecolor{lightgray}\hline
boundedness \cite{moffat:seven-properties-of-metrics} &  &  & & NA \\
\arrayrulecolor{lightgray}\hline
localization \cite{moffat:seven-properties-of-metrics} &  &  & & NA \\
\arrayrulecolor{lightgray}\hline
completeness \cite{moffat:seven-properties-of-metrics} & $\checkmark^2$ & $\checkmark^2$ & $\checkmark^2$ & NA \\
\arrayrulecolor{lightgray}\hline
realizability \cite{moffat:seven-properties-of-metrics} &  &  & & NA \\
\arrayrulecolor{black}\hline
\end{tabular}}
\end{table}

\subsection{Properties of Top-Heavy Recall-Level Metrics}
\label{app:proofs:metrics:topheavy}
Let $\metric$ be a  top-heavy recall-level metric with exposure function $\exposure$ and normalization function $\normalization$.

\begin{theorem}
\label{thm:metrics:monotonicity:size}
$\metric$ is monotonically increasing in retrieval size.
\end{theorem}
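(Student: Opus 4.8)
The plan is to reduce to a single append and then compute directly how pessimistic imputation reshapes the relevance-position vector. By induction on the number of appended items, it suffices to show that appending one item $d$ to the bottom of $\topkranking$ (producing a top-$(\topk+1)$ list) cannot decrease $\metric$. Write $\ranking,\ranking'$ for the total rankings obtained from $\topkranking$ and from the extended list by pessimistic imputation, and $\RPx,\RPy$ for their relevance-position vectors. Because $\metric$ is a recall-level metric, $\metric(\ranking,\relset)=\sum_{i=1}^{\numrel}\exposure(\RPx_i)\normalization(i,\numrel)$, and likewise for $\ranking'$ with the \emph{same} $\numrel$, since $\relset$ is fixed throughout. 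So everything comes down to comparing $\RPx$ and $\RPy$ componentwise.

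First I would dispatch the case $d\notin\relset$. Appending a nonrelevant item changes neither which relevant items are retrieved nor their positions, and it leaves the count $c$ of still-unretrieved relevant items unchanged, so pessimistic imputation places exactly the same relevant items in exactly the same bottom positions $\numdocs-c+1,\dots,\numdocs$. Hence $\RPy=\RPx$ and $\metric(\ranking',\relset)=\metric(\ranking,\relset)$.

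The substantive case is $d\in\relset$. Let $c\ge 1$ be the number of relevant items missing from $\topkranking$; then $\numrel-c$ of them sit inside $\topkranking$ and the missing $c$ are imputed, under $\ranking$, to positions $\numdocs-c+1,\dots,\numdocs$. After appending $d$, exactly $c-1$ relevant items remain unretrieved and are imputed, under $\ranking'$, to $\numdocs-c+2,\dots,\numdocs$, while $d$ occupies position $\topk+1$. The key calculation is to use the standing assumption $\numdocs-\topk>\numrel\ge c$ to get $\topk+1\le\numdocs-c<\numdocs-c+1$, so $d$'s new position lies strictly above the imputed block and (trivially) below the $\numrel-c$ already-retrieved relevant items; hence $d$ receives relevance-rank index $\numrel-c+1$. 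Tracking indices then gives $\RPy_i=\RPx_i$ for all $i\neq\numrel-c+1$ and $\RPy_{\numrel-c+1}=\topk+1<\numdocs-c+1=\RPx_{\numrel-c+1}$. Since $\exposure$ is strictly monotonically decreasing and $\normalization(\numrel-c+1,\numrel)\ge 0$, the sums for $\ranking$ and $\ranking'$ differ only in their $(\numrel-c+1)$-th term, which does not decrease, so $\metric(\ranking',\relset)\ge\metric(\ranking,\relset)$.

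I expect the main obstacle to be purely bookkeeping: pinning down the index arithmetic of the imputed block and verifying that every relevant position other than the one that moves is genuinely unchanged (the retrieved ones, and, after the downward shift of the tail, the still-imputed ones). Nothing conceptual is at stake, and notably the argument uses only the recall-level summation form together with the monotonicity of $\exposure$ and non-negativity of $\normalization$ — top-heaviness itself is not needed here. I would then close with the one-line induction over the number of appended items.
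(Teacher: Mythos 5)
Your proposal is correct and follows essentially the same route as the paper's proof: split on whether the appended item is relevant, observe that the nonrelevant case leaves $\RPx$ unchanged, and in the relevant case note that the only change is one entry of the position vector moving from the top of the imputed block ($\numdocs-c+1$) up to $\topk+1$, so the metric difference is a single non-negative term $\normalization(\numrel-c+1,\numrel)\bigl(\exposure(\topk+1)-\exposure(\numdocs-c+1)\bigr)$. Your added observations (the induction wrapper for multiple appends, and the remark that top-heaviness is not actually used) are sound but do not change the substance of the argument.
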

\begin{proof}
Let $\topkranking'$ be $\topkranking$ with another item appended, with pessimistically imputed rankings $\ranking'$ and $\ranking$.  If the new item is nonrelevant, then $\forall i, \RPx_i=\RPy_i$ and, therefore,  $\metric(\ranking,\relset)=\metric(\ranking',\relset)$ and $\metric$ is trivially non-decreasing.  Next, consider the case where the new item is relevant.  This can only happen if $\topkranking$ includes $\tilde{\numrel}<\numrel$ relevant items.  As such, this is equivalent to swapping a relevant item in the imputed  ranking $\ranking$ from position $\numdocs-(\numrel-\tilde{\numrel}-1)$ to position $\topk+1$.  Let $\Delta_{i,j}\metric(\ranking,\relset)$ be the difference in metric value from swapping a relevant item in position $j$ to position $i$.
\begin{align}
\Delta_{\topk+1,\numdocs-(\numrel-\tilde{\numrel}-1)}\metric(\ranking,\relset)&=\metric(\ranking',\relset)-\metric(\ranking,\relset)\nonumber\\
&=\sum_{i=1}^{\numrel} \exposure(\RPy_i)\normalization(i,\numrel)-\sum_{i=1}^{\numrel} \exposure(\RPx_i)\normalization(i,\numrel)\nonumber\\
&=\exposure(\RPy_{\tilde{\numrel}+1})\normalization(\tilde{\numrel}+1,\numrel)-\exposure(\RPx_{\tilde{\numrel}+1})\normalization(\tilde{\numrel}+1,\numrel)\nonumber\\
&=\normalization(\tilde{\numrel}+1,\numrel)(\exposure(\topk+1)-\exposure(\numdocs-(\numrel-\tilde{\numrel}-1)))\label{eq:metrics:monotonicity:size:rhs}
\end{align}
Since $\normalization(i,\numrel) \geq 0$  and  $\exposure(\topk+1)>\exposure(\numdocs-(\numrel-\tilde{\numrel}-1))$, we know that Equation \ref{eq:metrics:monotonicity:size:rhs} will always be non-negative.
\end{proof}

\begin{theorem}
\label{thm:metrics:monotonicity:nonrel}
$\metric$ is monotonically decreasing in nonrelevance.
\end{theorem}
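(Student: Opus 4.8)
The plan is to reduce the claim to the single observation that appending a nonrelevant item to the top-$\topk$ ranking leaves the relevance-projection vector $\RPx$ entirely unchanged, so the metric value does not change at all; non-strict monotone decrease then follows immediately. In fact this is precisely the nonrelevant-item sub-case already appearing inside the proof of Theorem~\ref{thm:metrics:monotonicity:size}, isolated and restated with its own (weaker) conclusion.

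First I would fix notation as in Theorem~\ref{thm:metrics:monotonicity:size}: let $\topkranking'$ be $\topkranking$ with a nonrelevant item appended at position $\topk+1$, and let $\ranking,\ranking'$ denote the corresponding pessimistically imputed total rankings, with $\tilde{\numrel}\le\numrel$ the number of relevant items already in $\topkranking$. Then I would track every relevant position. The $\tilde{\numrel}$ retrieved relevant items keep their positions in the top-$\topk$ prefix, since inserting a new final item moves nothing above it. The $\numrel-\tilde{\numrel}$ unretrieved relevant items occupy, under pessimistic imputation, the bottom $\numrel-\tilde{\numrel}$ corpus slots $\numdocs-(\numrel-\tilde{\numrel})+1,\dots,\numdocs$ in both $\ranking$ and $\ranking'$; the only effect of appending the new item is that one formerly-unjudged nonrelevant item becomes retrieved, shrinking the block of unjudged nonrelevant items without touching any relevant slot. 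Here I would invoke the standing assumption $\numdocs-\topk>\numrel$, which gives $\topk+1\le\numdocs-\numrel$, so the appended nonrelevant item lands strictly above the imputed relevant block and cannot displace a relevant item. Hence $\RPy_i=\RPx_i$ for every $i\in[1\isep\numrel]$.

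Finally, substituting into Definition~\ref{def:recall-level} yields $\metric(\ranking',\relset)=\sum_{i=1}^{\numrel}\exposure(\RPy_i)\normalization(i,\numrel)=\sum_{i=1}^{\numrel}\exposure(\RPx_i)\normalization(i,\numrel)=\metric(\ranking,\relset)$, so in particular $\metric(\ranking',\relset)\le\metric(\ranking,\relset)$, which is exactly monotone decrease in nonrelevance. I do not expect a genuine obstacle: the only step needing care is the imputation bookkeeping above, i.e. verifying that the appended item sits strictly above the imputed relevant positions so that $\RPx$ is untouched. That same bookkeeping also explains transparently why the \emph{strict} version fails here (the metric is in fact invariant under such appends), consistent with the blank entry for ``strictly decreasing in nonrelevance'' in Table~\ref{tab:metric-properties}.
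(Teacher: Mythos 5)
Your proposal is correct and is essentially identical to the paper's argument: the paper proves this theorem by pointing to the first case in the proof of Theorem~\ref{thm:metrics:monotonicity:size}, which observes that appending a nonrelevant item leaves $\RPx$ unchanged, so the metric value is unchanged and hence trivially non-increasing. Your additional bookkeeping verifying that the appended item cannot displace any retrieved or pessimistically imputed relevant position is a more careful justification of the same one-line observation.
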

\begin{proof}
See the first case in the proof Theorem \ref{thm:metrics:monotonicity:size}.
\end{proof}

\begin{theorem}
\label{thm:metrics:monotonicity:rel}
If $\normalization(i,\numrel)$ is (strictly) positive, $\metric$ is (strictly) monotonically increasing in relevance.
\end{theorem}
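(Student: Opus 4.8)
The plan is to reuse the computation already carried out in the proof of Theorem~\ref{thm:metrics:monotonicity:size}. Appending a relevant item to $\topkranking$ is only possible when $\topkranking$ currently contains $\tilde{\numrel}<\numrel$ relevant items, and—under pessimistic imputation—this operation is exactly the move of a single relevant item from its imputed position $\numdocs-(\numrel-\tilde{\numrel}-1)$ at the bottom of the order to position $\topk+1$, with every other relevant position left fixed. Consequently the telescoped difference of the summation in Equation~\ref{eq:metric} collapses to a single term, namely the right-hand side of Equation~\ref{eq:metrics:monotonicity:size:rhs}:
\[
\Delta\metric = \normalization(\tilde{\numrel}+1,\numrel)\bigl(\exposure(\topk+1)-\exposure(\numdocs-(\numrel-\tilde{\numrel}-1))\bigr).
\]

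First I would show that the exposure gap is strictly positive. Using the standing assumption $\numdocs-\topk>\numrel$ (so that $\topk+1<\numdocs-(\numrel-\tilde{\numrel}-1)$ since $\numrel-\tilde{\numrel}-1<\numrel$) together with the strict monotonic decrease of $\exposure$ from Definition~\ref{def:recall-level}, we get $\exposure(\topk+1)>\exposure(\numdocs-(\numrel-\tilde{\numrel}-1))$. Hence the sign of $\Delta\metric$ is governed entirely by $\normalization(\tilde{\numrel}+1,\numrel)$: if $\normalization$ is non-negative then $\Delta\metric\ge 0$, and if $\normalization$ is strictly positive then $\Delta\metric>0$. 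Appending an arbitrary number of relevant items one at a time and summing these differences then yields the (strict) monotonic increase in relevance as claimed.

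I do not expect a genuine obstacle; the only care needed is the index bookkeeping—verifying that appending one relevant item shifts precisely the $(\tilde{\numrel}+1)$-th relevant position (counting from the most relevant down) out of the imputed bottom block and into position $\topk+1$, leaving the other $\RPx_i$ unchanged, so that the difference of sums really does reduce to the single displayed term—and remembering to invoke $\numdocs-\topk>\numrel$ to upgrade the exposure gap from non-negative to strictly positive, which is what makes the strict case go through.
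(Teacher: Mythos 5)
Your proposal is correct and matches the paper's own argument, which likewise just points to the second case of the proof of Theorem~\ref{thm:metrics:monotonicity:size} and observes that the single surviving term $\normalization(\tilde{\numrel}+1,\numrel)\bigl(\exposure(\topk+1)-\exposure(\numdocs-(\numrel-\tilde{\numrel}-1))\bigr)$ becomes strictly positive once $\normalization$ is strictly positive. Your extra care about the index bookkeeping and the one-at-a-time iteration is sound but not a departure from the paper's route.
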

\begin{proof}
See the second case in the proof Theorem \ref{thm:metrics:monotonicity:size}.  Moreover, if $\normalization(i,\numrel)$ is strictly positive, then Equation \ref{eq:metrics:monotonicity:size:rhs} will always be strictly positive.
\end{proof}

\begin{theorem}
\label{thm:metrics:swap}
If $\normalization(i,\numrel)$ is (strictly) positive, $\metric$ is (strictly) monotonically increasing in swapping up.
\end{theorem}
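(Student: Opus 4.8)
The plan is to handle the strict and non-strict versions simultaneously by computing the exact change in $\metric$ induced by a single swap-up and exhibiting it as a sum of exposure gaps weighted by $\normalization$ values. Fix positions $i<j$ with $\qrels_i<\qrels_j$; under binary relevance this forces the item at $i$ to be nonrelevant and the item at $j$ to be relevant. Exchanging these two items fixes every other position, so the only effect on the relevant-position vector is that the relevant item at $j$ --- say it is the $k$th-ranked relevant item, $\RPx_k=j$ --- moves up to position $i$, while the nonrelevant item never enters $\RPx$. Hence the relevant positions $\RPy$ of the swapped ranking $\ranking'$ are the sorted form of $(\{\RPx_1,\dots,\RPx_\numrel\}\setminus\{j\})\cup\{i\}$, i.e.\ $\RPx$ with one entry lowered from $j$ to $i$ (and $i\notin\{\RPx_1,\dots,\RPx_\numrel\}$, since position $i$ was nonrelevant).

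Next I would convert this into a coordinatewise comparison of the two sorted vectors. Letting $\ell$ denote the number of relevant items lying above position $i$, explicit bookkeeping gives $\RPy_{\ell+1}=i$, $\RPy_t=\RPx_{t-1}$ for $\ell+2\le t\le k$, and $\RPy_t=\RPx_t$ otherwise; consequently $\RPy_t\le\RPx_t$ at every recall level $t\in[1\isep\numrel]$, with strict inequality precisely on the block $t\in\{\ell+1,\dots,k\}$, which is nonempty because $j>i$ forces $k\ge\ell+1$. (Equivalently, one may invoke the elementary fact that lowering one entry of a sorted list cannot raise any order statistic, and note that at least one entry must strictly drop because the two multisets differ.)

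Then, since $\exposure$ is strictly decreasing, $\exposure(\RPy_t)\ge\exposure(\RPx_t)$ with strict inequality on that same block, and by Definition \ref{def:recall-level},
\[
\metric(\ranking',\relset)-\metric(\ranking,\relset)=\sum_{t=1}^{\numrel}\bigl(\exposure(\RPy_t)-\exposure(\RPx_t)\bigr)\normalization(t,\numrel)=\sum_{t=\ell+1}^{k}\bigl(\exposure(\RPy_t)-\exposure(\RPx_t)\bigr)\normalization(t,\numrel).
\]
Each summand is a strictly positive exposure gap times $\normalization(t,\numrel)$. Since $\normalization\ge 0$ always holds (Definition \ref{def:recall-level}), the difference is $\ge 0$, giving non-strict monotonicity; and if $\normalization$ is strictly positive the difference is $>0$ because the block is nonempty, giving strict monotonicity. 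This is the block-of-ranks generalization of the single-term computation in Equation \ref{eq:metrics:monotonicity:size:rhs}.

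I expect the only delicate point --- more careful bookkeeping than genuine difficulty --- to be the re-indexing in the second step: tracking how the relevant items strictly between positions $i$ and $j$ shift rank, so as to pin down the set of recall levels on which $\RPx_t$ strictly decreases and confirm it is nonempty. Everything after that reduces to monotonicity of $\exposure$ and nonnegativity of $\normalization$. An alternative would be to decompose the swap into $j-i$ adjacent transpositions and sum per-step increments, but that introduces extra (zero-contribution) cases and is messier than the order-statistic argument above.
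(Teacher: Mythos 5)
Your proposal is correct and follows essentially the same route as the paper's proof: both track how the swap shifts the sorted relevant-position vector on the block of recall levels between the nonrelevant target position and the moved relevant item, express the metric change as the sum over that block of exposure gaps weighted by $\normalization(t,\numrel)$, and conclude nonnegativity (or strict positivity) from monotonicity of $\exposure$ and the sign of $\normalization$. Your coordinatewise bookkeeping of $\RPy_t\le\RPx_t$ is just a cleaner restatement of the paper's telescoping-sum form of the same quantity.
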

\begin{proof}
Given a ranking $\ranking\in\rankings$, let $j$ be the $j$th relevant item and $k<\RPx_j$ the position of an arbitrary  nonrelevant item ranked above it.  Let $\ell$ be the recall level of the first relevant item below position $k$ (i.e., $\ell=\min\{i:\RPx_i>k\}$).
\begin{align*}
\Delta_{\RPx_j,k}\metric(\ranking,\relset)&=\normalization(\ell,\numrel)(\exposure(k)-\exposure(\RPx_{\ell}))+
\sum_{i=\ell}^{j-1} \normalization(i+1,\numrel)(\exposure(\RPx_{i})-\exposure(\RPx_{i+1}))
\end{align*}

Since $\normalization(i,\numrel) \geq 0$ and  $\exposure(\RPy_i)>\exposure(\RPx_i)$ for $i\in[\ell,j]$, we know that this difference will always be non-negative.  Moreover, if $\normalization(i,\numrel)$ is strictly positive, then the difference will always be strictly positive.
\end{proof}

\begin{theorem}
\label{thm:metrics:concave}
If $\normalization(i,\numrel)$ strictly positive and the second derivative of $\exposure$ is (strictly) positive, then $\metric$ is (strictly) concave in contiguous swap depth.
\end{theorem}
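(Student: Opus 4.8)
The plan is to reduce each of the two swaps to one application of Theorem~\ref{thm:metrics:swap} and then compare the two resulting increments factor by factor. Being concave in contiguous swap depth means: given $i<j$ with $\qrels_i<\qrels_{i+1}$ and $\qrels_j<\qrels_{j+1}$, the increase in $\metric$ obtained by swapping the items at positions $i$ and $i+1$ is at least the increase obtained by swapping the items at positions $j$ and $j+1$ (and strictly more under the strict hypothesis). Each of these is exactly a ``swap up'' of a relevant item past one immediately adjacent nonrelevant item, so Theorem~\ref{thm:metrics:swap} applies with the nonrelevant position playing the role of $k$.

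The key point is that for a \emph{contiguous} swap the intervening index range in the increment formula of Theorem~\ref{thm:metrics:swap} is empty. If the relevant item at position $i+1$ is the $r$th-ranked relevant item, then $\ell=r$ and $\RPx_\ell=i+1$, the summation drops out, and the gain equals $\normalization(r,\numrel)\bigl(\exposure(i)-\exposure(i+1)\bigr)$. Similarly, if the relevant item at position $j+1$ is the $s$th-ranked relevant item, the gain from swapping $j$ and $j+1$ equals $\normalization(s,\numrel)\bigl(\exposure(j)-\exposure(j+1)\bigr)$. I would also record a structural consequence of the hypotheses: position $i+1$ is relevant while position $j$ is not, so $i+1\neq j$, and together with $i<j$ this forces $i+1<j$; hence position $i+1$ sits strictly above position $j+1$ and, being itself relevant, has recall level $r\le s$.

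It then remains to show $\normalization(r,\numrel)\bigl(\exposure(i)-\exposure(i+1)\bigr)\ge\normalization(s,\numrel)\bigl(\exposure(j)-\exposure(j+1)\bigr)$. Both sides are products of nonnegative quantities: the exposure differences are positive because $\exposure$ is strictly decreasing, and the normalization weights are positive by hypothesis. For the exposure factor, a nonnegative second derivative of $\exposure$ makes $t\mapsto\exposure(t)-\exposure(t+1)$ nonincreasing, so $i<j$ gives $\exposure(i)-\exposure(i+1)\ge\exposure(j)-\exposure(j+1)$, with strict inequality when the second derivative is strictly positive. Combining this with the comparison of the normalization weights at recall levels $r$ and $s$ and multiplying through yields the desired inequality, strict under the strict hypothesis.

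The step I expect to be the main obstacle is the normalization factor: unlike the exposure comparison, which is immediate from convexity, controlling $\normalization(r,\numrel)$ against $\normalization(s,\numrel)$ requires using the structure of recall-level normalization, and it is where the positivity assumption on $\normalization$ and, for the strict conclusion, the strict convexity of $\exposure$ do real work, since one must ensure that whatever larger weight a deeper relevant item may receive cannot overcome the smaller exposure gap at greater depth. This interaction between depth-decreasing exposure gaps and recall-level normalization is the heart of the argument.
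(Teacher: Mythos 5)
Your setup is right — the two contiguous swaps do reduce to the increments $\normalization(r,\numrel)\bigl(\exposure(i)-\exposure(i+1)\bigr)$ and $\normalization(s,\numrel)\bigl(\exposure(j)-\exposure(j+1)\bigr)$, and the exposure comparison via convexity is correct. But the step you defer to the end, ``the comparison of the normalization weights at recall levels $r$ and $s$,'' is not a detail to be filled in: it is a genuine gap, and the stated hypotheses cannot close it. Positivity of $\normalization$ gives no control whatsoever on the ratio $\normalization(r,\numrel)/\normalization(s,\numrel)$, and for standard metrics the weights go the \emph{wrong} way: $\normalization_{\ap}(i,\numrel)=i/\numrel$ is increasing in the recall level, so the deeper relevant item receives the \emph{larger} weight. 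Worse, the claim you are trying to prove is actually false under your reading of the definition. Take $\exposure(i)=1/i$ (strictly convex) and $\normalization(i,\numrel)=2^i$, which is strictly positive and satisfies top-heaviness (each weight only shrinks when the recall level is shifted down). With a nonrelevant item at position $1$, relevant items at positions $2$ through $10$, a nonrelevant item at position $11$ and a relevant item at position $12$, the shallow swap gains $2^1/(1\cdot 2)=1$ while the deep swap gains $2^{10}/(11\cdot 12)\approx 7.8$. So no argument from these hypotheses alone can deliver the inequality when $r\neq s$.

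The paper's proof escapes this by proving a different comparison: it fixes the recall level and varies only the depth. It takes two rankings $\ranking,\ranking'$ whose relevant-position vectors agree except in coordinate $j$, with $\RPx_j<\RPy_j$, and compares the gain from a contiguous upward swap of that \emph{same} $j$th relevant item in each ranking. Both increments then carry the identical factor $\normalization(j,\numrel)$, which cancels, and only the convexity of $\exposure$ is needed: $\exposure(\RPx_j-1)-\exposure(\RPx_j)\geq\exposure(\RPy_j-1)-\exposure(\RPy_j)$. In other words, the theorem is being read as ``the benefit of promoting the $j$th relevant item by one position decays with its depth,'' not as a comparison between two different swap sites within a single ranking. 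If you want to salvage your within-one-ranking version, you would need an additional hypothesis tying $\normalization(r,\numrel)(\exposure(i)-\exposure(i+1))$ to $\normalization(s,\numrel)(\exposure(j)-\exposure(j+1))$ — for instance that $\normalization$ is nonincreasing in the recall level, which fails for $\ap$ and $\ndcg$ — so the same-recall-level reading is the only one the theorem's hypotheses support.
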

\begin{proof}
Let $\ranking,\ranking'\in\rankings$ be  two rankings  whose relevant position vectors differ in one element $j$ (i.e., $\forall i\neq j, \RPx_i=\RPy_i$) and $\RPx_j<\RPy_j$.  The metric difference for moving the $j$th relevant item up one positions in $\ranking$ is, 
\begin{align*}
\Delta_{\RPx_j,\RPx_j-1}\metric(\ranking,\relset)&=\normalization(j,\numrel)(\exposure(\RPx_j-1)-\exposure(\RPx_j))
\end{align*}
If the second derivative of $\exposure$ is positive, since $\RPx_j<\RPy_j$,
\begin{align*}
(\exposure(\RPx_j-1)-\exposure(\RPx_j)) &\geq (\exposure(\RPy_j-1)-\exposure(\RPy_j))
\end{align*}
Moreover, because $\normalization(i,\numrel) > 0$,
\begin{align*}
\normalization(j,\numrel)(\exposure(\RPx_j-1)-\exposure(\RPx_j)) &\geq \normalization(j,\numrel)(\exposure(\RPy_j-1)-\exposure(\RPy_j))\\
\Delta_{\RPx_j,\RPx_j-1}\metric(\ranking,\relset) &\geq \Delta_{\RPy_j,\RPy_j-1}\metric(\ranking,\relset)
\end{align*}
Where the inequality is strict if the second derivative of $\exposure$ is strictly positive.
\end{proof}

\begin{theorem}
\label{thm:metrics:completeness}
If $\normalization(i,\numrel)$ is defined for $\numrel=0$, then $\metric$ is complete.
\end{theorem}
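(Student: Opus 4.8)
The plan is to unwind the definition of completeness and observe that at $\numrel=0$ the defining summation of Equation~\ref{eq:metric} is vacuous. Completeness requires that $\metric(\ranking,\relset)$ be defined for every $\ranking\in\rankings$ and every $\relset\in\docsetsAll$, including the case $\relset=\emptyset$, i.e.\ $\numrel=|\relset|=0$; so the entire content of the theorem is to check that the right-hand side of Equation~\ref{eq:metric} makes sense when $\numrel=0$.

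First I would fix an arbitrary $\ranking\in\rankings$ and take $\relset=\emptyset$. Then $\RPx$ is the empty ($0\times 1$) relevance-projection vector, and the index set $\{1,\dots,\numrel\}=\{1,\dots,0\}$ of the sum in Equation~\ref{eq:metric} is empty, so no index $i$ is ever instantiated and $\exposure$ is never evaluated. The only part of the summand $\exposure(\RPx_i)\normalization(i,\numrel)$ that structurally depends on $\numrel$ rather than on such an index is the normalization factor $\normalization(i,\numrel)$, and the declared domain of $\normalization$ is $\rankPositions\times\rankPositions$ with $\rankPositions=\Pintegers$, which does not a priori include $\numrel=0$. The hypothesis that $\normalization(\cdot,\numrel)$ is defined at $\numrel=0$ is precisely what is needed for the metric formula to be well-formed on the extended domain. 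Adopting the standard convention that an empty sum equals $0$, we conclude $\metric(\ranking,\emptyset)=0\in\NNreals$.

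Since $\ranking$ was arbitrary, this exhibits $\metric$ as a function $\rankings\times\docsetsAll\rightarrow\NNreals$, which is exactly the assertion that $\metric$ is complete (and, e.g., $\rbp$ and $\rr$ satisfy the hypothesis, whereas $\ap$ and $\ndcg$, whose normalization factors involve division by a quantity that vanishes at $\numrel=0$, do not). The argument is essentially bookkeeping; the only point requiring care---and the nearest thing to an obstacle---is articulating the sense in which the hypothesis is used: not that $\normalization(i,0)$ is ever actually evaluated, but that the defining expression must be well-typed on $\rankings\times\docsetsAll$ before the empty-sum convention can be invoked.
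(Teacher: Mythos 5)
Your proposal is correct and follows essentially the same route as the paper's (much terser) proof: both rest on the observation that the only $\numrel$-dependent factor in Equation~\ref{eq:metric} is $\normalization$, so defining it at $\numrel=0$ suffices for $\metric$ to be defined there. Your additional remarks about the empty sum evaluating to $0$ and the hypothesis being a well-typedness condition are sound refinements of the same argument.
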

\begin{proof}
The only factor in Equation \ref{eq:metric} that depends on $\numrel$ is $\normalization$.  If it is defined for $\numrel=0$, then $\metric$ is defined as well.
\end{proof}

We note that, of the remaining properties, although we cannot prove every top-heavy recall-level metric will satisfy them, there are top-heavy recall-level metrics that do.

\subsection{Properties of Total Search Efficiency}
Let $\metric$ be TSE  with an arbitrary exposure function $\exposure$ and normalization function $\normalization$.  Because TSE is a top-heavy recall-level metric, we know that it satisfies all of the properties in Section \ref{app:proofs:metrics:topheavy} \textit{except} those conditional on $\normalization(i,\numrel) > 0$, since $\normalization_{\esl_3}(i,\numrel) = 0$ when $i<\numrel$.
\subsection{Properties of lexirecall}
\begin{theorem}
\label{thm:lexirecall:monotonicity:size}
lexirecall is monotonically increasing in retrieval size.
\end{theorem}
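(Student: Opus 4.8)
For a preference-based method rather than a metric, ``monotonically increasing in retrieval size'' should be read as: if $\topkranking'$ is obtained from $\topkranking$ by appending one item at position $\topk+1$, with pessimistic imputations $\ranking$ and $\ranking'$, then $\ranking'$ is never dispreferred to $\ranking$, i.e.\ $\ranking'\lexirecallpref\ranking$ or $\ranking'\lexirecalleq\ranking$. The plan is to reuse the two-case split from the proof of Theorem~\ref{thm:metrics:monotonicity:size} --- the appended item is either nonrelevant or relevant --- and in each case read off the relevant-position vectors (write $\RPx$ for $\ranking$ and $\RPy$ for $\ranking'$, as in that proof) and then invoke the rank-position characterization of lexirecall from Appendix~\ref{app:proofs:lexirecall}: $\ranking'\lexirecallpref\ranking$ exactly when $\RPy_i<\RPx_i$ at the last recall level $i$ where $\RPy_i\neq\RPx_i$, and $\ranking'\lexirecalleq\ranking$ when $\RPy=\RPx$.

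If the appended item is nonrelevant, then, exactly as in the first case of Theorem~\ref{thm:metrics:monotonicity:size}, the retrieved relevant items keep their positions and the pessimistically imputed positions of the unretrieved relevant items are unchanged, so $\RPy=\RPx$ and $\ranking'\lexirecalleq\ranking$. If the appended item is relevant, then $\topkranking$ contained $\tilde{\numrel}<\numrel$ relevant items, and (again as in Theorem~\ref{thm:metrics:monotonicity:size}) passing to $\ranking'$ amounts to moving one relevant item from the imputed position $\numdocs-(\numrel-\tilde{\numrel}-1)$ up to position $\topk+1$. I would then record: $\RPy_i=\RPx_i$ for $i\le\tilde{\numrel}$; $\RPy_{\tilde{\numrel}+1}=\topk+1$ whereas $\RPx_{\tilde{\numrel}+1}=\numdocs-(\numrel-\tilde{\numrel}-1)$ (using $\topk+1>\topk\ge\RPx_{\tilde{\numrel}}$ to see the new item slots in at recall level $\tilde{\numrel}+1$); and, since under pessimistic imputation the remaining $\numrel-\tilde{\numrel}-1$ unretrieved relevant items fill the bottom of the corpus in both rankings, $\RPy_i=\RPx_i=\numdocs-\numrel+i$ for $i\ge\tilde{\numrel}+2$. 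Hence the vectors agree at every recall level except $\tilde{\numrel}+1$, and the standing assumption $\numdocs-\topk>\numrel$ gives $\topk+1\le\numdocs-\numrel<\numdocs-(\numrel-\tilde{\numrel}-1)$, so $\RPy_{\tilde{\numrel}+1}<\RPx_{\tilde{\numrel}+1}$. The last --- indeed only --- differing recall level is $\tilde{\numrel}+1$, so by the characterization $\ranking'\lexirecallpref\ranking$. Combining the two cases proves the theorem (and shows the preference is \emph{strict} whenever a relevant item is appended).

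I expect the only real work to be bookkeeping around pessimistic imputation: confirming that incrementing the count of retrieved relevant items by one shifts the imputed tail from positions $\numdocs-(\numrel-\tilde{\numrel})+1,\dots,\numdocs$ to $\numdocs-(\numrel-\tilde{\numrel}-1)+1,\dots,\numdocs$, so that $\RPx$ and $\RPy$ coincide at recall levels $\tilde{\numrel}+2,\dots,\numrel$, and checking the boundary inequality $\topk+1<\numdocs-(\numrel-\tilde{\numrel}-1)$, which is precisely where the assumption $\numdocs-\topk>\numrel$ enters. With the position vectors pinned down, the conclusion drops out of the rank-position characterization with no further difficulty.
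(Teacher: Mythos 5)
Your proposal is correct and follows essentially the same route as the paper's proof: the same split on whether the appended item is relevant, the same identification of the change as moving a relevant item from imputed position $\numdocs-(\numrel-\tilde{\numrel}-1)$ to position $\topk+1$, and the same use of the standing assumption $\numdocs-\topk>\numrel$ to conclude a strict lexirecall preference at recall level $\tilde{\numrel}+1$. Your write-up is in fact slightly more careful than the paper's (which has a subscript typo, writing $\RPx_{\numrel+1}$ where $\RPx_{\tilde{\numrel}+1}$ is meant, and does not spell out that the tails coincide for $i\geq\tilde{\numrel}+2$), but the argument is the same.
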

\begin{proof}
Let $\topkranking'$ be $\topkranking$ with another item appended, with pessimistically imputed rankings $\ranking'$ and $\ranking$.  If the new item is nonrelevant, then $\forall i, \RPx_i=\RPy_i$ and, therefore,  $\metric(\ranking,\relset)=\metric(\ranking',\relset)$ and lexirecall is trivially non-decreasing.  Next, consider the case where the new item is relevant.  This can only happen if $\topkranking$ includes $\tilde{\numrel}<\numrel$ relevant items.  As such, this is equivalent to swapping a relevant item in the imputed  ranking $\ranking$ from position $\numdocs-(\numrel-\tilde{\numrel}-1)$ to position $\topk+1$.  Because of pessimistic imputation, the bottom $\numrel-\tilde{\numrel}-1$ relevant items will be tied.  However, $\RPy_{\numrel+1}=\topk+1$ and $\RPx_{\numrel+1}=\numdocs-(\numrel-\tilde{\numrel}-1)$.
If $\numrel-\tilde{\numrel} < \numdocs-\topk$, then lexirecall will be positive.
\end{proof}

\begin{theorem}
\label{thm:lexirecall:monotonicity:nonrel}
lexirecall is monotonically decreasing in nonrelevance.
\end{theorem}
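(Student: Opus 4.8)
The plan is to piggyback on the first case of the proof of Theorem~\ref{thm:lexirecall:monotonicity:size}. Recall that ``monotonically decreasing in nonrelevance'' concerns what happens when we grow $\topkranking$ by appending a single \emph{nonrelevant} item; I would show that this operation changes nothing at all, which is the (weak) monotonic behavior required.

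Concretely, let $\topkranking'$ be $\topkranking$ with a nonrelevant item appended at position $\topk+1$, and let $\ranking,\ranking'$ be the corresponding pessimistically imputed total rankings. The appended item is nonrelevant, so it occupies none of the positions recorded in the relevant-position vector; the $\tilde{\numrel}$ already-retrieved relevant items keep their ranks $1,\dots,\topk$; and the $\numrel-\tilde{\numrel}$ imputed relevant items still land in the bottom block, at positions $\numdocs-(\numrel-\tilde{\numrel})+1,\dots,\numdocs$ --- unchanged, since the standing assumption $\numdocs-\topk>\numrel$ guarantees position $\topk+1$ lies strictly above that block. Hence $\RPx_i=\RPy_i$ for every $i\in[1\isep\numrel]$. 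By the characterization of lexirecall in terms of the largest coordinate on which the two position vectors disagree, there is no such coordinate, so $\ranking\lexirecalleq\ranking'$: appending a nonrelevant item leaves the lexirecall ordering over rankings completely untouched.

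There is no real obstacle here --- the only point worth stating carefully is the reading of ``decreasing in nonrelevance'' for a preference method: as in Theorem~\ref{thm:lexirecall:monotonicity:size}, we interpret it as the induced preference being unaffected (the pre- and post-append rankings become tied), the preference-based analogue of a scalar metric value being non-increasing. So the proof reduces to a one-line appeal to the nonrelevant case already handled in the proof of Theorem~\ref{thm:lexirecall:monotonicity:size}.
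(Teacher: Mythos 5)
Your proof is correct and follows essentially the same route as the paper's: the appended item is nonrelevant, so the relevant-position vector $\RPx$ is unchanged (both for the retrieved prefix and the pessimistically imputed bottom block), hence the two rankings are tied under lexirecall and the property holds trivially. The extra care you take in checking that the imputed positions do not shift is a harmless elaboration of the paper's one-line argument.
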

\begin{proof}
Let $\topkranking'$ be $\topkranking$ with a nonrelevant item appended.  Since the new item is nonrelevant,  $\RPx$ will be unchanged and lexirecall will be the same and is trivially non-increasing.
\end{proof}

\begin{theorem}
\label{thm:lexirecall:monotonicity:rel}
lexirecall is strictly monotonically increasing in relevance.
\end{theorem}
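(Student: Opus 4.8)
The plan is to establish the claim directly from the pessimistic-imputation bookkeeping, reusing the position-based characterization of the lexirecall preference from Appendix~\ref{app:proofs:lexirecall}. Following the setup of Theorem~\ref{thm:lexirecall:monotonicity:size}, let $\topkranking'$ be $\topkranking$ with a \emph{relevant} item appended at position $\topk+1$, and let $\ranking,\ranking'$ be the corresponding pessimistically imputed total rankings with relevant-position vectors $\RPx,\RPy$. Appending a relevant item is only possible when $\topkranking$ contains $\tilde{\numrel}<\numrel$ relevant items, so the retrieved relevant count rises from $\tilde{\numrel}$ to $\tilde{\numrel}+1$. The key structural fact is that pessimistic imputation anchors the $\numrel-\tilde{\numrel}$ unretrieved relevant items of $\ranking$ at the bottom of the corpus, occupying positions $\numdocs-(\numrel-\tilde{\numrel})+1,\ldots,\numdocs$; hence $\RPx_i=\numdocs-\numrel+i$ for every $i>\tilde{\numrel}$.

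First I would compare $\RPx$ and $\RPy$ coordinate by coordinate. The retrieved relevant items at recall levels $1,\ldots,\tilde{\numrel}$ are untouched, so $\RPx_i=\RPy_i$ there. Promoting one relevant item to position $\topk+1$ removes a single element from the bottom-anchored imputed block; because the remaining block is still anchored at $\numdocs$ in $\ranking'$, the surviving $\numrel-\tilde{\numrel}-1$ imputed items occupy positions $\numdocs-(\numrel-\tilde{\numrel}-1)+1,\ldots,\numdocs$, giving $\RPy_i=\numdocs-\numrel+i$ for every $i>\tilde{\numrel}+1$. Thus $\RPx_i=\RPy_i$ for all $i\neq\tilde{\numrel}+1$ as well, and the two vectors differ in the \emph{single} coordinate $i=\tilde{\numrel}+1$, where $\RPy_{\tilde{\numrel}+1}=\topk+1$ while $\RPx_{\tilde{\numrel}+1}=\numdocs-(\numrel-\tilde{\numrel})+1$.

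I would then invoke the position-based characterization of lexirecall from Appendix~\ref{app:proofs:lexirecall}: $\ranking'\lexirecallpref\ranking$ holds precisely when $\RPy_i<\RPx_i$ at $i=\max\{j\in[1\isep\numrel]:\RPx_j\neq\RPy_j\}$. Since the unique differing coordinate is $\tilde{\numrel}+1$, the maximal differing index is $i=\tilde{\numrel}+1$, and it remains only to verify $\RPy_{\tilde{\numrel}+1}<\RPx_{\tilde{\numrel}+1}$, i.e. $\topk+1<\numdocs-(\numrel-\tilde{\numrel})+1$, equivalently $\numrel-\tilde{\numrel}<\numdocs-\topk$. This is where the standing assumption of the appendix enters: the unretrieved relevant items form a subset of all unretrieved items, and $\numdocs-\topk>\numrel\geq\numrel-\tilde{\numrel}$, so the inequality is \emph{strict}. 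Therefore $\ranking'\lexirecallpref\ranking$, which is exactly the asserted strict monotonic increase in relevance.

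The main obstacle---and the only step that is more than bookkeeping---is the coordinate-alignment argument: one must confirm that detaching a single item from the pessimistically imputed tail leaves every surviving imputed position fixed, so that the leximin comparison collapses onto a \emph{single} recall level rather than cascading across the tail. This is precisely what upgrades the conclusion from non-decreasing (Theorem~\ref{thm:lexirecall:monotonicity:size}) to strictly increasing: the lone surviving difference is the promoted relevant item moving strictly upward from the corpus floor to position $\topk+1$, a gap guaranteed positive by the standing assumption $\numdocs-\topk>\numrel$.
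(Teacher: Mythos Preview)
Your argument is correct and follows essentially the same route as the paper, which simply points to the second case of Theorem~\ref{thm:lexirecall:monotonicity:size}. In fact, your version is more explicit than the paper's: you spell out why the surviving imputed tail positions coincide in $\RPx$ and $\RPy$ (so the leximin comparison collapses to the single index $\tilde{\numrel}+1$) and you explicitly invoke the standing assumption $\numdocs-\topk>\numrel$ to secure the strict inequality, whereas the paper leaves the latter as a conditional ``if $\numrel-\tilde{\numrel}<\numdocs-\topk$''.
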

\begin{proof}
See the second case in the proof Theorem \ref{thm:lexirecall:monotonicity:size}.
\end{proof}

\begin{theorem}
\label{thm:lexirecall:swap}
$\metric$ is  monotonically increasing in swapping up.
\end{theorem}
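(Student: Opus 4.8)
The plan is to reduce the statement to the characterization of $\lexirecallpref$ in terms of sorted relevant-position vectors established in Section~\ref{sec:leximin:leximin}: $\ranking'\lexirecallpref\ranking$ holds iff $\RPy_i<\RPx_i$ at the largest recall level $i\in[1\isep\numrel]$ at which $\RPx$ and $\RPy$ disagree. So it suffices to understand how the sorted vector $\RPx$ transforms under a swap-up. Following the setup used in the proof of Theorem~\ref{thm:metrics:swap}, I would fix $\ranking\in\rankings$, let the moved relevant item be the $j$th-ranked relevant item (currently at position $\RPx_j$), and let $k<\RPx_j$ be the position of the nonrelevant item above it that it is swapped with; write $\ranking'$ for the resulting permutation and $\RPy$ for its relevant-position vector. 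Since the only items changing position are one relevant and one nonrelevant item, the relevant-position multiset changes in a completely controlled way, and the argument is purely combinatorial.

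Next I would compute the re-sort. Because $k$ is not among the $\RPx_i$, set $\ell=\min\{i\in[1\isep\numrel]:\RPx_i>k\}$, the recall level of the first relevant item below position $k$; note $\ell\le j$ since $\RPx_j>k$. The relevant-position multiset of $\ranking'$ is $\big(\{\RPx_1,\dots,\RPx_{\numrel}\}\setminus\{\RPx_j\}\big)\cup\{k\}$, and sorting it yields $\RPy_i=\RPx_i$ for $i<\ell$, $\RPy_\ell=k$, $\RPy_i=\RPx_{i-1}$ for $\ell<i\le j$, and $\RPy_i=\RPx_i$ for $i>j$. Hence $\RPx$ and $\RPy$ agree at every recall level $i<\ell$ and every $i>j$, while at every level $\ell\le i\le j$ they satisfy $\RPy_i<\RPx_i$: at level $\ell$ because $k<\RPx_\ell$ by definition of $\ell$, and at levels $\ell<i\le j$ because $\RPy_i=\RPx_{i-1}<\RPx_i$ by strict monotonicity of the sorted vector.

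Finally I would conclude: the two vectors differ at level $j$ (there $\RPy_j\in\{k,\RPx_{j-1}\}$, and in either case $\RPy_j<\RPx_j$) and agree at all higher levels, so $\max\{i:\RPx_i\neq\RPy_i\}=j$ with $\RPy_j<\RPx_j$; therefore $\ranking'\lexirecallpref\ranking$, which in particular makes $\ranking'$ weakly preferred and establishes monotonicity in swapping up. There is no real obstacle here: the only care needed is the bookkeeping of the re-sorted vector and the observation $\ell\le j$, both immediate from $k<\RPx_j$. I would also remark, in contrast with Theorems~\ref{thm:lexirecall:monotonicity:size} and~\ref{thm:lexirecall:monotonicity:nonrel}, that since the whole argument depends only on the relative order of relevant items, the pessimistically imputed tail needs no separate treatment; and that the argument in fact produces a strict preference, consistent with (and stronger than) the stated claim.
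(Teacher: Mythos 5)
Your proposal is correct and follows essentially the same route as the paper's proof: identify that $\RPx_i=\RPy_i$ for $i>j$, so the lexicographic comparison is decided at level $j$, where $\RPy_j$ is either $k$ (your case $\ell=j$, the paper's case $k>\RPx_{j-1}$) or $\RPx_{j-1}$ (your case $\ell<j$, the paper's case $k<\RPx_{j-1}$), and in either case $\RPy_j<\RPx_j$. Your version just spells out the full re-sorted vector more explicitly; the observation that the argument actually yields a strict preference is likewise implicit in the paper's proof.
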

\begin{proof}
Given a ranking $\ranking\in\rankings$, let $j$ be the $j$th relevant item and $k<\RPx_j$ the position of an arbitrary  nonrelevant item ranked above it.  Since $\forall i>j, \RPx_i=\RPy_i$, we only need to compare $\RPx_j$ and $\RPy_j$.  If $k > \RPx_{j-1}$, then, because $k<\RPx_j$, $\ranking'\succ\ranking$.  If $k < \RPx_{j-1}$, then $\RPy_j=\RPx_{j-1}$.  Because $\RPx_{j-1}<\RPx_j$, $\ranking'\succ\ranking$.
\end{proof}

The properties concavity, boundedness, localization, completeness, and realizability are specific to metric-based evaluation and therefore cannot be analyzed for preference-based evaluation like lexirecall.
\section{Robustness}
\label{app:proofs}
In order to demonstrate the relationship between the order of relevant items $\RPx$ and the order of either $\users$ or $\providers$, we first introduce a representation of subsets of positions of relevant items.  Let $\relsubsets=[1\isep \numrel]^+$ be the set of all non-empty sorted lists of integers between $1$ and $\numrel$.  Moreover, let $\relsubsetsGT{i}=[i+1\isep \numrel]^+$ and $\relsubsetsGEQ{i}=[i\isep \numrel]^+$.  This is a way to represent each individual $\user\in\users$, for example, in Figure \ref{fig:users}.  To see how, notice that, because both $\users$ and $\providers$ are also power sets of $\numrel$ distinct integers, there is a one to one correspondence with  $\relsubsets$.  Specifically,
\begin{align*}
\forall \relsubset\in\relsubsets, \user &= \{i \in \relsubset | \rlx_{\RPx_i} \}\\
\forall \user\in\users,  \relsubset &= \sort(\{\rlx_{\RPx_{j}}\in\user|j \})
\end{align*}
and similar for $\providers$.

Given a way to represent each $\user\in\users$, we need to sort these users according to their utility.  We can use our metric definitions $\metric(\rlx,\user)$ and $\pmetric(\rlx,\relset,\provider)$ to define a partial ordering over $\relsubsets$.  This naturally can be represented as a graph where edges reflect that the utility of one user is greater than another.  We present an example based on $\numrel=5$ of the transitive reduction of the partially ordered set for both $\users$ and $\providers$ in Figure \ref{fig:rangesets}.  Although we will present formal proofs, these visualizations help understand the utility structure behind these sets of users.

\begin{figure}
\begin{subfigure}[b]{\linewidth}
\centering
\includegraphics[width=\linewidth]{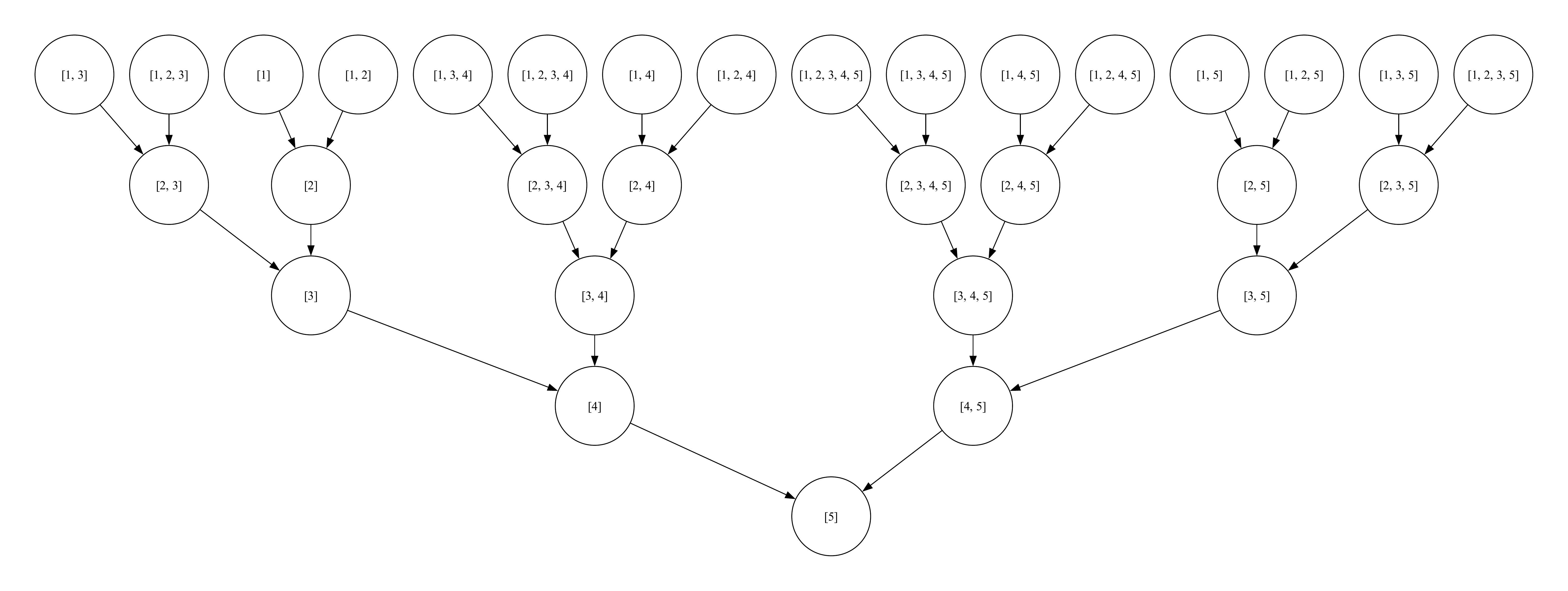}
\caption{Users}\label{fig:rangesets:users}
\end{subfigure}
\begin{subfigure}[b]{\linewidth}
\centering
\includegraphics[width=\linewidth]{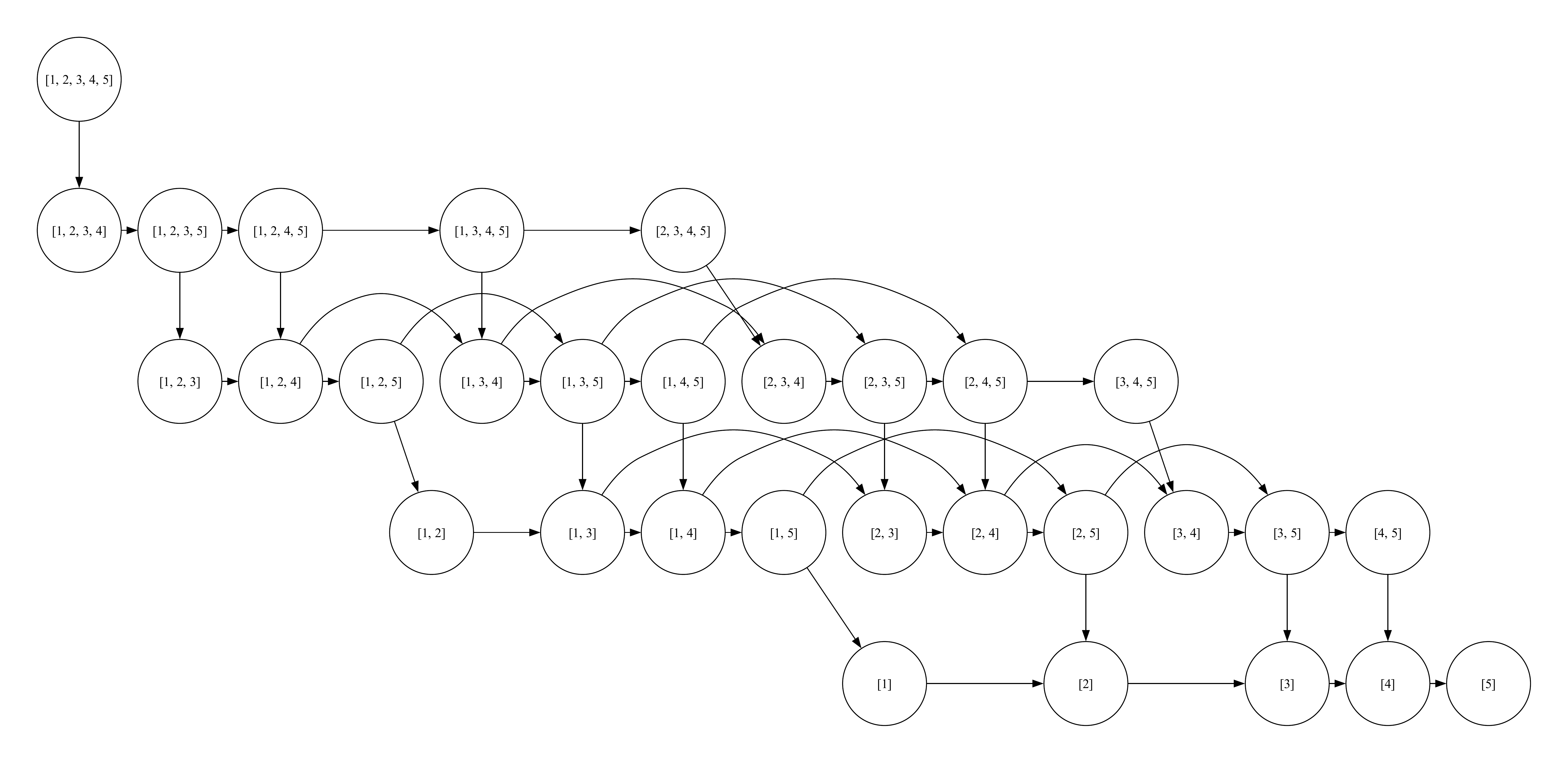}
\caption{Providers}\label{fig:rangesets:providers}
\end{subfigure}
\caption{Transitive reduction of the partially ordered set $\relsubsets$ and $\numrel=5$.  Nodes correspond to elements in $\relsubsets$.  A directed edge from $\relsubset$ to $\relsubset'$ if $\metric(\rlx,b)>\metric(\rlx,b')$ (Figure \ref{fig:rangesets:users}) or  $\pmetric(\rlx,\relset,b)>\pmetric(\rlx,\relset,b')$ (Figure \ref{fig:rangesets:providers}) based on the properties of top-heavy recall-level metrics (Definition \ref{def:top-heavy}).} \label{fig:rangesets}
\end{figure}

\subsection{Worst-case and Total Search Efficiency}
\label{app:proofs:wc}
Notice that, in our example, there is a single, unique minimal element for both $\users$ and $\providers$.  In this section, we will prove that this element will always be associated with the lowest-ranked relevant item, $\tse(\ranking,\relset)$.  Throughout these proofs, we will use abbreviations for top-heavy recall-level metric properties defined in Section \ref{sec:preliminaries}.
\begin{theorem}
\label{thm:worstcase}
If $\metric$ is a top-heavy recall-level metric and $\normalization(1,1) =1$, then
\begin{align*}
\metricworst_{\metric}(\ranking,\relset)&=\tse(\ranking,\relset)
\end{align*}
\end{theorem}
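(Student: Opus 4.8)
The plan is to pin down the worst-off user explicitly: it is the singleton $\user^\star = \{\invRP_\numrel\}$ whose only member is the lowest-ranked relevant item. The proof then splits into showing (i) $\metric(\ranking,\user^\star) = \tse(\ranking,\relset)$ and (ii) $\metric(\ranking,\user) \ge \tse(\ranking,\relset)$ for every $\user\in\users$. Part (i) is a one-line computation: since $|\user^\star| = 1$ and the sole relevant item of $\user^\star$ occupies position $\RPx_\numrel$ in $\ranking$, Equation~\ref{eq:metric} gives $\metric(\ranking,\user^\star) = \exposure(\RPx_\numrel)\normalization(1,1) = \exposure(\RPx_\numrel)$ using the hypothesis $\normalization(1,1)=1$, and this equals $\tse(\ranking,\relset)$ by Equation~\ref{eq:tse}.

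For part (ii), I would fix an arbitrary $\user\in\users$, set $k = |\user| \ge 1$, and let $q_1 < q_2 < \cdots < q_k$ denote the positions in $\ranking$ of the items belonging to $\user$. The crucial observation is that the top-heaviness inequality in Definition~\ref{def:top-heavy} is a property of the exposure and normalization functions, and therefore holds for \emph{every} relevance set together with its induced position vector; applying it with the relevance set taken to be $\user$ itself (so its ``$\numrel$'' is $k$ and its position vector is $(q_1,\dots,q_k)$) and offset $j = k-1 \in [0\isep k)$ yields $\metric(\ranking,\user) = \sum_{i=1}^{k}\exposure(q_i)\normalization(i,k) \ge \exposure(q_k)\normalization(1,1) = \exposure(q_k)$. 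Because $\user\subseteq\relset$, the lowest-ranked item of $\user$ cannot sit strictly below the lowest-ranked item of $\relset$, i.e.\ $q_k \le \RPx_\numrel$, and since $\exposure$ is strictly monotonically decreasing we obtain $\exposure(q_k) \ge \exposure(\RPx_\numrel) = \tse(\ranking,\relset)$. Combining (i) and (ii), $\metricworst_\metric(\ranking,\relset) = \min_{\user\in\users}\metric(\ranking,\user) = \tse(\ranking,\relset)$.

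The main obstacle — essentially the only nontrivial point — is justifying this ``self-application'' of top-heaviness: one must state carefully that, relative to the information need of user $\user$, the relevant items are exactly those of $\user$ and their positions in $\ranking$ are $q_1 < \cdots < q_k$, so that Definition~\ref{def:top-heavy} applies verbatim with $k$ playing the role of $\numrel$. Everything else — monotonicity of $\exposure$ and the identity $\normalization(1,1)=1$ — is routine bookkeeping. I would also note that the provider statement $\metricworst_\pmetric(\ranking,\relset) = \tse(\ranking,\relset)$ follows from the same template, and in fact more directly, since $\pmetric$ in Equation~\ref{eq:pmetric} carries no normalization factor: for a provider owning items at positions $q_1 < \cdots < q_k$ one simply drops the non-negative terms $\exposure(q_1),\dots,\exposure(q_{k-1})$ to get $\pmetric(\ranking,\relset,\provider) \ge \exposure(q_k) \ge \exposure(\RPx_\numrel)$, with equality attained by the singleton provider $\{\invRP_\numrel\}$.
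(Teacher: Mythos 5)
Your proposal is correct and follows essentially the same route as the paper's proof: both apply the top-heaviness inequality with the maximal offset to reduce an arbitrary user's score to $\exposure(q_k)\normalization(1,1)$, invoke $\normalization(1,1)=1$, and then use the strict monotonicity of $\exposure$ together with $q_k\le\RPx_\numrel$ to reach $\tse(\ranking,\relset)$, with equality attained by the singleton user at the lowest-ranked relevant item. Your explicit remark that top-heaviness must be read as a property of $(\exposure,\normalization)$ applicable to the user's own induced relevance set is a point the paper leaves implicit, but it is the same argument.
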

\begin{proof}
We want to show that, for any $\user\in\users$, the performance $\metric(\ranking,\user)$ is greater than or equal to $\tse(\ranking,\relset)$, with equality for the user associated with the lowest-ranked relevant item.
Recall that each $\relsubset\in\relsubsets$ is associated with a $\user\in\users$,
\begin{align*}
\metric(\ranking,\relsubset)&\geq\exposure(\RPx_{\relsubset_{-1}})\normalization(1,1) &\text{top-heaviness}\\
&=\exposure(\RPx_{\relsubset_{-1}})  &\normalization(1,1)=1\\
&\geq\exposure(\RPx_\numrel) &\text{$\RPx_{\relsubset_{-1}} \leq \RPx_\numrel$ and $\exposure(i) > \exposure(i'), \forall i < i'$}\\
&=\tse(\ranking,\relset)
\end{align*}
\end{proof}

\begin{theorem}
\label{thm:worstcase:providers}
If $\pmetric$ is associated with a top-heavy recall-level metric $\metric$, then
\begin{align*}
\metricworst_{\pmetric}(\ranking,\relset)&=\tse(\ranking,\relset)
\end{align*}
\end{theorem}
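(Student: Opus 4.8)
The plan is to follow the template of the proof of Theorem~\ref{thm:worstcase}, but to take advantage of the fact that the provider utility $\pmetric$ defined in Equation~\ref{eq:pmetric} is simply a sum of exposures over the provider's relevant items, with no recall renormalization. As a consequence, and in contrast with the user case, top-heaviness (Definition~\ref{def:top-heavy}) will play no role: the only facts I will use are that $\exposure$ is non-negative (its codomain is $\NNreals$) and strictly monotonically decreasing, both of which hold for any recall-level metric by Definition~\ref{def:recall-level}.

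First I would prove the lower bound $\metricworst_{\pmetric}(\ranking,\relset) \geq \tse(\ranking,\relset)$. Fix an arbitrary $\provider \in \providers = \relset^+$ and let $\ell = \max\{\, i \in [1\isep\numrel] : \invRP_i \in \provider \,\}$ be the recall level of the provider's lowest-ranked relevant item, which is well defined because $\provider$ is non-empty. Discarding the remaining (non-negative) terms from Equation~\ref{eq:pmetric} gives $\pmetric(\ranking,\relset,\provider) \geq \exposure(\RPx_\ell)$. Since $\RPx$ is sorted increasingly and $\ell \leq \numrel$, we have $\RPx_\ell \leq \RPx_\numrel$, so strict monotonic decrease of $\exposure$ yields $\exposure(\RPx_\ell) \geq \exposure(\RPx_\numrel) = \tse(\ranking,\relset)$; taking the minimum over $\provider$ finishes this direction.

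For the matching inequality I would exhibit the worst-off provider explicitly: the singleton $\provider^\star = \{\invRP_\numrel\}$ owning only the lowest-ranked relevant item, which is admissible since $\providers$ is the entire power set of $\relset$ minus the empty set. Equation~\ref{eq:pmetric} then gives $\pmetric(\ranking,\relset,\provider^\star) = \exposure(\RPx_\numrel) = \tse(\ranking,\relset)$, hence $\metricworst_{\pmetric}(\ranking,\relset) \leq \tse(\ranking,\relset)$, and combining the two bounds proves the statement.

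There is no substantive obstacle here. The only two points that warrant a word of care are (i) that dropping the non-$\provider$ terms in Equation~\ref{eq:pmetric} is a legitimate relaxation, which is immediate from $\exposure$ taking values in $\NNreals$, and (ii) that the singleton at the bottom is genuinely an element of $\providers$, which is immediate from $\providers = \relset^+$. It is worth remarking in the writeup why top-heaviness, which was indispensable in Theorem~\ref{thm:worstcase} to control the recall normalization $\normalization(\cdot,|\user|)$ as the user's relevant set shrank, is not needed for providers: cumulative positive exposure is a plain sum and therefore trivially monotone in the provider's item set, so the minimum is attained at a singleton with no further argument.
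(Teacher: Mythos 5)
Your proof is correct and follows essentially the same route as the paper's (two-sentence) argument: the non-negativity of the summands in Equation~\ref{eq:pmetric} lets the minimum over $\providers$ collapse to a singleton, and monotonically decreasing exposure identifies that singleton as the lowest-ranked relevant item, giving $\exposure(\RPx_{\numrel})=\tse(\ranking,\relset)$. Your write-up merely makes the two bounds explicit and correctly observes that top-heaviness is not needed here, which is consistent with the paper's proof.
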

\begin{proof}
Because all summands of Equation \ref{eq:pmetric} are positive, we know that the minimum $\provider\in\providers$ will correspond to the smallest summand.  Moreover, because of the monotonically decreasing exposure, this is the exposure of the lowest ranked relevant item, which is exactly $\tse(\ranking,\relset)$.
\end{proof}

\subsection{Leximin and Lexicographic Recall}
\label{app:proofs:lexirecall}
Let $\GPx_i=\exposure(\RPx_i)$.  Since $\RPx$ is sorted in increasing order and $\exposure(x)$ is monotonically decreasing in $x$, $\GPx$ is monotonically decreasing. Assuming we measure the performance of a ranking $\ranking$ for a user $\user$ as $\metric(\ranking,\user)$, then we define $\UPx$ to be the $\numusers\times 1$ vector  containing the value of $\metric(\ranking,\user)$ for each $\user\in\users$.   We will assume that $\UPx$, like $\GPx$, is sorted in decreasing order.

\begin{lemma}
\label{lem:min}
If $\GPx\leximinprefthm\GPy$ and $\normalization(1,1) =1$, then the minimum non-tied user is in $\relsubsetsGEQ{k}-\relsubsetsGT{k}$, where $k=\min\{i\in[1\isep\numrel] | \GPx_i\neq\GPy_i\}$.
\end{lemma}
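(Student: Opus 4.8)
The plan is to pass through the standard correspondence between a user $\user\in\users$ and a sorted list $\relsubset\in\relsubsets$ of recall levels, under which — since $\RPx$ is already sorted — the user's relevant positions are $\RPx_{\relsubset_1}<\ldots<\RPx_{\relsubset_t}$ with $t=|\relsubset|$, so $\metric(\rlx,\relsubset)=\sum_{s=1}^{t}\GPx_{\relsubset_s}\normalization(s,t)$ and similarly for $\rly$ with $\GPy$. From $\GPx\leximinpref\GPy$ I will use only that $\GPx$ and $\GPy$ agree on every recall level above $k$, while $\GPx_k>\GPy_k$ (equivalently $\RPx_i=\RPy_i$ for $i>k$ and $\RPx_k<\RPy_k$, since $\GPx_i$ is a strictly decreasing function of $\RPx_i$).

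First I would peel off the tied users. If $\relsubset\in\relsubsetsGT{k}$, every recall level in $\relsubset$ exceeds $k$, so $\GPx_{\relsubset_s}=\GPy_{\relsubset_s}$ for all $s$ and hence $\metric(\rlx,\relsubset)=\metric(\rly,\relsubset)$; thus every non-tied user has smallest recall level at most $k$, i.e. lies in $\relsubsets\setminus\relsubsetsGT{k}$. The tied users contribute the identical sub-multiset of values to the sorted vectors $\UPx$ and $\UPy$, so they cancel in the leximin comparison, which is therefore decided among the non-tied users; the ``minimum non-tied user'' is the non-tied user of smallest utility, with coincidences resolved by passing to the next-smallest value as in the definition of $\leximinpref$.

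The core step is that no non-tied user with smallest recall level below $k$ can be the decisive one. Given a non-tied $\relsubset$ with $\min\relsubset=j<k$, I would exhibit the witness $\relsubset^{\dagger}=(\relsubset\cap[k\isep\numrel])\cup\{k\}$, which has least element $k$, is non-tied (its only recall level in $\{i:\GPx_i\ne\GPy_i\}$ is $k$, sitting at sorted position $1$, so the two metric values differ by $(\GPx_k-\GPy_k)\normalization(1,|\relsubset^{\dagger}|)>0$, using the mild nondegeneracy $\normalization(1,\cdot)>0$), and satisfies $\metric(\rlx,\relsubset^{\dagger})\le\metric(\rlx,\relsubset)$ and $\metric(\rly,\relsubset^{\dagger})\le\metric(\rly,\relsubset)$. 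The last two inequalities are where Definition~\ref{def:top-heavy} is used: the recall levels of $\relsubset$ below $k$ are its top-ranked items, so repeatedly deleting all but the largest of them — top-heaviness — weakly lowers the metric under either ranking and leaves $\{m\}\cup(\relsubset\cap[k+1\isep\numrel])$ with $m=\max(\relsubset\cap[1\isep k-1])$; this user and $\relsubset^{\dagger}$ differ only in their position-$1$ level, and replacing $m$ by $k$ changes the value by $(\GPx_m-\GPx_k)\normalization(1,\cdot)\ge0$ under $\rlx$ and by $(\GPy_m-\GPy_k)\normalization(1,\cdot)\ge0$ under $\rly$ (both nonnegative since $m<k$ forces $\GPx_m>\GPx_k$ and $\GPy_m>\GPy_k$). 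When $k\in\relsubset$ the construction collapses to merely deleting the below-$k$ levels and only this deletion step is needed. It follows that the smallest non-tied utility of each ranking is achieved inside $\relsubsetsGEQ{k}-\relsubsetsGT{k}$, which together with the first step places the minimum non-tied user there.

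The hard part will be the top-heaviness bookkeeping: $\normalization(s,t)$ depends on the user's cardinality $t$, which changes under deletion, and Definition~\ref{def:top-heavy} is exactly the hypothesis that tames this, so the argument must proceed by successive single-level deletions rather than term-by-term comparison, with care taken over the two cases $k\in\relsubset$ and $k\notin\relsubset$. A secondary point needing care is promoting ``smallest non-tied utility'' to ``leximin-decisive position of $\UPx$ versus $\UPy$'': this relies on the tied users forming a common sub-multiset of both sorted vectors, and, should the smallest non-tied values of $\rlx$ and $\rly$ coincide, on iterating the above domination up the sorted order.
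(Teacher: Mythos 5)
Your proof is correct and follows essentially the same route as the paper's: cancel the tied users in $\relsubsetsGT{k}$ via independence of identical consequences, then dominate every user containing a recall level below $k$ by a witness in $\relsubsetsNewMin$ obtained by top-heavy deletion of the below-$k$ levels followed by substituting $k$ for the largest surviving one. Your single witness $\relsubset^{\dagger}=(\relsubset\cap[k\isep\numrel])\cup\{k\}$ subsumes the paper's three cases (all levels $\le k$; $k\in\relsubset$; $k\notin\relsubset$), and you are in fact slightly more careful than the paper in checking domination under both rankings and in flagging the $\normalization(1,\cdot)>0$ nondegeneracy needed for the witness to be non-tied.
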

\begin{proof}
Let $k=\min\{i\in[1\isep\numrel] | \GPx_i\neq\GPy_i\}$.  Because each $\relsubset\in\relsubsetsGT{k}$ is comprised of indices $i>k$ and because $\forall i>k, \RPx_i=\RPy_i$, for each associated user $\user$, $\metric(\RPx,\user)=\metric(\RPy,\user)$.  When computing leximin, by the axiom of the Independence of Identical Consequences \cite{barbara:leximin},  we can remove all of the elements from $\UPx$ and $\UPy$ associated with the $\relsubsetsGT{k}$.  This means that the minimum non-tied pair will be in $\relsubsets-\relsubsetsGT{k}$.

Now we need to show that $\relsubsetsNewMin$ is the set of minimal elements of  $\relsubsets-\relsubsetsGT{k}$.      This holds if we can show that for every element $\relsubset\in\relsubsets-\relsubsetsGEQ{k}$, there exists $\relsubset'\in\relsubsetsNewMin$, such that  $\metric(\rlx,\relsubset)\geq\metric(\rlx,\relsubset')$.

If $\relsubset_{-1} \leq k$ (as depicted by the pink nodes Figure \ref{fig:rangesets}), then let $\relsubset'=\{k\}$,
\begin{align*}
\metric(\rlx,\relsubset)
&\geq \exposure(\RPx_{\relsubset_{-1}})\normalization(1,1) &\text{top-heaviness}\\
&= \exposure(\RPx_{\relsubset_{-1}}) &\normalization(1,1) =1\\
&\geq \exposure(\RPx_{k}) &\text{$\RPx_{\relsubset_{-1}} \leq \RPx_{k}$ and $\exposure(i) > \exposure(i'), \forall i < i'$}\\
&=\metric(\rlx,\relsubset')
\end{align*}

If $\relsubset_{-1} > k$ and $k\in \relsubset$ (as depicted by the blue nodes Figure \ref{fig:rangesets}), then we can let $\relsubset'=\{j\in \relsubset: j\geq k\}$.

Let $j=|\relsubset|-|\relsubset'|$,
\begin{align*}
\metric(\rlx,\relsubset)
&\geq\sum_{i=j+1}^{|\relsubset|}\exposure(\RPx_{\relsubset_i})\normalization(i-j,|\relsubset|-j) &\text{top-heaviness}\\
&=\sum_{i=1}^{|\relsubset'|}\exposure(\RPx_{\relsubset'_i})\normalization(i,|\relsubset'|)&\text{definition of $\relsubset'$}\\
&=\metric(\rlx,\relsubset')
\end{align*}

If $\relsubset_{-1} > k$ and $k\not\in \relsubset$ (as depicted by the green nodes Figure \ref{fig:rangesets}), then $\relsubset'=\{k\}\cup\{j\in \relsubset: j>k\}$.  Let $j=|\relsubset|-|\relsubset'|$,
\begin{align*}
\metric(\rlx,\relsubset)
&\geq\sum_{i=j+1}^{|\relsubset|}\exposure(\RPx_{\relsubset_i})\normalization(i-j,|\relsubset|-j)&\text{top-heaviness}\\
&=\exposure(\RPx_{\relsubset_{j+1}})\normalization(1,|\relsubset|-j)+\sum_{i=j+2}^{|\relsubset|}\exposure(\RPx_{\relsubset_i})\normalization(i-j,|\relsubset|-j)\\
&=\exposure(\RPx_{\relsubset_{j+1}})\normalization(1,|\relsubset'|)+\sum_{i=2}^{|\relsubset'|}\exposure(\RPx_{\relsubset'_i})\normalization(i,|\relsubset'|)&\text{definition of $j$ and $\relsubset'$}\\
&>\exposure(\RPx_{\relsubset'_1})\normalization(1,|\relsubset'|)+\sum_{i=2}^{|\relsubset'|}\exposure(\RPx_{\relsubset'_i})\normalization(i,|\relsubset'|)&\text{$\RPx_{\relsubset_{j+1}} < \RPx_k$ and $\exposure(i) > \exposure(i'), \forall i < i'$}\\
&=\sum_{i=1}^{|\relsubset'|}\exposure(\RPx_{\relsubset'_i})\normalization(i,|\relsubset'|)\\
&=\metric(\rlx,\relsubset')
\end{align*}

\end{proof}

\begin{figure}
\begin{subfigure}[b]{\linewidth}
\centering
\includegraphics[width=\linewidth]{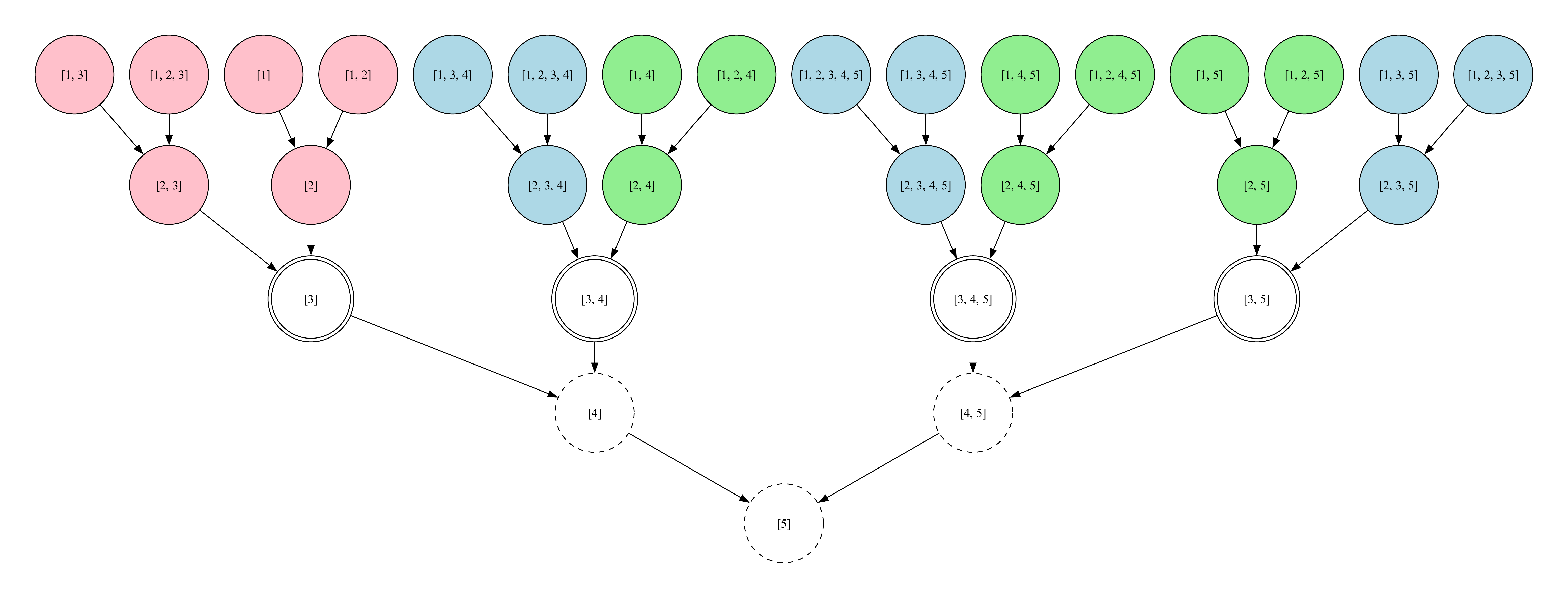}
\caption{Users. Partitioning $\relsubsets$ when $k=3$.   $\relsubsets$ is partitioned into  $\relsubsetsNewMin$  (double circle), $\relsubsetsGT{k}$ (dashed circle), and $\relsubsets-\relsubsetsGEQ{k}$ (colored circles, described in the proof of Lemma \ref{lem:min}). This figure best rendered in color.} \label{fig:leximin-sets}
\end{subfigure}
\vspace{1em}
\begin{subfigure}[b]{\linewidth}
\centering
\includegraphics[width=\linewidth]{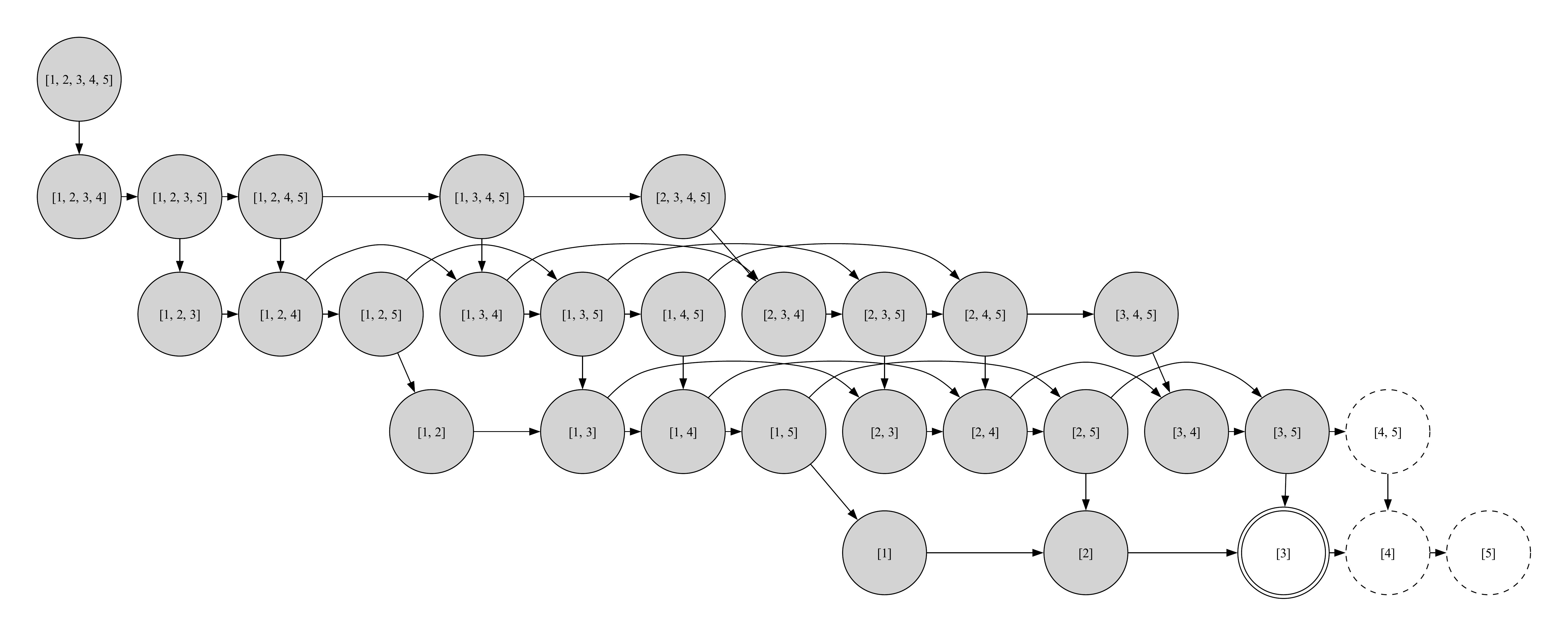}
\caption{Providers. Assuming $k=3$, then the provider with the minimum exposure is $\relsubset=[k]$.}
\end{subfigure}
\caption{Leximin}
\end{figure}

\begin{figure}
\centering
\vspace{3em}
\includegraphics[width=0.35\linewidth]{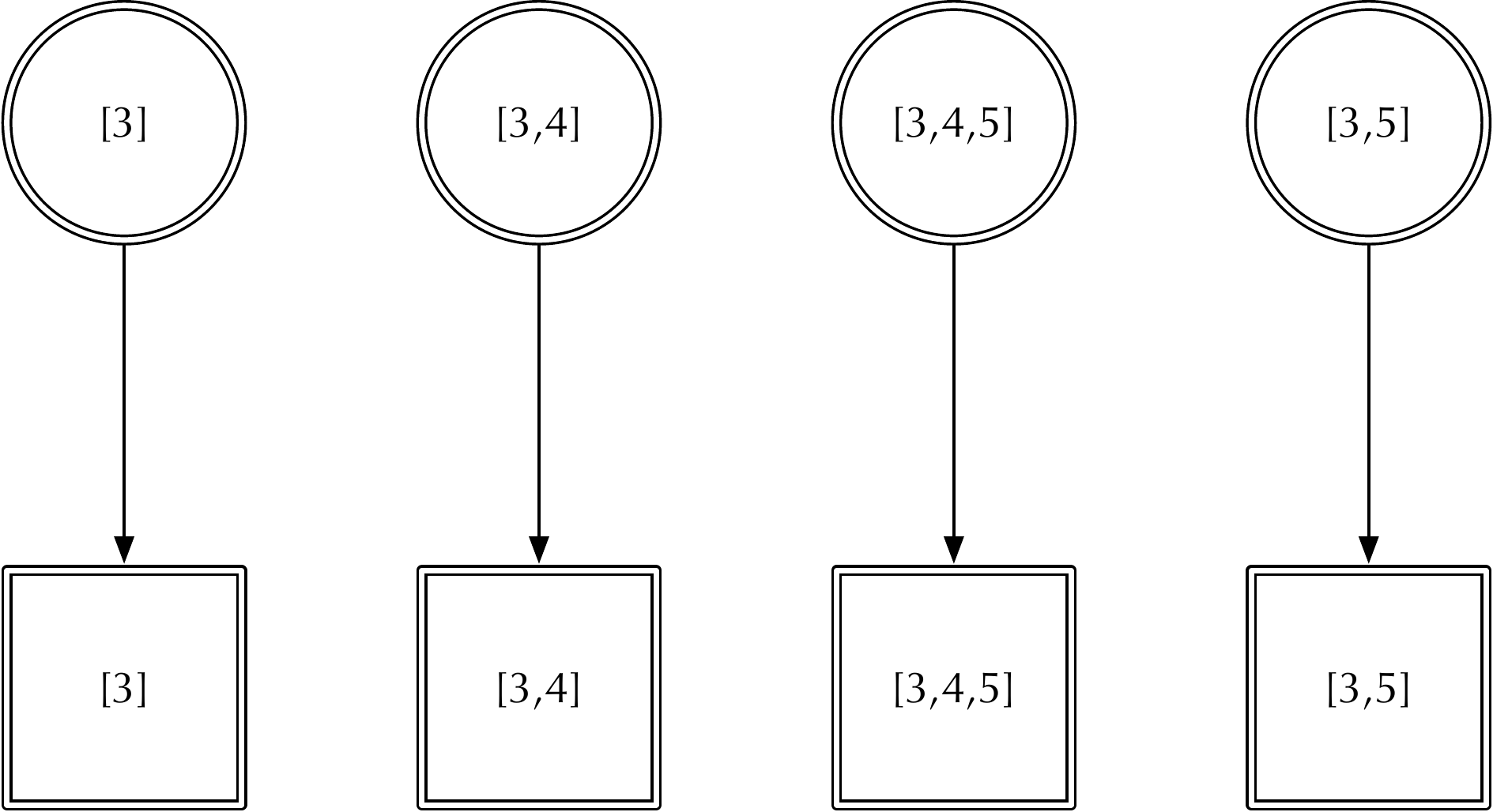}
\caption{Domination of $\relsubsetsNewMin$.  Let $\rlx$ be represented by circles and $\rly$ be represented by squares. } \label{fig:leximin-sets-pref}
\end{figure}

\begin{theorem}
\label{thm:leximin}
\begin{align*}
\GPx\leximinprefthm\GPy\rightarrow \UPx\leximinprefthm\UPy
\end{align*}
\end{theorem}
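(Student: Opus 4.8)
The plan is to collapse the comparison of the two $\numusers$-dimensional utility vectors $\UPx$ and $\UPy$ down to a comparison over the small family $\relsubsetsNewMin$ of ``worst-off'' users isolated by Lemma~\ref{lem:min}, and there exhibit an explicit value-by-value domination of $\rly$ by $\rlx$.

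First I would let $k$ be the bottom-most recall level at which $\GPx$ and $\GPy$ disagree (the index appearing in Lemma~\ref{lem:min}); since $\GPx_i=\exposure(\RPx_i)$ and $\exposure$ is strictly decreasing, the hypothesis $\GPx\leximinprefthm\GPy$ unpacks to $\RPx_j=\RPy_j$ for every $j>k$ together with $\RPx_k<\RPy_k$.  Next I would discard the users that carry no information: for any $\relsubset\in\relsubsetsGT{k}$ the value $\metric(\rlx,\relsubset)$ (resp.\ $\metric(\rly,\relsubset)$) depends only on positions $\RPx_j$ (resp.\ $\RPy_j$) with $j>k$, and these agree, so the corresponding coordinates of $\UPx$ and $\UPy$ are equal; by the axiom of Independence of Identical Consequences they may be deleted without changing the leximin order, leaving only the users indexed by $\relsubsets-\relsubsetsGT{k}=(\relsubsets-\relsubsetsGEQ{k})\cup\relsubsetsNewMin$.

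The core computation is on $\relsubsetsNewMin$.  Every $\relsubset$ there has $\relsubset_1=k$ and $\relsubset_i>k$ for $i>1$, hence $\RPx_{\relsubset_i}=\RPy_{\relsubset_i}$ for $i>1$, and the recall-level decomposition (Definition~\ref{def:recall-level}) gives
\begin{align*}
\metric(\rlx,\relsubset)-\metric(\rly,\relsubset)=\normalization(1,|\relsubset|)\bigl(\exposure(\RPx_k)-\exposure(\RPy_k)\bigr)\ge 0,
\end{align*}
with strict inequality at $\relsubset=\{k\}$ because $\normalization(1,1)=1$ and $\exposure(\RPx_k)>\exposure(\RPy_k)$.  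Pairing each $\relsubset\in\relsubsetsNewMin$ with itself thus dominates every $\rly$-utility on $\relsubsetsNewMin$ by the matching $\rlx$-utility, strictly somewhere.  Lemma~\ref{lem:min} moreover tells us that each user in $\relsubsets-\relsubsetsGEQ{k}$ has utility---for either ranking---at least that of some user in $\relsubsetsNewMin$, so the smallest surviving coordinates of both vectors live in $\relsubsetsNewMin$.  Combining these two facts should force the bottom-up leximin comparison of $\UPx$ and $\UPy$ to be decided in favour of $\rlx$, yielding $\UPx\leximinprefthm\UPy$ (in the boundary case $k=\numrel$ this collapses to $\tse(\rlx)>\tse(\rly)$ via Theorem~\ref{thm:worstcase}).

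I expect the delicate point to be precisely this last synthesis.  Lemma~\ref{lem:min} pins down only the single minimum, whereas leximin inspects the entire sorted tail, and the ``outer'' users in $\relsubsets-\relsubsetsGEQ{k}$ involve positions $<k$ on which $\rlx$ and $\rly$ are not comparable; one must rule out such a user producing a bottom-most discrepancy that favours $\rly$ ahead of one that favours $\rlx$.  Heuristically, an outer user whose utility improves under $\rly$ is one whose weight is concentrated on the high ranks, so its value is large and it sits near the top of the sorted vector, far from where leximin is resolved; turning this into a proof---iterating the peeling of equal minima and re-invoking the domination map from the proof of Lemma~\ref{lem:min} at each stage, while carefully accounting for ties---is the substantive part of the argument.
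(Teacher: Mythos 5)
Your decomposition is the same as the paper's: split off the case $\GPx_{\numrel}>\GPy_{\numrel}$ (handled by Theorem~\ref{thm:worstcase}), otherwise locate $k$, delete the tied users indexed by $\relsubsetsGT{k}$ via independence of identical consequences, and reduce the comparison to $\relsubsetsNewMin$, where the recall-level decomposition gives $\metric(\rlx,\relsubset)-\metric(\rly,\relsubset)=\normalization(1,|\relsubset|)\bigl(\exposure(\RPx_k)-\exposure(\RPy_k)\bigr)$. That computation is exactly the paper's Case~2. But you stop short of a proof at the final synthesis, and you misdiagnose what it requires: there is no need to ``iterate the peeling of equal minima'' or to track where the outer users of $\relsubsets-\relsubsetsGEQ{k}$ land in the sorted tail.

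The closing observation is a single application of Lemma~\ref{lem:min} to \emph{each} ranking. The domination map in that lemma's proof uses only top-heaviness and the monotonicity of $\exposure$ for one fixed ranking, so it applies verbatim to $\rly$ as well as $\rlx$: after the tied users are deleted, the minimum surviving coordinate of $\UPx$ is $\min_{\relsubset\in\relsubsetsNewMin}\metric(\rlx,\relsubset)$ and likewise for $\UPy$. Pointwise strict domination on $\relsubsetsNewMin$ then forces these two minima to differ, with $\rlx$'s the larger (evaluate $\rlx$'s minimizer under $\rly$). Since leximin on decreasing-sorted vectors is resolved at the bottom-most coordinate where they differ, and the very last coordinates already differ in $\rlx$'s favour, the comparison terminates immediately; an outer user favouring $\rly$ sits strictly above both minima and is never consulted. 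Note that this clean exit relies on the domination on $\relsubsetsNewMin$ being strict for the minimizing subset, which is what the paper asserts (implicitly using $\normalization(1,j)>0$); your weaker claim---strictness only at $\relsubset=\{k\}$---would indeed leave open the tie $m_x=m_y$ you are worried about, so you should either prove strictness on all of $\relsubsetsNewMin$ as the paper does or add the positivity assumption on $\normalization(1,\cdot)$ explicitly.
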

\begin{proof}
\setcounter{case}{0}
\begin{case}[$\GPx_{\numrel}>\GPy_{\numrel}$]
Theorem \ref{thm:worstcase} means that $\UPx_{\numusers} = \GPx_{\numrel}$ and $\UPy_{\numusers} = \GPy_{\numrel}$ and the implication is true.
\end{case}
\begin{case}[$\GPx_{\numrel}=\GPy_{\numrel}$]
Because $\GPx\leximinpref\GPy$, we know that there exists a $k$ such that $\GPx_k>\GPy_k$ and $\forall i\in[k+1\isep\numrel], \GPx_i=\GPy_i$ (as depicted in Figure \ref{fig:leximin-sets-pref}).  We know from Lemma \ref{lem:min} that the lowest-ranked, non-tied users are in $\relsubsetsNewMin$.

We want to show that, for every element $\relsubset\in\relsubsetsNewMin$, $\rlx$ is preferred to $\rly$.
\begin{align*}
\metric(\rlx,\relsubset)&=\sum_{i=1}^{|\relsubset|}\exposure(\RPx_{\relsubset_i})\normalization(i,|\relsubset|)\\
&=\exposure(\RPx_{k})\normalization(1,|\relsubset|)+\sum_{i=2}^{|\relsubset|}\exposure(\RPx_{\relsubset_i})\normalization(i,|\relsubset|)\\
&=\exposure(\RPx_{k})\normalization(1,|\relsubset|)+\sum_{i=2}^{|\relsubset|}\exposure(\RPy_{\relsubset_i})\normalization(i,|\relsubset|)&\text{$\forall i>k, \RPx_i=\RPy_i$}\\
&>\exposure(\RPy_{k})\normalization(1,|\relsubset|)+\sum_{i=2}^{|\relsubset|}\exposure(\RPy_{\relsubset_i})\normalization(i,|\relsubset|)&\text{$\RPx_k<\RPy_k$ and $\exposure(i) > \exposure(i'), \forall i < i'$}\\
&=\sum_{i=1}^{|\relsubset|}\exposure(\RPy_{\relsubset_i})\normalization(i,|\relsubset|)\\
&=\metric(\rly,\relsubset)
\end{align*}

\end{case}
\end{proof}

\begin{theorem}
\label{thm:leximineq}
\begin{align*}
\GPx\leximineqthm\GPy\rightarrow \UPx\leximineqthm\UPy
\end{align*}
\end{theorem}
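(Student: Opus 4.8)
The plan is to reduce leximin-equality of the exposure vectors to coordinatewise equality, transport that equality to the rank-position vectors via injectivity of $\exposure$, and then push it through the recall-level definition of $\metric$ to every user simultaneously.

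First I would unpack the hypothesis $\GPx \leximineqthm \GPy$. By construction $\GPx$ and $\GPy$ are the exposure values already presented in decreasing order, so two such equal-length vectors are leximin-indifferent exactly when no index witnesses a strict difference, i.e. when $\GPx_i = \GPy_i$ for all $i \in [1\isep\numrel]$. Since $\GPx_i = \exposure(\RPx_i)$ and $\exposure$ is \emph{strictly} monotonically decreasing, hence injective, this forces $\RPx_i = \RPy_i$ for every $i$; that is, $\RPx = \RPy$ as rank-position vectors.

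Second I would invoke the correspondence between users $\user\in\users$ and index subsets $\relsubset\in\relsubsets$ from the start of Appendix~\ref{app:proofs}, together with the recall-level form of the metric (Definition~\ref{def:recall-level}): for every $\relsubset\in\relsubsets$,
\begin{align*}
\metric(\rlx,\relsubset) &= \sum_{i=1}^{|\relsubset|}\exposure(\RPx_{\relsubset_i})\normalization(i,|\relsubset|)\\
&= \sum_{i=1}^{|\relsubset|}\exposure(\RPy_{\relsubset_i})\normalization(i,|\relsubset|) = \metric(\rly,\relsubset),
\end{align*}
using $\RPx=\RPy$. Thus the multiset of user utilities is the same for $\rlx$ and $\rly$, so their decreasing sorts coincide, $\UPx_i = \UPy_i$ for all $i\in[1\isep\numusers]$, which is precisely $\UPx \leximineqthm \UPy$. (The provider case is identical, replacing $\metric$ by $\pmetric$ in the displayed chain.)

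There is no substantive obstacle here; the only points deserving care are in the first step — confirming that leximin-equality of the \emph{already sorted} exposure vectors is literally coordinatewise equality, and noting that it is the \emph{strict} (not merely weak) monotonicity of $\exposure$ that allows pulling the equality back from exposures to $\RPx$ and $\RPy$. With $\RPx = \RPy$ in hand the conclusion is immediate; indeed this is just the degenerate branch of Theorem~\ref{thm:leximin} in which the scan over recall levels never encounters a strict inequality.
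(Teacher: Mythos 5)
Your proposal is correct and follows essentially the same route as the paper's proof: leximin-indifference of the (sorted) exposure vectors forces coordinatewise equality $\GPx_i=\GPy_i$, which makes $\metric(\rlx,\user)=\metric(\rly,\user)$ for every $\user\in\users$ and hence $\UPx\leximineqthm\UPy$. Your extra detour through injectivity of $\exposure$ to get $\RPx=\RPy$ is harmless but unnecessary, since the recall-level metric depends on the positions only through their exposures.
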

\begin{proof}
The only way that $\GPx\leximineq\GPy$ is when $\forall i\in[1\isep\numrel], \GPx_i=\GPy_i$.  If this is the case, then $\forall \user\in\users, \metric(\rlx,\user)=\metric(\rly,\user)$ and, therefore, $\UPx\leximineq\UPy$.
\end{proof}

\begin{theorem}
\label{thm:leximin:provider}
\begin{align*}
\GPx\leximinprefthm\GPy\rightarrow \PPx\leximinprefthm\PPy
\end{align*}
where, for $\provider\in\providers$, $\PPx_{\provider}=\pmetric(\rlx,\relset,\provider)$.
\end{theorem}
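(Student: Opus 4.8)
The plan is to reduce this provider statement to a one-line comparison of minima, exploiting that provider utility is \emph{additive} in the relevant positions (in contrast to the user metric, whose normalization forced the three-way case split of Lemma~\ref{lem:min}). First I would recall from the setup of Appendix~\ref{app:proofs} that each $\provider\in\providers$ is identified with a non-empty $\relsubset\in\relsubsets$ of recall levels, so that $\PPx_\provider=\pmetric(\rlx,\relset,\provider)=\sum_{i\in\relsubset}\exposure(\RPx_i)=\sum_{i\in\relsubset}\GPx_i$; exactly as was done for $\UPx$, I sort $\PPx$ and $\PPy$ in decreasing order before applying leximin. Now put $k=\max\{i\in[1\isep\numrel]:\GPx_i\neq\GPy_i\}$. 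Since $\GPx$ and $\GPy$ are both monotonically decreasing and $\GPx\leximinpref\GPy$, we have $\GPx_k>\GPy_k$ while $\GPx_i=\GPy_i$ for every $i>k$.

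Next I would strip the tied coordinates. For each $\relsubset\in\relsubsetsGT{k}$ --- a provider using only levels above $k$ --- we have $\sum_{i\in\relsubset}\GPx_i=\sum_{i\in\relsubset}\GPy_i$, so the corresponding coordinates of $\PPx$ and $\PPy$ are equal. Applying the Independence of Identical Consequences axiom \cite{barbara:leximin} (the same device used in Lemma~\ref{lem:min}) one provider at a time, I may delete all of $\relsubsetsGT{k}$ from both vectors; it then suffices to show that the restriction of $\PPx$ to $\relsubsets\setminus\relsubsetsGT{k}$ leximin-dominates the corresponding restriction of $\PPy$. (If $k=\numrel$ then $\relsubsetsGT{k}$ is empty and nothing is deleted; this degenerate case is in any event covered by Theorem~\ref{thm:worstcase:providers}, which already gives $\tse(\rlx)>\tse(\rly)$.)

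Then I would compute the two restricted minima. Every $\relsubset\in\relsubsets\setminus\relsubsetsGT{k}$ contains some index $j\leq k$, and since $\GPx$ is monotonically decreasing and each $\GPx_i\in\NNreals$, we get $\sum_{i\in\relsubset}\GPx_i\geq\GPx_j\geq\GPx_k$, with equality attained at $\relsubset=\{k\}$. Hence the smallest entry of the restricted $\PPx$ is exactly $\GPx_k$, and by the identical argument the smallest entry of the restricted $\PPy$ is $\GPy_k$. Because $\GPx_k>\GPy_k$, the two decreasing-sorted restricted vectors already disagree in their very last (smallest) coordinate, in favour of $\rlx$; by the leximin definition this is precisely the statement that the restriction of $\PPx$ leximin-dominates that of $\PPy$, and re-inserting the deleted tied coordinates (Independence of Identical Consequences again) yields $\PPx\leximinpref\PPy$.

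I do not expect a serious obstacle; the only care needed is the bookkeeping tying the notion of \emph{smallest value} to \emph{last coordinate of the decreasing sort}, and verifying that deleting and re-inserting pairwise-equal coordinates preserves $\leximinpref$. Two features make this proof strictly simpler than its user-side counterpart: additivity of $\pmetric$ makes the singleton $\{k\}$ the value-minimiser among non-tied providers, so no analogue of the pink/blue/green partition of Lemma~\ref{lem:min} is needed; and, because provider utility never invokes the normalization function $\normalization$, the theorem requires no hypothesis on $\normalization$, unlike Theorems~\ref{thm:worstcase} and~\ref{thm:leximin}.
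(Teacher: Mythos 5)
Your proof is correct and takes essentially the same route as the paper's: both arguments identify the singleton provider $\{k\}$ as the worst-off non-tied provider (using additivity of $\pmetric$ together with monotonically decreasing exposure, so any provider containing an index $j\leq k$ has utility at least $\GPx_k$, while providers in $\relsubsetsGT{k}$ are all tied) and then conclude from $\GPx_k>\GPy_k$. The only cosmetic difference is that the paper splits into the cases $\GPx_{\numrel}>\GPy_{\numrel}$ and $\GPx_{\numrel}=\GPy_{\numrel}$ and phrases the latter as a contradiction, whereas you treat both uniformly by explicitly deleting the tied coordinates and comparing the restricted minima.
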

\begin{proof}
\setcounter{case}{0}
\begin{case}[$\GPx_{\numrel}>\GPy_{\numrel}$]
We know from the proof of Theorem \ref{thm:worstcase:providers}, the lowest ranked items in $\PPx$ and $\PPy$ correspond to $\exposure(\RPx_{\numrel})=\GPx_{\numrel}$ and $\exposure(\RPy_{\numrel})=\GPy_{\numrel}$, respectively, and so the proof holds in this case.
\end{case}
\begin{case}[$\GPx_{\numrel}=\GPy_{\numrel}$]
Because $\GPx\leximinpref\GPy$, we know that there exists a $k$ such that $\GPx_k>\GPy_k$ and $\forall i\in[k+1,\numrel], \GPx_i=\GPy_i$.  The provider $\relsubset=[k]$ must be the worst-off non-tied element.  Assume that there exists a worse-off non-tied provider.  Because $\relsubset$ is a singleton and because of monotonically decreasing exposure, this worse off provider must be in $\relsubsetsGT{k}$.  However, we know that these providers are all tied and therefore we have a contradiction.
\end{case}
\end{proof}

\section{Optimal Rankings}
\label{app:proofs:opt}
Let $\optimalRankings$ be the set of permutations that rank all of the items in $\relset$ above $\docset-\relset$.  

\begin{theorem}
Given $\dranking\in\optimalRankings$,
\begin{align*}
\min_{\user\in\users}\expectation{\ranking\sim\optimalRankings}{\metric(\ranking,\user)}&\geq\min_{\user\in\users} \metric(\dranking,\user)
\end{align*}
\end{theorem}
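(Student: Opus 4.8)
The plan is to collapse both sides of the inequality to the single quantity $\exposure(\numrel)$. For the right-hand side, note that any $\dranking\in\optimalRankings$ ranks all $\numrel$ relevant items in positions $1,\ldots,\numrel$, so the sorted position vector of $\relset$ under $\dranking$ is $\RPx=(1,2,\ldots,\numrel)$ and in particular $\RPx_\numrel=\numrel$. Theorem~\ref{thm:worstcase} (we keep the assumption $\normalization(1,1)=1$ made there) then gives $\min_{\user\in\users}\metric(\dranking,\user)=\tse(\dranking,\relset)=\exposure(\RPx_\numrel)=\exposure(\numrel)$, a value that does not depend on which optimal deterministic ranking is chosen.

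It therefore suffices to show $\expectation{\ranking\sim\optimalRankings}{\metric(\ranking,\user)}\ge\exposure(\numrel)$ for every $\user\in\users$, since the minimum over $\users$ of the left-hand side is then also at least $\exposure(\numrel)$. I would obtain this from the stronger pointwise bound $\metric(\ranking,\user)\ge\exposure(\numrel)$ holding for \emph{every} $\ranking\in\optimalRankings$ and every $\user\in\users$; the expectation bound (indeed under any distribution supported on $\optimalRankings$, not only the uniform one) then follows by monotonicity of expectation. To prove the pointwise bound, fix $\ranking\in\optimalRankings$ and $\user\in\users$ and apply top-heaviness (Definition~\ref{def:top-heavy}) to $\metric(\ranking,\user)$, where---as in Definition~\ref{def:recall-level}---the position vector $\RPx$ is taken with respect to $\user$: with $j=|\user|-1$ every term but the last drops from the right-hand side, giving $\metric(\ranking,\user)\ge\exposure(\RPx_{|\user|})\normalization(1,1)=\exposure(\RPx_{|\user|})$, where $\RPx_{|\user|}$ is the position of the lowest-ranked item of $\user$ in $\ranking$. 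Since $\user\subseteq\relset$ and every item of $\relset$ occupies one of the positions $1,\ldots,\numrel$ in an optimal ranking, $\RPx_{|\user|}\le\numrel$, so the strict monotone decrease of $\exposure$ gives $\metric(\ranking,\user)\ge\exposure(\numrel)$.

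Chaining the two parts yields $\min_{\user\in\users}\expectation{\ranking\sim\optimalRankings}{\metric(\ranking,\user)}\ge\exposure(\numrel)=\min_{\user\in\users}\metric(\dranking,\user)$, which is the claim. There is no real obstacle here; the two points needing care are (i) invoking top-heaviness at the level of the sub-population $\user$, so that $|\user|$---not $\numrel$---plays the role of the relevant-set size, and (ii) observing that the bound $\metric(\ranking,\user)\ge\exposure(\numrel)$ is uniform over $\optimalRankings$, which is exactly why averaging cannot push it below the deterministic worst case $\exposure(\numrel)$. The strict separation visible in Figure~\ref{fig:stochastic} is a stronger statement that depends on the particular base exposure and normalization functions and is not needed for the inequality proved here.
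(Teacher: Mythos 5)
Your proof is correct, but it takes a different route from the paper's. The paper argues abstractly: it fixes the worst-off user $\wcuser$ for the stochastic ranker, notes that the uniform expectation over $\optimalRankings$ is at least the minimum $\metric(\wcranking,\wcuser)$ attained on that orbit, relaxes to $\min_{\user\in\users}\metric(\wcranking,\user)$, and closes by observing that any two rankings in $\optimalRankings$ induce the same multiset of user utilities, so this minimum equals $\min_{\user\in\users}\metric(\dranking,\user)$ for the given $\dranking$. You instead pin both sides to the explicit value $\exposure(\numrel)$: the right-hand side via Theorem~\ref{thm:worstcase} applied to an optimal ranking (where $\RPx_\numrel=\numrel$), and the left-hand side via a uniform pointwise bound $\metric(\ranking,\user)\geq\exposure(\numrel)$ obtained from top-heaviness at the level of the sub-population $\user$ (correctly using $|\user|$ in place of $\numrel$) together with $\RPx_{|\user|}\leq\numrel$. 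Your version buys something the paper's does not state: the common value $\exposure(\numrel)$ is exhibited explicitly, and the bound visibly holds for \emph{any} distribution supported on $\optimalRankings$, not only the uniform one. The paper's version buys generality in the other direction: it never invokes Theorem~\ref{thm:worstcase} or the hypothesis $\normalization(1,1)=1$ (which you must carry, and do flag), and its ``average $\geq$ minimum plus orbit symmetry'' structure would apply verbatim to any family of rankings with isometric utility distributions, not just $\optimalRankings$. Both arguments are sound; your explicit note that strictness (Figure~\ref{fig:stochastic}) is a separate, exposure-dependent claim is also accurate.
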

\begin{proof}
Let $\wcuser=\argmin_{\user\in\users}\expectation{\ranking\sim\optimalRankings}{\metric(\ranking,\user)}$ be the worst-off user for a stochastic ranking and $\wcranking=\argmin_{\ranking\sim\optimalRankings}\metric(\ranking,\wcuser)$ the worst deterministic ranking in $\optimalRankings$ for $\wcuser$,
\begin{align*}
\expectation{\ranking\sim\optimalRankings}{\metric(\ranking,\wcuser)}&=\frac{1}{\numcollapsed}\sum_{\ranking\sim\optimalRankings}\metric(\ranking,\wcuser)\\
&\geq\metric(\wcranking,\wcuser)\\
&\geq\min_{\user\in\users} \metric(\wcranking,\user)\\
&=\min_{\user\in\users} \metric(\dranking,\user)
\end{align*}
where the final equality follows because  $\wcranking,\dranking\in\optimalRankings$ and, therefore, have isometric distributions of user utility.
\end{proof}

\section{Number of Ties}
\label{app:proofs:numties}

\begin{theorem}
\label{thm:ties:lexirecall}
\begin{align*}
\probthm(\rlx\lexirecalleqthm\rly)&=\frac{\numrel!(\numdocs-\numrel)!}{\numdocs!}
\end{align*}
\end{theorem}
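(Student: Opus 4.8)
The plan is to reduce the lexirecall tie event to equality of relevance projections and then to run an elementary counting argument. First I would recall from Appendix~\ref{app:proofs:lexirecall} (and the discussion in Section~\ref{sec:leximin:leximin}) that lexirecall depends on a ranking $\rlx$ only through its relevance projection $\RPx$, and that $\lexirecallpref$ with $\lexirecalleq$ forms a \emph{total} order on the distinct projection vectors: $\rlx\lexirecallpref\rly$ holds exactly when $\RPx_i<\RPy_i$ for $i=\max\{j\in[1\isep\numrel]:\RPx_j\neq\RPy_j\}$, and this index is well-defined unless $\RPx=\RPy$ (both vectors are sorted with distinct entries on the relevant coordinates, so if they differ as sets there is a largest coordinate where they disagree). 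Hence $\rlx\lexirecalleq\rly$ if and only if $\RPx=\RPy$, and therefore $\probthm(\rlx\lexirecalleqthm\rly)=\probthm(\RPx=\RPy)$.

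Next I would identify the distribution of $\RPx$ under uniform sampling of $\rlx$ from $\rankings$. As noted in Section~\ref{sec:preliminaries:relevance}, the ${\numdocs\choose\numrel}$ distinct projection vectors partition $\rankings$ into classes each of size $\numcollapsed=\numrel!(\numdocs-\numrel)!$, so the relevance-projection map pushes the uniform measure on $\rankings$ forward to the uniform measure on the ${\numdocs\choose\numrel}$ position subsets. Drawing $\rlx$ and $\rly$ independently and uniformly makes $\RPx$ and $\RPy$ i.i.d.\ uniform on this finite set, and the collision probability is
\begin{align*}
\probthm(\RPx=\RPy)
&=\sum_{s}\probthm(\RPx=s)\,\probthm(\RPy=s)\\
&={\numdocs\choose\numrel}\cdot\left({\numdocs\choose\numrel}^{-1}\right)^{2}\\
&={\numdocs\choose\numrel}^{-1}
=\frac{\numrel!(\numdocs-\numrel)!}{\numdocs!},
\end{align*}
where $s$ ranges over the projection vectors. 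This is exactly the claimed identity.

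There is no substantive obstacle here: the one step that needs care is the first, namely justifying that a lexirecall tie is precisely the event $\RPx=\RPy$ — this rests on the fact, established earlier, that leximin comparison of the position vectors yields a total order over distinct projections, so a tie cannot arise from two different projections. Everything afterward is the pushforward-of-uniform observation together with the standard birthday-style collision probability for two i.i.d.\ uniform variables on a set of size ${\numdocs\choose\numrel}$.
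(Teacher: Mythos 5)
Your proposal is correct and follows essentially the same route as the paper's proof: identify the tie event with $\RPx=\RPy$, observe that each projection vector corresponds to $\numcollapsed=\numrel!(\numdocs-\numrel)!$ permutations so $\RPx$ is uniform over the ${\numdocs\choose\numrel}$ position sets, and compute the collision probability $\sum_{\RPx}\probthm(\RPx)^2={\numdocs\choose\numrel}^{-1}$. Your explicit justification that a lexirecall tie forces $\RPx=\RPy$ (via totality of the leximin order on distinct projections) is a point the paper leaves implicit, but the argument is otherwise identical.
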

\begin{proof}
If we sample a ranking uniformly from $\rankings$, the probability of any specific $\RPx$ is,
\begin{align*}
\prob(\RPx)&=\frac{\numcollapsed}{|\rankings|}\\
&={\numdocs\choose\numrel}^{-1}
\end{align*}
Let $\RPall$ be the set of all size $\numrel$ samples of unique integers from $[1\isep \numdocs]$.
\begin{align*}
\prob(\rlx\lexirecalleq\rly)&=\prob(\RPx=\RPy)\\
&=\sum_{\RPx,\RPy\in\RPall}\prob(\RPx)\prob(\RPy)\ident(\RPx=\RPy)\\
&=\sum_{\RPx\in\RPall}\prob(\RPx)^2\\
&={\numdocs\choose\numrel}\times\frac{1}{{\numdocs\choose\numrel}}\times\frac{1}{{\numdocs\choose\numrel}}\\
&=\frac{\numrel!(\numdocs-\numrel)!}{\numdocs!}
\end{align*}

\end{proof}
\begin{theorem}
\label{thm:ties:tse}
\begin{align*}
\probthm(\rlx=_{\tse}\rly)&={\numdocs\choose\numrel}^{-2}\sum_{i=\numrel}^\numdocs{i-1\choose\numrel-1}^2
\end{align*}
\end{theorem}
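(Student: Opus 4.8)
The plan is to reduce the event $\rlx =_{\tse} \rly$ to an event about the rank of the \emph{last} relevant item, and then count subsets. Recall that $\tse(\ranking,\relset) = \exposure(\RPx_{\numrel})$ and that the exposure function is strictly monotonically decreasing, so $\tse(\rlx,\relset) = \tse(\rly,\relset)$ if and only if $\RPx_{\numrel} = \RPy_{\numrel}$. This observation also explains the remark in the table caption that the identity holds for any exposure model. Hence I would first assert $\prob(\rlx =_{\tse} \rly) = \prob(\RPx_{\numrel} = \RPy_{\numrel})$.

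Next I would reuse the fact established in the proof of Theorem~\ref{thm:ties:lexirecall}: sampling $\ranking$ uniformly from $\rankings$ induces the uniform distribution on $\RPall$, the set of size-$\numrel$ subsets of $[1\isep\numdocs]$, each occurring with probability $\binom{\numdocs}{\numrel}^{-1}$. So I only need the marginal law of the maximum element $\RPx_{\numrel}$. For a fixed value $i$, the event $\RPx_{\numrel} = i$ holds exactly when $i$ belongs to the sampled subset and the remaining $\numrel-1$ elements are drawn from $\{1,\ldots,i-1\}$; there are $\binom{i-1}{\numrel-1}$ such subsets, and this count is nonzero precisely for $i \in \{\numrel,\ldots,\numdocs\}$ (the smallest possible last position is $\numrel$, attained when all relevant items occupy the top $\numrel$ ranks). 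Thus $\prob(\RPx_{\numrel} = i) = \binom{i-1}{\numrel-1}\binom{\numdocs}{\numrel}^{-1}$.

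Finally, since $\rlx$ and $\rly$ are sampled independently, I would condition on the common value $i$ of the last relevant position and sum:
\[
\prob(\RPx_{\numrel} = \RPy_{\numrel}) = \sum_{i=\numrel}^{\numdocs} \prob(\RPx_{\numrel} = i)\,\prob(\RPy_{\numrel} = i) = \binom{\numdocs}{\numrel}^{-2}\sum_{i=\numrel}^{\numdocs}\binom{i-1}{\numrel-1}^{2},
\]
which is the claimed identity. There is no substantive obstacle here; the only two points that require care are the equivalence of a $\tse$-tie with an $\RPx_{\numrel}$-tie (which rests on strict monotonicity of $\exposure$, and which is what makes the result exposure-model independent) and pinning down the lower summation limit $\numrel$ from the support of this maximum-order statistic.
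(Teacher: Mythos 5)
Your proposal is correct and follows essentially the same route as the paper: both reduce a $\tse$-tie to the event $\RPx_{\numrel}=\RPy_{\numrel}$, compute $\prob(\RPx_{\numrel}=i)=\binom{i-1}{\numrel-1}\binom{\numdocs}{\numrel}^{-1}$ from the uniform distribution over size-$\numrel$ subsets, and sum the squares over $i$ from $\numrel$ to $\numdocs$. Your explicit justification that strict monotonicity of $\exposure$ makes the tie condition exposure-model independent is a small but welcome addition that the paper leaves implicit.
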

\begin{proof}
First, we will compute the probability of ranking $\rlx$ where the position of the last relevant item is $i$,
\begin{align*}
\prob(\RPx_\numrel=i)
&=\frac{{i-1\choose \numrel-1}}{{\numdocs\choose\numrel}}
\end{align*}
We can use this to compute the probability of a tie,
\begin{align*}
\prob(\rlx=_{\tse}\rly)&=\prob(\RPx_{\numrel}=\RPy_{\numrel})\\
&=\sum_{\RPx,\RPy\in\RPall}\prob(\RPx)\prob(\RPy)\ident(\RPx_\numrel=\RPy_\numrel)\\
&=\sum_{i=\numrel}^\numdocs\prob(\RPx_\numrel=i)^2\\
&=\sum_{i=\numrel}^\numdocs\frac{{i-1\choose \numrel-1}^2}{{\numdocs\choose\numrel}^2}
\end{align*}

\end{proof}

\begin{theorem}
\label{thm:ties:rk}
\begin{align*}
\probthm(\rlx=_{\recall_k}\rly)&={\numdocs\choose\numrel}^{-2}\sum_{i=0}^\numrel{{k\choose i}^2{\numdocs-k \choose \numrel-i}^2}
\end{align*}
\end{theorem}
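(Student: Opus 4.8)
The plan is to mirror the arguments used for Theorems~\ref{thm:ties:lexirecall} and~\ref{thm:ties:tse}, since $\recall_k$, like $\tse$, depends on a complete ranking $\rlx$ only through a coarse summary of its relevant-position vector $\RPx$. Concretely, from the exposure and normalization functions given for $\recall_k$ we have $\recall_k(\rlx,\relset)=\frac{1}{\numrel}\sum_{i=1}^{\numrel}\ident(\RPx_i\le k)$, so writing $N(\rlx)=|\{i:\RPx_i\le k\}|$ for the number of relevant items that $\rlx$ places among its top $k$ positions, two complete rankings satisfy $\rlx=_{\recall_k}\rly$ exactly when $N(\rlx)=N(\rly)$.

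First I would recall, as in the earlier proofs, that drawing $\ranking$ uniformly from $\rankings$ induces the uniform distribution over the ${\numdocs\choose\numrel}$ relevant-position vectors $\RPx\in\RPall$, so $\prob(\RPx)={\numdocs\choose\numrel}^{-1}$. Next I would compute the law of $N(\rlx)$: the number of relevant-position vectors with exactly $i$ entries among the first $k$ positions and the remaining $\numrel-i$ entries among the last $\numdocs-k$ positions is ${k\choose i}{\numdocs-k\choose\numrel-i}$, hence
\begin{align*}
\prob(N(\rlx)=i)&=\frac{{k\choose i}{\numdocs-k\choose\numrel-i}}{{\numdocs\choose\numrel}},
\end{align*}
which is the hypergeometric distribution. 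For values of $i$ for which this overlap is combinatorially impossible one of the binomial factors vanishes, so summing over all $i\in[0\isep\numrel]$ introduces no spurious terms.

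Finally, since $\rlx$ and $\rly$ are independent and identically distributed, I would condition on the two overlap counts:
\begin{align*}
\prob(\rlx=_{\recall_k}\rly)&=\sum_{i=0}^{\numrel}\prob(N(\rlx)=i)\,\prob(N(\rly)=i)=\sum_{i=0}^{\numrel}\prob(N(\rlx)=i)^2\\
&={\numdocs\choose\numrel}^{-2}\sum_{i=0}^{\numrel}{k\choose i}^2{\numdocs-k\choose\numrel-i}^2,
\end{align*}
which is the claimed identity. There is essentially no obstacle here: the one step deserving a sentence of justification is the combinatorial count of position vectors with a prescribed overlap with the top $k$ positions, together with the equiprobability of the $\RPx$ under uniform sampling from $\rankings$ — and both of these facts are already established in Section~\ref{sec:preliminaries} and reused verbatim in the proof of Theorem~\ref{thm:ties:tse}.
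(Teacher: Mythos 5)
Your proposal is correct and follows essentially the same route as the paper: your overlap count $N(\rlx)$ is exactly the paper's $\relat(\RPx,k)$, you derive the same hypergeometric law for it, and you conclude by summing the squared probabilities over $i\in[0\isep\numrel]$. The only (welcome) additions are the explicit observation that a $\recall_k$ tie is equivalent to equality of the overlap counts and the remark that impossible overlaps contribute vanishing binomial factors.
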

\begin{proof}
Let $\relat(\RPx,k)=\sum_{i=1}^\numrel \ident(\RPx_i\leq k)$.
First, we will compute the probability of ranking $\rlx$ where $i$ items are ranked above position $k$,
\begin{align*}
\prob\left(\relat(\RPx,k)=i\right)&=\frac{{k\choose i}{\numdocs-k \choose \numrel-i}}{{\numdocs\choose\numrel}}
\end{align*}
We can use this to compute the probability of a tie,
\begin{align*}
\prob(\rlx=_{\text{TSE}}\rly)&=\prob(\relat(\RPx,k)=\relat(\RPy,k))\\
&=\sum_{\RPx,\RPy\in\RPall}\prob(\RPx)\prob(\RPy)\ident(\relat(\RPx,k)=\relat(\RPy,k))\\
&=\sum_{i=0}^\numrel\prob(\relat(\RPx,k)=i)^2\\
&=\sum_{i=0}^\numrel\frac{{k\choose i}^2{\numdocs-k \choose \numrel-i}^2}{{\numdocs\choose\numrel}^2}
\end{align*}

\end{proof}

\begin{theorem}
\label{thm:ties:rp}
\begin{align*}
\probthm(\rlx=_{\rprecision}\rly)&={\numdocs\choose\numrel}^{-2}\sum_{i=0}^\numrel{{\numrel\choose i}^2{\numdocs-\numrel \choose \numrel-i}^2}
\end{align*}
\end{theorem}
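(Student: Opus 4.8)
The plan is to mirror the argument used for Theorem~\ref{thm:ties:rk}, since $\rprecision$ is simply $\recall_k$ with the cutoff $k$ set to $\numrel$. First I would observe that, from the exposure/normalization decomposition given earlier, $\rprecision(\rlx,\relset) = \sum_{i=1}^{\numrel}\ident(\RPx_i\le\numrel)\cdot\tfrac{1}{\numrel} = \tfrac{1}{\numrel}\relat(\RPx,\numrel)$, where $\relat(\RPx,k)=\sum_{i=1}^\numrel\ident(\RPx_i\le k)$ counts the relevant items falling in the top $k$ positions. Hence $\rlx=_{\rprecision}\rly$ holds if and only if $\relat(\RPx,\numrel)=\relat(\RPy,\numrel)$, reducing the problem to the distribution of this count under a uniformly random permutation.

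Next I would compute, exactly as in the proof of Theorem~\ref{thm:ties:rk}, the probability that a uniformly sampled ranking has precisely $i$ relevant items among its top $\numrel$ positions. Since every size-$\numrel$ subset of $[1\isep\numdocs]$ is equally likely to be the set of relevant positions $\RPx$ (each with probability ${\numdocs\choose\numrel}^{-1}$, as established in the proof of Theorem~\ref{thm:ties:lexirecall}), this count is hypergeometric: we choose $i$ of the $\numrel$ "top" positions to hold relevant items and $\numrel-i$ of the remaining $\numdocs-\numrel$ positions, giving
\begin{align*}
\prob\!\left(\relat(\RPx,\numrel)=i\right)&=\frac{{\numrel\choose i}{\numdocs-\numrel\choose\numrel-i}}{{\numdocs\choose\numrel}}.
\end{align*}

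Finally I would assemble the tie probability by summing over the shared count value $i$, using independence of $\RPx$ and $\RPy$:
\begin{align*}
\prob(\rlx=_{\rprecision}\rly)&=\sum_{i=0}^\numrel\prob\!\left(\relat(\RPx,\numrel)=i\right)^2={\numdocs\choose\numrel}^{-2}\sum_{i=0}^\numrel{\numrel\choose i}^2{\numdocs-\numrel\choose\numrel-i}^2,
\end{align*}
which is the claimed identity. I do not anticipate a genuine obstacle here: the only point requiring care is the opening reduction — namely that the $\rprecision$ value is a strictly increasing function of the integer count $\relat(\RPx,\numrel)$, so that equality of values is equivalent to equality of counts — and after that the computation is the $k=\numrel$ specialization of Theorem~\ref{thm:ties:rk}.
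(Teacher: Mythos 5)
Your proposal is correct and takes essentially the same route as the paper, which proves Theorem~\ref{thm:ties:rp} simply by citing the proof of Theorem~\ref{thm:ties:rk} with $\numrel$ substituted for $k$. Your explicit justification of the opening reduction (that equality of $\rprecision$ values is equivalent to equality of the counts $\relat(\cdot,\numrel)$) is a detail the paper leaves implicit, but it is the same argument.
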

\begin{proof}
The proof follows that of Theorem \ref{thm:ties:rk}, substituting $\numrel$ for $k$.
\end{proof}

\end{document}